\newif\ifbackrefshowonlyfirst
\let\BR@direct@old@hyper@natlinkstart\hyper@natlinkstart
\renewcommand*{\hyper@natlinkstart}{\phantomsection\BR@direct@old@hyper@natlinkstart}
\let\BR@direct@oldBR@citex\BR@citex
\renewcommand*{\BR@citex}{\phantomsection\BR@direct@oldBR@citex}%
\long\def\hyper@page@BR@direct@ref#1#2#3{p. \hyperlink{#3}{#1}}
    \let\backrefxxx\hyper@page@BR@direct@ref
\patchcmd{\Hy@backout}{Doc-Start}{\@currentHref}{}{\errmessage{I can't seem to patch backref}}
\newtheorem{theorem}{Theorem}[section]
\newtheorem{definition}{Definition}[section]
\newtheorem{lemma}[theorem]{Lemma}
\newtheorem{remark}{Remark}[section]
\newcommand{\inner}[2]{ \left\langle #1 \; , \; #2 \right\rangle }
\DeclareMathOperator*{\argmin}{ \arg \min }
\DeclareMathOperator*{\argmax}{ \arg \max }
\newcommand{\intercal}{\mathsf{\scriptscriptstyle{T}}}
\title{A Linearly Convergent Algorithm for Computing the \\ Petz-Augustin Mean}
\author[1]{Chun-Neng~Chu}
\author[1,2]{Wei-Fu~Tseng}
\author[1,2,3]{Yen-Huan Li}
\affil[1]{Department of Computer Science and Information Engineering,\protect\\National Taiwan University}
\affil[2]{Department of Mathematics, National Taiwan University}
\affil[3]{Center for Quantum Science and Engineering, \protect\\ National Taiwan University}
\date{}
\begin{document}

\maketitle

\begin{abstract}
We study the computation of the Petz-Augustin mean of order $\alpha \in (0,1) \cup (1,\infty)$, defined as the minimizer of a weighted sum of $n$ Petz-R\'enyi divergences of order $\alpha$ over the set of $d$-by-$d$ quantum states, where the Petz-R\'enyi divergence is a quantum generalization of the classical R\'enyi divergence.
We propose the first algorithm with a non-asymptotic convergence guarantee for solving this optimization problem. 
The iterates are guaranteed to converge to the Petz-Augustin mean at a linear rate of \( O\left( \lvert 1 - 1/\alpha \rvert^T \right) \) with respect to the Thompson metric for $\alpha\in(1/2,1)\cup(1,\infty)$, where \( T \) denotes the number of iterations.
The algorithm has an initialization time complexity of $O\left(nd^3\right)$ and a per-iteration time complexity of $O\left(nd^2 + d^3\right)$. 

Two applications follow. 
First, we propose the first iterative method with a non-asymptotic convergence guarantee for computing the Petz capacity of order $\alpha\in(1/2,1)$, which generalizes the quantum channel capacity and characterizes the optimal error exponent in classical-quantum channel coding.
Second, we establish that the Petz-Augustin mean
of order $\alpha$, 
when all quantum states commute, is equivalent to the equilibrium prices in Fisher markets with constant elasticity of substitution (CES) utilities of common elasticity $\rho=1-1/\alpha$, and our proposed algorithm can be interpreted as a t\^{a}tonnement dynamic. 
We then extend the proposed algorithm to inhomogeneous Fisher markets, where buyers have different elasticities, and prove that it achieves a faster convergence rate compared to existing t\^{a}tonnement-type algorithms.

\end{abstract}

\section{Introduction}
\label{sec:Intro}
We consider the problem of computing the Petz-Augustin mean $Q_{\star}$ of order $\alpha \in (0, 1) \cup (1, \infty)$, which is defined as \citep{Cheng2022}
\begin{equation}
\label{def:PetzAugMean}
Q_{\star} \in \argmin_{Q \in \mathcal{D}_d} F(Q), \quad F(Q) 
\coloneqq \sum_{j=1}^n w[j] D_\alpha \left( A_j \Vert Q \right).
\end{equation}
Here, $\mathcal{D}_d$ denotes the set of quantum states---Hermitian positive semi-definite matrices with unit trace---in $\mathbb{C}^{d \times d}$. 
The matrices 
$A_j$ also belongs to $\mathcal{D}_d$, and the weights $w[j] > 0$ satisfy $\sum_{j=1}^n w[j] = 1$. 
For any $A\in\mathcal{D}_d$ and any Hermitian positive semi-definite matrix $Q$, the quantity $D_{\alpha} ( A \Vert Q )$ denotes the Petz-R\'enyi divergence of order $\alpha$ \citep{Petz1986}:
\begin{equation}
    \label{def:PetzRenyiDiv}
    D_{\alpha}(A \| Q) 
    \coloneqq  \frac{1}{\alpha - 1} \log \left(\Tr\left[A^{\alpha} Q^{1-\alpha}\right]\right),
\end{equation}
whenever 
the right-hand side
is well-defined, and $D_\alpha ( A \| Q ) \coloneqq + \infty$ otherwise. 
We assume that the matrix $\sum_{j=1}^n A_j$ is full-rank, which ensures that the Petz-Augustin mean $Q_{\star}$ exists \citep[Lemma IV.8]{Mosonyi2021} and is full-rank \citep[Lemma IV.11]{Mosonyi2021}.
If this assumption does not hold, we may project all matrices onto a lower-dimensional subspace.

When all quantum states involved in the optimization problem~\eqref{def:PetzAugMean} commute, referred to as the commuting case, the problem reduces to the optimization problem defining the classical Augustin mean $q_{\star}$
\citep{Augustin1978}: 
\begin{equation}
    \label{def:ClassicalAugMean}
    q_{\star} \in \argmin_{q \in \Delta_d} f(q), \quad f(q) 
    \coloneqq \sum_{j=1}^n w[j] \frac{1}{\alpha-1}\log\left(\sum_{i=1}^d a_j[i]^{\alpha}q[i]^{1-\alpha}\right),
\end{equation}
for some $a_j \in \Delta_d$, where $a_j [i]$ denotes the $i^{\text{th}}$ coordinate of $a_j$, and $\Delta_d$ denotes the probability simplex in $\mathbb{R}^d$.

\paragraph{Motivation.}
The Petz-Augustin mean arises in a variety of contexts.
In quantum information theory, the minimum value of the optimization problem~\eqref{def:PetzAugMean} generalizes the quantum mutual information and plays a key role in defining the Petz capacity, which characterizes the optimal error exponent in classical-quantum channel coding \citep{Dalai2013,Dalai2014,Cheng2019,Renes2025}. 
Beyond quantum information theory, the Petz-Augustin mean, particularly in the commuting case, has applications in other domains. 
For example, as established in Section \ref{sec:Fisher}, the classical Augustin mean corresponds to the equilibrium prices in Fisher markets, characterizing the steady state of price updates \citep{walras2014}. 
It has also attracted attention in the machine learning community, where the optimization problem~\eqref{def:ClassicalAugMean} has arisen in tasks such as probabilistic model aggregation \citep{Storkey2012,Storkey2015} and multiview clustering \citep{Joshi2016}.

However, despite its importance, a significant gap remains: while the computation of the classical Augustin mean has been studied in the existing literature \citep{Augustin1978,Tsai2024,Wang2024,Kamatsuka2024}, it remains unclear whether these results extend to the non-commutative setting.
Meanwhile, recent work in quantum information theory has primarily focused on computing quantities involving quantum relative entropy \citep{Fawzi2023,He2024b,Brown2024,Frenkel2023,He2024b,Huang2024,Jencova2024,Kossmann2024a,Kossmann2024b,Hayashi2024,He2024a}, while relatively less attention has been given to the computation of quantities involving quantum R\'{e}nyi divergences \citep{You2022,He2025,Burri2025}.
In particular, to the best of our knowledge, no existing algorithm provides a non-asymptotic convergence guarantee for computing the Petz-Augustin mean.

Designing such an algorithm involves several challenges.
To begin with, the Petz-Augustin mean does not admit a closed-form expression, which necessitates an iterative method for solving the optimization problem~\eqref{def:PetzAugMean}.
Furthermore, although the optimization problem~\eqref{def:PetzAugMean} is convex for orders $\alpha \in (0, 1) \cup (1, 2]$ \citep{Mosonyi2017}, the gradient and Hessian of the objective function $F$ are unbounded \citep[Propositions 3.1 and 3.2]{You2022}, 
violating standard assumptions in the convex optimization literature. 
As a result, standard first-order optimization algorithms, such as gradient descent and mirror descent, and their theoretical guarantees do not directly apply.
Although second-order methods \citep{Nesterov2018a}, such as Newton's method, may be applicable, their per-iteration time complexity grows rapidly with the dimension $d$ of the quantum states.
As a result, these methods do not scale well with the number of qubits, making them prohibitive for quantum information applications.

\paragraph{Our Contributions.}
\begin{itemize}
    \item In Section \ref{sec:RelationWithCP}, we identify a similarity between the optimization problem~\eqref{def:PetzAugMean} and the one defining the $\ell_p$-Lewis weights.
    Inspired by this similarity and an existing algorithm for computing the $\ell_p$-Lewis weights \citep{Cohen2015}, we propose the following iteration rule for computing the Petz-Augustin mean \eqref{def:PetzAugMean}, formally defined in \eqref{alg:ProposedIter}:
    \begin{equation}
        \label{eq:IntroAugMeanIter}
        Q_{t+1}
        =\left(\sum_{j=1}^n w[j]\frac{A_j^{\alpha}}{\Tr\left[A_j^{\alpha}Q_t^{1-\alpha}\right]}\right)^{1/\alpha},
    \end{equation}
    where $Q_t$ denotes the iterate at step $t$.
    \item In Section \ref{sec:alg}, we prove that the proposed iteration rule \eqref{eq:IntroAugMeanIter} converges at a linear rate of $O\left(\abs{1-1/\alpha}^T\right)$ for all $\alpha \in (1/2,1)\cup(1,\infty)$, making it the first algorithm with a non-asymptotic convergence guarantee for computing the Petz-Augustin mean.
    Moreover, our iteration rule \eqref{eq:IntroAugMeanIter} is computationally cheaper than standard first-order methods.
    In particular, the time complexity of computing a gradient $\nabla F(Q)$ in \eqref{def:PetzAugMean} is $O\left(nd^2+d^4\right)$ \citep{You2022}, whereas the per-iteration time complexity of our algorithm is only $O\left(n d^2+d^3\right)$.
    \item In Section \ref{sec:Capacity}, we apply the iteration rule \eqref{eq:IntroAugMeanIter} to develop an iterative method for computing the Petz capacity of order $\alpha\in(1/2,1)$, which converges at a rate of $O\left(\log(n)/T\right)$. Notably, this is the first algorithm for computing the Petz capacity with a non-asymptotic convergence guarantee.
    \item In Section \ref{sec:Fisher}, we establish that the classical Augustin mean \( q_{\star} \) is equivalent to the equilibrium prices in Fisher markets with constant elasticity of substitution (CES) utilities of common elasticity \( \rho = 1 - 1/\alpha \), and we interpret the iteration rule \eqref{eq:IntroAugMeanIter} as a t\^{a}tonnement dynamic in this setting.  
    We then consider a more general setting in which each \( j^{\text{th}} \) buyer has a CES utility with heterogeneous elasticity \( \rho_j \in (0,1) \) (the weak gross substitutes (WGS) regime).  
    We adapt the iteration rule \eqref{eq:IntroAugMeanIter} to handle asynchronous price updates, assuming that each \( i^{\text{th}} \) seller knows only an upper bound \( \hat{\rho}_i \in \left[\max_j \rho_j, 1\right) \).  
    In this case, the prices are guaranteed to converge to the equilibrium at a rate of \( O\left(\hat{\rho}^{T}\right) \), where \( \hat{\rho} = \max_i \hat{\rho}_i \) and $T$ denotes the number of epochs, with each seller updating their price at least once per epoch.
    This yields a faster convergence rate, in terms of the distance between the prices and the equilibrium, than all existing t\^{a}tonnement-type algorithms.
\end{itemize}

\paragraph{Key Ideas.}
\begin{itemize}
    \item To analyze the iteration rule \eqref{eq:IntroAugMeanIter} for computing the Petz-Augustin mean, 
    we
    prove that the powered iterates $Q_t^{1-\alpha}$ are contractive with respect to the Thompson metric, and that their limit point coincides with the powered Petz-Augustin mean $Q_{\star}^{1-\alpha}$.
    \item 
    To develop an algorithm with a non-asymptotic convergence guarantee for computing the Petz capacity, formulated as a convex optimization problem $\min_{w} g(w)$, we show that the gradient $\nabla g(w)$, although lacking a closed-form expression, can be computed via the iteration rule \eqref{eq:IntroAugMeanIter}. 
    This enables the use of gradient-based methods.  
    To identify a suitable gradient-based method, 
    we show that \( g(w) \) is 1-smooth relative to the negative Shannon entropy \citep{Lu2018}.
    This property is sufficient to ensure the convergence of entropic mirror descent, which is our method of choice, at a rate of \( O\left(\log(n)/T\right) \) \citep{Lu2018,He2024a,Birnbaum2011}.
    \item To adapt the iteration rule \eqref{eq:IntroAugMeanIter} for computing equilibrium prices in Fisher markets, 
    we 
    first express it, when applied to Fisher markets with a common elasticity \( \rho \) in buyers' utilities, as follows:
    \begin{equation}
        \label{def:ProposedClassicalIterMultiGrad}
        p_{t+1}[i]
        = p_t[i] \cdot x(p_t)[i]^{1 - \rho},
    \end{equation}
    where \( p_t[i] \) and \( x(p_t)[i] \) denote the price and total demand for the \( i^{\text{th}} \) good at round \( t \), respectively.  
    Then, we interpret the exponent \( 1 - \rho \) in \eqref{def:ProposedClassicalIterMultiGrad} as a step size and adjust it appropriately on a per-coordinate basis.
\end{itemize}


\section{Related Work}
\subsection{Augustin Mean}
\paragraph{Commuting Case.}
For computing the classical Augustin mean \eqref{def:ClassicalAugMean}, the commuting case of \eqref{def:PetzAugMean}, \citet{Augustin1978} proposed a fixed-point iteration, which we 
refer to as the Augustin iteration.
It is proved to converge asymptotically for $\alpha\in(0,1)$ by 
\citet{Karakos2008} and 
\citet{Nakiboglu2019}, and at a linear rate with respect to the Hilbert projective metric for $\alpha\in(1/2,1)\cup(1,3/2)$ by \citet{Tsai2024}.

Our proposed method can be viewed as a generalization of the Augustin iteration.
In particular, the Augustin iteration for computing the optimization problem~\eqref{def:ClassicalAugMean} can be written as
\begin{equation}
    \label{eq:AugIter}
    q_{t+1}[i]
    = q_t[i]\left(-\nabla f(q_t)[i]\right),\quad \forall i,
\end{equation}
where $q_t$ denotes the $t^{\text{th}}$ iterate of the algorithm, and $q_t[i]$ denotes its $i^{\text{th}}$ coordinate.
On the other hand, the proposed iteration rule \eqref{eq:IntroAugMeanIter} can be expressed as 
\begin{align}
     q_{t+1}[i]
    = q_t[i]\left(-\nabla f(q_t)[i]\right)^{1/\alpha},\quad \forall i.
\end{align} 
Thus, our method can be viewed as a generalization of the Augustin iteration with an additional parameter $1/\alpha$ in the exponent, which is analogous to the step size in first-order optimization methods. 

For algorithms different from the Augustin iteration, an alternating minimization method \citep{Kamatsuka2024} converges at a rate of $O(1/T)$ for $\alpha\in(1,\infty)$ \citep{Tsai2024}, where $T$ denotes the number of iterations. 
Riemannian gradient descent with respect to the Poincar{\'e} metric 
also converges at a rate of $O(1/T)$ for all $\alpha\in(0,1)\cup(1,\infty)$ \citep{Wang2024}.

\paragraph{Quantum Case.}
While the computation of the classical Augustin mean \eqref{def:ClassicalAugMean} has been well studied, efficient algorithms for computing the Petz–Augustin mean \eqref{def:PetzAugMean} are still lacking.
To the best of our knowledge, there is currently no 
first-order method
that guarantees non-asymptotic convergence for any $\alpha\in(0,1)\cup(1,\infty)$.

Since the objective function $F$ in \eqref{def:PetzAugMean} has a locally bounded gradient, entropic mirror descent with Armijo line search \citep{Li2019a} or with the Polyak step size \citep{You2022} is applicable for $\alpha\in(0,1)\cup\left(1,2\right]$. 
However, these two algorithms only guarantee asymptotic convergence.

Our proposed iteration rule \eqref{eq:IntroAugMeanIter} is inspired by an algorithm proposed by \citet{Cohen2015} for computing the $\ell_p$-Lewis weights, which has been proven to converge linearly with respect to the Thompson metric.
The $\ell_p$-Lewis weights, along with their variants, have applications in $\ell_1$-regression \citep{Durfee2018,Parulekar2021}, log-concave sampling \citep{Kook2024,Jiang2024}, and linear programming \citep{Lee2020}.
Despite the conceptual connection, our results do not immediately follow from those of \citet{Cohen2015}, as discussed in Section \ref{sec:RelationWithCP}.

There is an overlap between our proposed iteration rule \eqref{eq:IntroAugMeanIter} and the one proposed by \citet{Cheng2024}.
In particular, \citet{Cheng2024} consider the following optimization problem, which they show to be equivalent to \eqref{def:PetzAugMean}:
\begin{equation}
    \label{def:AugDual}
    v_{\star}\in\argmax_{v\in\mathbb{R}^n}H(v),
    \quad H(v)
    \coloneqq \frac{1-\alpha}{\alpha}\sum_{j=1}^n w[j]v[j]-\log\left(\Tr\left[\left(\sum_{j=1}^n w[j]\exp((1-\alpha)v[j])A_j^{\alpha}\right)^{1/\alpha}\right]\right),
\end{equation}
where the matrices $A_j$, weights $w[j]$, and order $\alpha$ are the same as in \eqref{def:PetzAugMean}, and $v[j]$ denotes the $j^{\text{th}}$ coordinate of $v$.
They propose updating their iterates $v_t$ according to
\begin{equation}
    \label{alg:ChengDualIter}
    v_{t+1}[j]
    =D_{\alpha}\left(A_j \| \mu(v_t)\right),
    \quad \mu(v)
    \coloneqq \left(\sum_{j=1}^n w[j]\exp((1-\alpha)v[j])A_j^{\alpha}\right)^{1/\alpha}.
\end{equation}
By initializing our iteration rule \eqref{eq:IntroAugMeanIter} with $Q_1=\mu(v_1)$, the resulting iterates $Q_t$ coincide with $\mu(v_t)$ for all $t\in\mathbb{N}$.
Despite the correspondence between the algorithms, \citet{Cheng2024} only establish asymptotic convergence for $\alpha\in(1,\infty)$.
In contrast, our derivation and analysis adopt a different perspective, and we prove that the algorithm converges at a rate of $O\left(\left|1-1/\alpha\right|^T\right)$ for $\alpha\in(1/2,1)\cup(1,\infty)$, thereby not only extending the range of $\alpha$ but also providing a non-asymptotic convergence guarantee. 

\subsection{Petz Capacity}
The Petz capacity generalizes the quantum channel capacity \citep{Mosonyi2017}. However, unlike the quantum channel capacity, which can be rigorously computed using the quantum Blahut-Arimoto algorithm \citep{Hayashi2024,He2024a,Li2019b,Nagaoka1998,Ramakrishnan2021}, there is no algorithm known to compute the Petz capacity with any convergence guarantee.
Although several alternating minimization methods are applicable when all quantum states commute \citep{Arimoto1976,Kamatsuka2024b,Kamatsuka2024c,Jitsumatsu2020}, it remains unclear how to extend these approaches to the non-commutative setting.
To this end, inspired by \citet{He2024a}, 
who interpreted the quantum Blahut-Arimoto algorithm as entropic mirror descent, 
we show in Section \ref{sec:Capacity} that a similar approach is applicable to computing the Petz capacity, using our proposed iteration rule \eqref{eq:IntroAugMeanIter} for computing the Petz-Augustin mean as a subroutine.
This yields the first algorithm for computing the Petz capacity with a non-asymptotic convergence guarantee.

\subsection{Equilibrium Prices in Fisher Markets}
Equilibrium prices in Fisher markets can be computed using various algorithms, some of which also admit interpretations as market dynamics, such as t\^{a}tonnement dynamics \citep{Codenotti2005,Cole2008,Cole2010,Goktas2023,Nan2025,Cheung2020,Shikhman2018,Cheung2012,Bei2019} and proportional response dynamics \citep{Wu2007,Birnbaum2011,Zhang2011,Cheung2018,Kolumbus2023,Branzei2021,Cheung2025}.
Notably, due to the correspondence between the classical Augustin mean and the equilibrium prices in Fisher markets, clarified in Section \ref{sec:Fisher}, it can be shown that the Augustin iteration \citep{Augustin1978}, originally proposed in the context of information theory, is equivalent to the algorithm of \citet{Cheung2018} when applied to Fisher markets with CES utilities of common elasticity $\rho$ (see Appendix \ref{appendix:EqAugTatonnement}). 
In particular, the Augustin iteration can be interpreted as both a t\^{a}tonnement dynamic and a proportional response dynamic and is proved to converge linearly for $\rho\in(-\infty,0)$ by \citet{Cheung2018}.
Recently, \citet{Tsai2024} show that the Augustin iteration converges linearly for  $\rho\in(-1,1/3)$, thereby complementing the results of \citet{Cheung2018}.

Despite this complementarity, a gap remains. 
Existing algorithms for computing the Augustin mean achieve only sublinear convergence rates when applied to Fisher markets with $\rho\in(0,1)$, which corresponds to the WGS regime. 
In contrast, several t\^{a}tonnement-type algorithms are known to achieve linear convergence in this setting \citep{Cole2008,Cheung2020,Bei2019}.
We address this gap by proving that our proposed iteration rule for computing the Augustin mean---which can also be interpreted as a t\^{a}tonnement dynamic---converges linearly for $\rho\in(0,1)$.
Moreover, we show that this rule can be adapted to accommodate asynchronous price updates in inhomogeneous Fisher markets, where buyers have CES utilities with different elasticities in the WGS regime. 
As shown in Section \ref{subsubsec:tatonnementcmp}, the adapted rule converges faster than all existing t\^{a}tonnement-type algorithms in terms of the distance between prices and the equilibrium.

Beyond t\^{a}tonnement and proportional response dynamics, \citet{Dvijotham2022} proposed an iteration rule that also guarantees linear convergence to equilibrium prices in Fisher markets with WGS CES utilities. 
Notably, their iteration rule allows asynchronous price updates, and our convergence analysis in Theorem \ref{thm:FisherConv} adopts the proof technique developed by \citet{Dvijotham2022} for analyzing such asynchronous settings.
Despite this analytical connection, their iteration rule differs from ours. 
Specifically, their approach can be interpreted as a best response dynamic \citep{Milgrom1991,Kreps1990}, which differs from a tâtonnement dynamic in that a seller in a best response dynamic requires knowledge of the prices of all goods and must make assumptions about other sellers’ behavior in order to determine its “best” price update.
In contrast, in t\^{a}tonnement dynamics, each seller updates the price of their good based only on the demand for that good and its previous price, without making any assumptions about other sellers' strategies.
A more detailed comparison with the method of \citet{Dvijotham2022}, focusing on the computational aspect, is provided in Section \ref{subsubsec:BestResponseCmp}.

\section{Preliminaries}
\label{sec:problem}
\subsection{Notations}
We denote the sets of vectors in $\mathbb{R}^d$ with nonnegative entries and
strictly 
positive entries by $\mathbb{R}_{+}^d$ and $\mathbb{R}_{++}^d$, respectively.
We denote by $\odot$ the coordinate-wise product.
We denote the set of all Hermitian matrices in $\mathbb{C}^{d\times d}$ by $\mathcal{B}_d$.
We denote by $\mathbf{1}_d$ the all-ones vector in $\mathbb{R}^d$.
We denote by $I_d$ the identity matrix in $\mathbb{C}^{d\times d}$.


We define $A[i]$ to be the $i^{\text{th}}$ row of a matrix $A$ and $v[i]$ to be the $i^{\text{th}}$ coordinate of a vector $v$. 
For any vector $v$, we denote by $\mathrm{Diag}(v)$ the diagonal matrix whose $i$-th diagonal element is $v[i]$. 

For any function \( f: \mathbb{R} \to \mathbb{R} \) and 
vector \( v \in \mathbb{R}^d \), we define \( f(v) \) 
as 
the 
\( d \)-dimensional vector where \( f(v)[i] = f(v[i]) \). 
Similarly, for such a function \( f \) and \( Q \in \mathcal{B}_d \), we define \( f(Q) \) as \( \sum_{i=1}^d f(\lambda_i)  P_i \), where \( Q = \sum_{i=1}^d \lambda_i   P_i \) is the eigendecomposition of \( Q \).


For any \( Q_1, Q_2 \in \mathcal{B}_d \), we write \( Q_1 \leq Q_2 \) if and only if \( Q_2 - Q_1 \) is positive semi-definite. Similarly, we write \( Q_1 < Q_2 \) if and only if \( Q_2 - Q_1 \) is positive definite.
For any $q_1,q_2\in\mathbb{R}^{d}$, we write $q_1\leq q_2$ if and only if $\mathrm{Diag}(q_1)\leq\mathrm{Diag}(q_2)$.
Similarly, we write $q_1<q_2$ if and only if $\mathrm{Diag}(q_1)<\mathrm{Diag}(q_2)$.
We define $\mathcal{B}_{d,+}$ and $\mathcal{B}_{d,++}$ as the nonnegative cone and the positive cone in \( \mathcal{B}_d \), respectively. 
Specifically,
\begin{equation} 
\mathcal{B}_{d,+} \coloneqq \Set{ Q \in \mathcal{B}_d \mid Q \geq 0},
\end{equation} 
and
\begin{equation} 
\mathcal{B}_{d,++} \coloneqq \Set{ Q \in \mathcal{B}_d \mid Q > 0}.
\end{equation}
For any $Q\in\mathcal{B}_{d,+}$, we denote its $i^{\text{th}}$ eigenvalue, ordered in decreasing order, by $\lambda_i(Q)$. 

For any convex differentiable function $h:\mathbb{R}^n\mapsto\mathbb{R}$, we define the Bregman divergence of $h$ as
\begin{equation}
    B_h(v\Vert u)
    \coloneqq h(v)-h(u)-\inner{\nabla h(u)}{v-u},\quad\forall u\in\mathrm{dom}(\nabla h),v\in\mathrm{dom}(h).
\end{equation}


\subsection{Thompson metric}
The Thompson metric is a useful tool to study the behavior of dynamical systems \citep{Krause2015,Lemmens2012,Nussbaum1988,Thompson1963}. 
It 
can be defined on the interiors of any normal cone in real Banach spaces.
Here, we are only interested in the following definition of the Thompson metric, specialized for $\mathcal{B}_{d,++}$ and $\mathbb{R}_{++}^d$.
\begin{definition}[{\citep{Thompson1963}}]
    \label{eq:Thompson}
    The Thompson metric between any $U,V\in\mathcal{B}_{d,++}$ 
    is given by 
    \begin{equation}
        d_{\mathrm{T}}(V,U)\coloneqq \inf 
        \Set{ r\geq0 \mid \exp(-r) V\leq U \leq \exp(r)V }.
    \end{equation}
    Similarly, the Thompson metric between any $u,v\in\mathbb{R}_{++}^d$ is given by 
    \begin{equation}
        d_{\mathrm{T}}(v,u)\coloneqq \inf 
        \Set{ r\geq0 \mid \exp(-r) v\leq u \leq \exp(r)v }=\max_{i}\log\left(\max\Set{\frac{v[i]}{u[i]},\frac{u[i]}{v[i]}}\right),
    \end{equation}
    where, for convenience, we use the same notation $d_{\mathrm{T}}$ to denote the Thompson metric on both $\mathcal{B}_{d,++}$ and $\mathbb{R}_{++}^d$.
\end{definition}


We
will 
analyze the convergence of our proposed iteration rule by proving a contractive property of the iterates with respect to the Thompson metric (see Lemma \ref{ineq:Contract}, Theorem \ref{thm:PetzConv}, and Theorem \ref{thm:FisherConv}). 
These analyses rely on the following lemmas concerning the Thompson metric, specialized for $\mathcal{B}_{d,++}$ and $\mathbb{R}_{++}^d$.

\begin{lemma}[{\citep[Lemma 3]{Thompson1963}}]
    \label{eq:ThompIsMetric}
    The Thompson metric is a well-defined metric on both $\mathcal{B}_{d,++}$ and $\mathbb{R}_{++}^d$, each of which is complete with respect to this metric.
\end{lemma}
\begin{lemma}[{\citep[Proposition 1.5]{Nussbaum1988}}]
    \label{ineq:NegCurveProp}
    For any $U,V\in\mathcal{B}_{d,++}$ and $u,v\in\mathbb{R}_{++}^d$, 
    we have
    \begin{align}
        d_{\mathrm{T}}(U^r,V^r)\leq |r|d_{\mathrm{T}}(U,V),\quad\forall r\in[-1,1],
    \end{align}
    and
    \begin{align}
        d_{\mathrm{T}}(u^r,v^r)
        = |r|d_{\mathrm{T}}(u,v),\quad\forall r\in\mathbb{R},
    \end{align}
    respectively.
\end{lemma}

\begin{lemma}
    \label{ineq:ThompsonLogHomo}
    For any $U,V\in\mathcal{B}_{d,++}$ or $U,V\in\mathbb{R}_{++}^d$, and any $r>0$, we have
    \begin{align}
        d_{\mathrm{T}}\left(U,rV\right)
        \leq d_{\mathrm{T}}\left(U,V\right)+\abs{\log(r)}.
    \end{align}
\end{lemma}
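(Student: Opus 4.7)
The plan is to reduce the inequality to the triangle inequality for the Thompson metric, combined with a direct computation of $d_{\mathrm{T}}(V,rV)$. Since Lemma \ref{eq:ThompIsMetric} already guarantees that $d_{\mathrm{T}}$ is a genuine metric on $\mathcal{B}(\mathcal{H}_B)_{++}$, applying the triangle inequality yields
\begin{equation}
    d_{\mathrm{T}}(U,rV) \leq d_{\mathrm{T}}(U,V) + d_{\mathrm{T}}(V,rV),
\end{equation}
so it suffices to show $d_{\mathrm{T}}(V,rV) = |\log r|$.

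For this remaining identity I would just unfold the definition. We need the infimum of those $s\geq 0$ with $e^{-s}V \leq rV \leq e^{s}V$, i.e.\ $e^{-s}\leq r \leq e^{s}$, which is exactly $s\geq |\log r|$. Thus $d_{\mathrm{T}}(V,rV)=|\log r|$, and the claimed bound follows immediately.

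Alternatively, one could give a single-shot proof by taking any admissible $s$ with $e^{-s}V \leq U \leq e^{s}V$ and noting that then $r e^{-(s+|\log r|)} V \leq e^{-s}V \leq U \leq e^{s}V \leq r e^{s+|\log r|} V$, so $s+|\log r|$ is admissible for $d_{\mathrm{T}}(U,rV)$; taking the infimum over $s$ gives the result. I do not expect any real obstacle here — the only subtlety is being careful that the bound $s+|\log r|$ works uniformly for both signs of $\log r$, which is precisely why the absolute value appears.
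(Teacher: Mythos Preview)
Your proposal is correct. Your ``single-shot'' alternative is exactly the paper's proof: the paper starts from the defining inequality $e^{-d_{\mathrm{T}}(V,U)}V \leq U \leq e^{d_{\mathrm{T}}(V,U)}V$, multiplies through by $r$, and pads by $|\log r|$ on each side to obtain an admissible bound for $d_{\mathrm{T}}(U,rV)$. Your primary approach via the triangle inequality plus the explicit computation $d_{\mathrm{T}}(V,rV)=|\log r|$ is a slightly different packaging of the same content; it is arguably cleaner since it isolates the scalar computation and appeals to Lemma~\ref{eq:ThompIsMetric} rather than redoing the chain of inequalities by hand, though it does rely on one extra cited fact (that $d_{\mathrm{T}}$ satisfies the triangle inequality).
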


\begin{proof}
    By Definition \ref{eq:Thompson}, we write
    \begin{align}
        \exp\left(-d_{\mathrm{T}}\left(V,U\right)\right)V
        \leq U
        \leq \exp\left(d_{\mathrm{T}}\left(V,U\right)\right)V.
    \end{align}
    It follows that
    \begin{align}
        \exp\left(-d_{\mathrm{T}}\left(V,U\right)-\abs{\log(r)}\right)V
        &\leq\exp\left(-d_{\mathrm{T}}\left(V,U\right)+\log(r)\right)V\nonumber\\
        &\leq rU\nonumber\\
        &\leq \exp\left(d_{\mathrm{T}}\left(V,U\right)+\log(r)\right)V\nonumber\\
        &\leq \exp\left(d_{\mathrm{T}}\left(V,U\right)+\abs{\log(r)}\right)V.\nonumber
    \end{align}
    By Definition \ref{eq:Thompson}, the inequalities above imply that 
    \begin{align}
        d_{\mathrm{T}}\left(U,rV\right)
        \leq d_{\mathrm{T}}\left(V,U\right)+\abs{\log(r)}.
    \end{align}
    This concludes the proof.
\end{proof}
 
\section{Connection with $\ell_p$-Lewis Weights}
\label{sec:RelationWithCP}
Our proposed iteration rule \eqref{eq:IntroAugMeanIter} is inspired by the algorithm of \citet{Cohen2015} for computing the $\ell_p$-Lewis weights.
In Section \ref{subsec:LewisIntro}, we introduce the problem formulation for computing the $\ell_p$-Lewis weights and review the algorithm proposed by \citet{Cohen2015}.
In Section \ref{subsec:LewisChallenge}, we discuss the challenges involved in generalizing this approach to compute the Petz-Augustin mean.

\subsection{$\ell_p$-Lewis Weights and the Algorithm of \citet{Cohen2015}}
\label{subsec:LewisIntro}
The $\ell_p$-Lewis weights are given by \citep{Lee2020}: 
\begin{equation}
    \label{eq:LewisDef}
    u_{\star}\in\argmin_{u\in m \Delta_{d}}\frac{-1}{1-\frac{2}{p}}\log\det\left(M^{\intercal}\mathrm{Diag}\left(u^{1-2/p}\right)M\right),
\end{equation}
for $p\in(0,\infty)$, where $M$ is a $d$-by-$m$ real matrix and $M^\intercal$ denotes the transpose of $M$.
\citet{Cohen2015} proposed the following iteration rule: 
\begin{equation}
    \label{eq:CPIterDef}
    u_{t+1}=T_{\mathrm{CP}}\left(u_t\right),
\end{equation}
where
\begin{align}
    T_{\mathrm{CP}}\left(u_t\right)[i]\coloneqq\left(M[i]^{\intercal}\left(M^{\intercal}\mathrm{Diag}\left(u_t^{1-2/p}\right)M\right)^{-1}M[i]\right)^{p/2},
\end{align}
for each $i\in\set{1,2,\dots,d}$.
They proved that 
the 
iterates 
$u_t$
converge linearly to 
$u_{\star}$
with respect to the Thompson metric for $p\in(0,4)$.

To demonstrate the similarity between their proposed iteration rule and ours \eqref{eq:IntroAugMeanIter}, we note that when $m=1$, the matrix $M$ reduces to a vector in $\mathbb{R}^d$, and the $\ell_p$-Lewis weights can be written as 
\begin{equation}
    \label{eq:ReWrittenLewis}
    u^{\star}\in\argmin_{u\in \Delta_{d}}\frac{-1}{1-\frac{2}{p}}\log\left(\Tr\left[\mathrm{Diag}\left(a\right)^{2/p}\mathrm{Diag}\left(u\right)^{1-2/p}\right]\right),
\end{equation}
where $a=\abs{M}^p$.
Let $\alpha=2/p$. 
Then, the problem corresponds to a special case of the optimization problem~\eqref{def:PetzAugMean} where all 
matrices commute.
In this case, 
the iteration rule proposed by \citet{Cohen2015}
can be expressed as
\begin{equation}
    \label{eq:RawCohenIteration}
    \mathrm{Diag}\left(u_{t+1}\right)=\left(\frac{\mathrm{Diag}\left(a\right)^{\alpha}}{\Tr\left[\mathrm{Diag}\left(a\right)^{\alpha}\mathrm{Diag}\left(u_t\right)^{1-\alpha}\right]}\right)^{1/\alpha} , 
\end{equation}
where we deliberately write the iterates as matrices to illustrate the similarity with our iteration rule \eqref{eq:IntroAugMeanIter}. 

\subsection{Challenges in Generalizing the Algorithm of \citet{Cohen2015}}
\label{subsec:LewisChallenge}
Despite the similarity in algorithms, our results cannot be trivially derived from the work of \citet{Cohen2015}.
Specifically, there are three main challenges in adapting their proposed algorithm to compute the Petz-Augustin mean:
\begin{itemize}
    \item The optimization problem~\eqref{def:PetzAugMean} includes an additional weighted-sum term, $\sum_{j=1}^n w[j]\left(\cdot\right)$.
    Consequently, even when all the matrices commute, the computation of the Petz–Augustin mean cannot be reduced to that of the $\ell_p$-Lewis weights.
    \item \citet{Cohen2015} established the contractive property of their proposed algorithm with respect to the Thompson metric by proving
    \begin{align}
        \label{ineq:CohenIterLewisContractive}
        \log\left(\max\Set{\frac{T_{\mathrm{CP}}(u)[i]}{T_{\mathrm{CP}}(v)[i]}, \frac{T_{\mathrm{CP}}(v)[i]}{T_{\mathrm{CP}}(u)[i]}}\right)\leq\left|1-\frac{p}{2}\right|\max_{1\leq j\leq d}\log\left(\max\Set{\frac{u[j]}{v[j]}, \frac{v[j]}{u[j]}}\right),\quad
    \end{align}
    for each $1 \leq i \leq d$.
    This can be viewed as comparing the eigenvalues of the commuting matrices $\mathrm{Diag}\left(T_{\mathrm{CP}}(u)\right)$ and $\mathrm{Diag}\left(T_{\mathrm{CP}}(v)\right)$.
    However, in our case, the iterates $Q_t$ and the Petz-Augustin mean $Q_{\star}$, as defined in Theorem \ref{thm:PetzConv}, may not commute, making it infeasible to compare their eigenvalues within matched eigenspaces.
    \item A natural extension of the contractive property \eqref{ineq:CohenIterLewisContractive} to the non-commutative setting fails.
    In particular, our proposed iteration rule \eqref{eq:IntroAugMeanIter} involves repeatedly applying the operator
    \begin{equation}
        \label{eq:WrongIterMap}
        \hat{T}_{F}(Q)
        \coloneqq \left(\sum_{j=1}^n w[j]\frac{A_j^{\alpha}}{\Tr\left[A_j^{\alpha}Q^{1-\alpha}\right]}\right)^{1/\alpha}
    \end{equation}
    to the iterates.
    Then, it is natural to write the contractive property \eqref{ineq:CohenIterLewisContractive} as
    \begin{align}
        \label{ineq:CohenIterLewisContractiveThomp}
        d_{\mathrm{T}}\left(T_{\mathrm{CP}}(v),T_{\mathrm{CP}}(u)\right)
        &=\max_{1\leq i\leq d}\log\left(\max\Set{\frac{T_{\mathrm{CP}}(u)[i]}{T_{\mathrm{CP}}(v)[i]}, \frac{T_{\mathrm{CP}}(v)[i]}{T_{\mathrm{CP}}(u)[i]}}\right)\nonumber\\
        &\leq\left|1-\frac{p}{2}\right|\log\left(\max_{1\leq j\leq d}\max\Set{\frac{u[j]}{v[j]}, \frac{v[j]}{u[j]}}\right)\nonumber\\
        &=\left|1-\frac{p}{2}\right|d_{\mathrm{T}}(v,u),
    \end{align}
    and to hypothesize that the following contractive property, which extends \eqref{ineq:CohenIterLewisContractiveThomp} with $\alpha=2/p$, would hold:
    \begin{equation}
        \label{ineq:WrongContraction}
        d_{\mathrm{T}}\left(\hat{T}_F(V),\hat{T}_F(U)\right)
        \leq \abs{\frac{\alpha-1}{\alpha}}d_{\mathrm{T}}\left(V,U\right).
    \end{equation}
    However, this fails. A counterexample can be constructed by setting $n=1$, $\alpha=3$, 
    \begin{equation}
        \label{eq:WrongContractionCounterEx}
        A_1
        =\begin{bmatrix}
        19.5364 & 4.42\\
        4.42 & 1.1
        \end{bmatrix},
        \quad U
        =\begin{bmatrix}
        2/3 & 1/3\\
        1/3 & 1/3
        \end{bmatrix},
        \quad \text{and}\quad V
        =\begin{bmatrix}
        1/2.1 & 1/2.1\\
        1/2.1 & 1.1/2.1
        \end{bmatrix},
    \end{equation}
    which yields
    \begin{equation}
         1.4366
         \approx d_{\mathrm{T}}\left(\hat{T}_F(V),\hat{T}_F(U)\right)
        > \abs{\frac{\alpha-1}{\alpha}}d_{\mathrm{T}}\left(V,U\right)
        \approx 1.3668,
    \end{equation}
    violating the hypothesized contractive property \eqref{ineq:WrongContraction}.
\end{itemize}

For the first challenge, we identify an appropriate generalization of the algorithm proposed by \citet{Cohen2015}, leading to a new algorithm \eqref{eq:IntroAugMeanIter}, for computing the Petz-Augustin mean.

For the second and the third challenges, we leverage the properties of the Thompson metric specialized for $\mathcal{B}_{d,++}$ (Lemma \ref{eq:ThompIsMetric}, \ref{ineq:NegCurveProp} and \ref{ineq:ThompsonLogHomo}).
Specifically, we observe that the failure of the contractive property \eqref{ineq:WrongContraction} stems from the fact that matrix powers $Q\mapsto Q^{1-\alpha}$ and $(\cdot)\mapsto(\cdot)^{1/\alpha}$  in \eqref{eq:WrongIterMap} are not, in general, order-preserving or order-reversing when $\alpha\in(2,\infty)$ and $\alpha\in(0,1)$, respectively \citep[Theorem V.2.10]{Bhatia1997}.
Therefore, even though Definition \ref{eq:Thompson} gives
\begin{equation}
    \exp(-d_{\mathrm{T}}(V,U))V
    \leq U
    \leq\exp(d_{\mathrm{T}}(V,U))V,
\end{equation}
the inequalities
\begin{equation}
    \exp((1-\alpha)d_{\mathrm{T}}(V,U))\Tr\left[A_1^{\alpha} V^{1-\alpha}\right]
    \leq \Tr\left[A_1^{\alpha} U^{1-\alpha}\right]
    \leq \exp((\alpha-1)d_{\mathrm{T}}(V,U))\Tr\left[A_1^{\alpha} V^{1-\alpha}\right]
\end{equation}
fail for the choices of $\alpha,A_1,U,V$ in the counterexample \eqref{eq:WrongContractionCounterEx}.
This failure leads to the breakdown of the following inequalities:
\begin{equation}
    \exp((1/\alpha-1)d_{\mathrm{T}}(V,U))\hat{T}_F(V)
    \leq \hat{T}_F(U)
    \leq \exp((1-1/\alpha)d_{\mathrm{T}}(V,U))\hat{T}_F(V),
\end{equation}
which, by Definition \ref{eq:Thompson}, implies that the operator $\hat{T}_F$ is not contractive with a ratio of $\abs{1-1/\alpha}$ for all $\alpha\in(1/2,1)\cup(1,\infty)$.
Our key idea is to define the corrected operator $T_{F}$ (formally defined in \eqref{eq:ProposedOperator}) as follows:
\begin{equation}
    \label{eq:CorrectIterMap}
    T_{F}(\tilde{Q})
    \coloneqq \left(\sum_{j=1}^n w[j]\frac{A_j^{\alpha}}{\Tr\left[A_j^{\alpha}\tilde{Q}\right]}\right)^{(1-\alpha)/\alpha},
\end{equation}
so that all matrix powers involved are order-preserving or order-reversing for $\alpha\in(1/2,1)\cup(1,\infty)$.
In Lemma \ref{ineq:Contract}, we prove that the operator $T_F$ is indeed contractive with a ratio of $\abs{1-1/\alpha}$.
Moreover, the equivalence of our proposed iteration rule:
\begin{equation}
    Q_{t+1}
    \coloneqq \hat{T}_F\left(Q_t\right)
    =T_F\left(Q_{t}^{1-\alpha}\right)^{1/(1-\alpha)}
\end{equation}
leads to our main result, Theorem \ref{thm:PetzConv}, which bounds the Thompson metric between $Q_{\star}^{1-\alpha}$ and $Q_{T+1}^{1-\alpha}$, rather than between $Q_{\star}$ and $Q_{T+1}$.


\section{Computing Petz-Augustin Mean}
\label{sec:alg}
\subsection{Iteration Rule}
\label{subsec:CohenIteration}
Given the objective function $F$ in the optimization problem~\eqref{def:PetzAugMean}, we define the operator
\begin{equation}
    \label{eq:ProposedOperator}
    T_{F} \colon \mathcal{B}_{d,++}\mapsto\mathcal{B}_{d,++}:U\mapsto\left(\sum_{j=1}^d w[j]\frac{A_j^{\alpha}}{\Tr\left[A_j^{\alpha}U\right]}\right)^{(1-\alpha)/\alpha}.
\end{equation}
Below, we present our proposed iteration rule for solving the optimization problem~\eqref{def:PetzAugMean}, which is equivalent to \eqref{eq:IntroAugMeanIter} but reformulated to facilitate its convergence analysis.
\begin{itemize}
\item Let $Q_1\in\mathrm{relint}\left(\mathcal{D}_d\right)$.
\item For every $t \in \mathbb{N}$, compute 
\begin{equation}
    \label{alg:ProposedIter}
    Q_{t+1} = T_{F} \left(Q_{t}^{1-\alpha} \right)^{1/(1-\alpha)},
\end{equation}
and output $Q_{t+1}/\Tr\left[Q_{t+1}\right]$. 
\end{itemize} 

Also, when applied to the optimization problem~\eqref{def:ClassicalAugMean}, the operator $T_{F}$ reduces to the following form:
\begin{equation}
    \label{eq:ProposedClassicalOperator}
    T_{f} \colon \mathbb{R}_{++}^d\mapsto\mathbb{R}_{++}^d:u\mapsto\left(\sum_{j=1}^d w[j]\frac{a_j^{\alpha}}{\inner{a_j^{\alpha}}{u}}\right)^{(1-\alpha)/\alpha}.
\end{equation}
Our proposed iteration rule for solving the optimization problem~\eqref{def:ClassicalAugMean} is then given by
\begin{itemize}
    \item Let $q_1\in\mathrm{relint}\left(\Delta_d\right)$.
    \item For every $t \in \mathbb{N}$, compute 
    \begin{equation}
        \label{alg:ProposedClassicalIter}
        q_{t+1} = T_{f} \left(q_{t}^{1-\alpha} \right)^{1/(1-\alpha)},
    \end{equation}
    and output $q_{t+1}/\inner{\mathbf{1}_d}{q_{t+1}}$. 
\end{itemize} 

To verify the well-definedness of the operator $T_{F}$, we first observe that
for each $j$, since $A_j\in\mathcal{D}_d$, the denominator $\Tr\left[ A_j^\alpha U \right]$ is finite and positive 
for any $U\in\mathcal{B}_{d,++}$, combining with the assumption that $\sum_{j=1}^n A_j$ is full-rank, it follows that $T_F(U)$ is also full-rank. 
This ensures that the operator $T_{F}$ is well-defined. 
To understand why we restrict the domain of the operator $T_{F}$ to $\mathcal{B}_{d,++}$, suppose $U\in\mathcal{B}_{d,+}$ is not full rank.
In this case, the denominator $\Tr\left[ A_j^\alpha U \right]$ may become zero, rendering the definition of the operator $T_{F}$ ill-defined.

To evaluate the per-iteration time complexity of the proposed iteration rule \eqref{alg:ProposedIter}, we 
express it 
explicitly as follows (recall \eqref{eq:IntroAugMeanIter}):
\begin{align}
    \label{eq:DiscreteCP}
    Q_{t+1} = \left(\sum_{j=1}^n w[j]\frac{ A_j^\alpha}{ \Tr\left[ A_j^\alpha Q_t^{1-\alpha} \right] }\right)^{1/\alpha}.
\end{align}
The matrix powers $A_j^\alpha$ can be computed and stored before the first iteration begins.
Given matrices $Q_t^{1-\alpha}$ and $A_j^\alpha$ for all $j\in\Set{1,2,\dots,n}$, each $\Tr\left[ A_j^\alpha Q_t^{1-\alpha} \right]$ can be computed in $O ( d^2 )$ time. 
Computing $Q_t^{1-\alpha}$ and raising a matrix to the power $1 / \alpha$ each require $O ( d^3 )$ time. 
Consequently, the initialization time complexity is $O\left(n d^{3}\right)$, and the per-iteration time complexity is $O\left(d^{3} + n d^{2}\right)$.

\subsection{Convergence Analysis}
\label{subsec:Conv}

Below, we present our main theorem.
\begin{theorem}
    \label{thm:PetzConv}
    For any $\alpha\in(1/2,1)\cup(1,\infty)$, let $\Set{Q_t}_{t\in\mathbb{N}}$ be the sequence of iterates generated by the iteration rule \eqref{alg:ProposedIter}.
    Then, we have 
    \begin{align}
        F\left(\frac{Q_{T+1}}{\Tr\left[Q_{T+1}\right]}\right)-F\left(Q_{\star}\right)
        &\leq \abs{\frac{1}{\alpha-1}}d_{\mathrm{T}}\left(Q_{\star}^{1-\alpha},\left(\frac{Q_{T+1}}{\Tr\left[Q_{T+1}\right]}\right)^{1-\alpha}\right) \nonumber\\
        &\leq \left|\frac{2}{\alpha-1}\right|\cdot \left|1-\frac{1}{\alpha}\right|^T d_{\mathrm{T}}\left(Q_{\star}^{1-\alpha},Q_1^{1-\alpha}\right),
    \end{align}
    where $Q_{\star}$ is the minimizer of the optimization problem~\eqref{def:PetzAugMean}.
    Moreover, for $\alpha > 1$, the function values are non-increasing, i.e.,
    \begin{align}
        F\left(\frac{Q_{t+1}}{\Tr\left[Q_{t+1}\right]}\right)
        \leq
        F\left(\frac{Q_{t}}{\Tr\left[Q_{t}\right]}\right),\quad\forall t\in\mathbb{N}.
    \end{align}
    Furthermore, the quantity $d_{\mathrm{T}}\left(Q_{\star}^{1-\alpha},Q_1^{1-\alpha}\right)$ is bounded above.
\end{theorem}

\begin{remark}
    \label{remark:ClassicalConv}
    By viewing all vectors $a_j$ and $q_t$ as diagonal matrices, and following the proof of Theorem \ref{thm:PetzConv}, we obtain a similar convergence guarantee for the commuting case defined in \eqref{def:ClassicalAugMean}.
    In particular, for any $\alpha\in(1/2,1)\cup(1,\infty)$, let $\Set{q_t}_{t\in\mathbb{N}}$ be the sequence of iterates generated by the iteration rule \eqref{alg:ProposedClassicalIter}.
    Then, we have 
    \begin{align}
        f\left(\frac{q_{T+1}}{\inner{\mathbf{1}_d}{q_{T+1}}}\right)-f\left(q_{\star}\right)
        &\leq \abs{\frac{1}{\alpha-1}}d_{\mathrm{T}}\left(q_{\star}^{1-\alpha},\left(\frac{q_{T+1}}{\inner{q_{T+1}}{\mathbf{1}_n}}\right)^{1-\alpha}\right) \nonumber\\
        &= d_{\mathrm{T}}\left(q_{\star},\frac{q_{T+1}}{\inner{q_{T+1}}{\mathbf{1}_n}}\right) \nonumber\\
        &\leq \left|\frac{2}{\alpha-1}\right|\cdot \left|1-\frac{1}{\alpha}\right|^T d_{\mathrm{T}}\left(q_{\star}^{1-\alpha},q_1^{1-\alpha}\right)\nonumber\\
        &= 2 \left|1-\frac{1}{\alpha}\right|^T d_{\mathrm{T}}\left(q_{\star},q_1\right),
    \end{align}
    where $q_{\star}$ is the minimizer of the optimization problem~\eqref{def:ClassicalAugMean} and the two equalities follow from Lemma \ref{ineq:NegCurveProp}.
    Moreover, for $\alpha > 1$, the function values are non-increasing, i.e.,
    \begin{align}
        f\left(\frac{q_{t+1}}{\inner{\mathbf{1}_d}{q_{t+1}}}\right)
        \leq
        f\left(\frac{q_{t}}{\inner{\mathbf{1}_d}{q_{t}}}\right),\quad\forall t\in\mathbb{N}.
    \end{align}
    Furthermore, the quantity $d_{\mathrm{T}}\left(q_{\star},q_1\right)$ is bounded above.
\end{remark}

\paragraph{Roadmap.}
The proof of Theorem \ref{thm:PetzConv}, which we defer to Section \ref{subsec:PfMainThm}, relies on the following observations.
\begin{itemize}
    \item 
    The operator $T_{F}$ is contractive with a ratio of $| 1 - 1 / \alpha |$ with respect to the Thompson metric (see Section \ref{subsec:Contract}). 
    As a result, it has a unique fixed point, and the iterates $Q_t$ converge to this fixed point at a rate of $O\left(|1-1/\alpha|^T\right)$
    with respect to the Thompson metric. 
    \item The unique fixed point of $T_{F}$ 
    coincides with 
    the minimizer of the optimization problem~\eqref{def:PetzAugMean} (see Section \ref{subsec:FixPt}).
    \item The iterates $Q_t$ may not be ``physical,'' in the sense that they may not have unit traces.
	Fortunately, we show that the Thompson metrics between the iterates and the minimizer are preserved under trace normalization, up to a multiplicative constant. 
    Therefore, the trace-normalized iterates still converge 
    to the minimizer
    at a rate of $O\left(|1-1/\alpha|^T\right)$ (see Section \ref{subsec:NormalStable}).
    \item 
    The variation in function values can be upper-bounded by the Thompson metric between the iterates and the minimizer (see Section \ref{subsec:ControlFval}). 
    Consequently, the above error bound with respect to the Thompson metric translates into an error bound in function value.
\end{itemize}
\subsubsection{Contractivity of $T_{F}$}
\label{subsec:Contract}
\begin{lemma}[{Contractive Property}]
    \label{ineq:Contract}
    Let $\alpha\in(1/2,1)\cup(1,\infty)$. 
    For any $U,V\in\mathcal{B}_{d,++}$, we have
    \begin{align}
        d_{\mathrm{T}}\left(T_{F}(V),T_{F}(U)\right)
        \leq \left|1-\frac{1}{\alpha}\right| d_{\mathrm{T}}\left(V,U\right).
    \end{align}
\end{lemma}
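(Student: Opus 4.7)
The plan is to decompose $T_{f_{\alpha}}$ into an ``averaging map''
\[
M(U) := \mathbb{E}_{P_X}\left[\frac{(\rho_B^X)^\alpha}{\Tr[(\rho_B^X)^\alpha U]}\right]
\]
followed by the matrix power $W \mapsto W^{(1-\alpha)/\alpha}$, so that $T_{f_{\alpha}}(U) = M(U)^{(1-\alpha)/\alpha}$. First I would show that $M$ is \emph{non-expansive} in the Thompson metric, and then Lemma \ref{ineq:NegCurveProp} applied to the power map would contribute the entire contraction factor $|(1-\alpha)/\alpha| = |1 - 1/\alpha|$. Since $|1-1/\alpha| \leq 1$ precisely when $\alpha \in (1/2,1) \cup (1,\infty)$, the restriction on $\alpha$ in the statement matches exactly the applicability condition of Lemma \ref{ineq:NegCurveProp}.

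For the non-expansivity of $M$, let $r := d_{\mathrm{T}}(V,U)$. By Definition \ref{eq:Thompson}, $\exp(-r)V \leq U \leq \exp(r)V$. Pairing with the positive semidefinite matrix $(\rho_B^x)^\alpha$ and taking the trace preserves operator inequalities, so for each $x$ with $P_X(x) > 0$,
\[
\exp(-r)\Tr[(\rho_B^x)^\alpha V] \leq \Tr[(\rho_B^x)^\alpha U] \leq \exp(r)\Tr[(\rho_B^x)^\alpha V].
\]
The support assumption on $\sum_{x:P_X(x)>0}\rho_B^x$ ensures these scalar traces are strictly positive, so inverting and then multiplying by $(\rho_B^x)^\alpha \geq 0$ yields
\[
\exp(-r)\cdot\frac{(\rho_B^x)^\alpha}{\Tr[(\rho_B^x)^\alpha V]} \leq \frac{(\rho_B^x)^\alpha}{\Tr[(\rho_B^x)^\alpha U]} \leq \exp(r)\cdot\frac{(\rho_B^x)^\alpha}{\Tr[(\rho_B^x)^\alpha V]}.
\]
Averaging over $P_X$ preserves the operator inequalities, giving $\exp(-r)M(V) \leq M(U) \leq \exp(r)M(V)$, which by Definition \ref{eq:Thompson} says $d_{\mathrm{T}}(M(V),M(U)) \leq d_{\mathrm{T}}(V,U)$.

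To finish, apply Lemma \ref{ineq:NegCurveProp} with exponent $(1-\alpha)/\alpha \in [-1,1]$ to obtain
\[
d_{\mathrm{T}}(T_{f_{\alpha}}(V), T_{f_{\alpha}}(U)) = d_{\mathrm{T}}\left(M(V)^{(1-\alpha)/\alpha}, M(U)^{(1-\alpha)/\alpha}\right) \leq \left|1-\frac{1}{\alpha}\right| d_{\mathrm{T}}(M(V),M(U)),
\]
and chain with the non-expansivity of $M$.

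The step requiring the most care is the non-expansivity of $M$: I must transfer an \emph{operator} inequality on $U$ through a construction whose denominators are \emph{scalar} traces and whose numerators are positive semidefinite (not strictly positive) matrices, and then through an expectation. Each ingredient is elementary monotonicity, but needs to be chained cleanly. The support hypothesis is used both to guarantee strictly positive denominators and to ensure $M(U)\in \mathcal{B}(\mathcal{H}_B)_{++}$, so that Lemma \ref{ineq:NegCurveProp} is genuinely applicable at the power step. The assumption $\alpha > 1/2$ is invoked only to keep $|(1-\alpha)/\alpha| \leq 1$ in that lemma.
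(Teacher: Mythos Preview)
Your proposal is correct and follows essentially the same approach as the paper: both arguments show that the inner map $M(U)=\mathbb{E}_{P_X}\bigl[(\rho_B^X)^\alpha/\Tr[(\rho_B^X)^\alpha U]\bigr]=T_{f_\alpha}(U)^{\alpha/(1-\alpha)}$ is Thompson non-expansive by transferring the operator inequality through the scalar denominators, and then obtain the contraction factor $|1-1/\alpha|$ by applying Lemma~\ref{ineq:NegCurveProp} to the power $(1-\alpha)/\alpha$. The only difference is cosmetic: you name the intermediate map $M$, whereas the paper works directly with $T_{f_\alpha}(\cdot)^{\alpha/(1-\alpha)}$.
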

\begin{proof}
     By Definition \ref{eq:Thompson}, we have
    \begin{align}
        \exp\left(-d_{\mathrm{T}}\left(V,U\right)\right)V
        \leq U
        \leq \exp\left(d_{\mathrm{T}}\left(V,U\right)\right)V.\nonumber
    \end{align}
    Since $A_j\in\mathcal{D}_d$, we write
    \begin{align}
        T_{F}\left(U\right)^{\alpha/(1-\alpha)}
        &=\sum_{j=1}^n w[j]\frac{A_j^{\alpha}}{\Tr\left[A_j^{\alpha}U\right]}\nonumber\\
        &\geq\exp\left(- d_{\mathrm{T}}\left(V,U\right)\right)\sum_{j=1}^n w[j]\frac{A_j^{\alpha}}{\Tr\left[A_j^{\alpha}V\right]}\nonumber\\
        &=\exp\left(-d_{\mathrm{T}}\left(V,U\right)\right)T_{F}\left(V\right)^{\alpha/(1-\alpha)}.\nonumber
    \end{align}
    Similarly, we write
    \begin{align}
        T_{F}\left(U\right)^{\alpha/(1-\alpha)}
        &\leq\exp\left(d_{\mathrm{T}}\left(V,U\right)\right)\sum_{j=1}^n w[j]\frac{A_j^{\alpha}}{\Tr\left[A_j^{\alpha}V\right]}\nonumber\\
        &=\exp\left(d_{\mathrm{T}}\left(V,U\right)\right)T_{F}\left(V\right)^{\alpha/(1-\alpha)}.\nonumber
    \end{align}
    By Definition \ref{eq:Thompson}, the two inequalities above imply that 
    \begin{align}
        d_{\mathrm{T}}\left(T_{F}\left(V\right)^{\alpha/(1-\alpha)},T_{F}\left(U\right)^{\alpha/(1-\alpha)}\right) \leq d_{\mathrm{T}} ( V, U ).
    \end{align}
    Then, by Lemma \ref{ineq:NegCurveProp}, we have
    \begin{align}
        d_{\mathrm{T}}\left(T_{F}\left(V\right),T_{F}\left(U\right)\right)
        &=d_{\mathrm{T}}\left(\left(T_{F}\left(V\right)^{\alpha/(1-\alpha)}\right)^{(1-\alpha)/\alpha},\left(T_{F}\left(U\right)^{\alpha/(1-\alpha)}\right)^{(1-\alpha)/\alpha}\right)\nonumber\\
        &\leq\left|\frac{1-\alpha}{\alpha}\right|d_{\mathrm{T}}\left(T_{F}\left(V\right)^{\alpha/(1-\alpha)},T_{F}\left(U\right)^{\alpha/(1-\alpha)}\right)\nonumber\\
        &\leq\left|\frac{1-\alpha}{\alpha}\right|d_{\mathrm{T}}\left(V,U\right).\nonumber
    \end{align}
\end{proof}
\subsubsection{Fixed-Point Property of $T_{F}$}
\label{subsec:FixPt}
\begin{lemma}
    \label{eq:FixPtIsMin}
    For any $\alpha  \in (0, 1) \cup (1, \infty)$, there exists a unique minimizer $Q_{\star}$ of the optimization problem~\eqref{def:PetzAugMean}.
    Moreover, for the same $Q_{\star}$, $Q_{\star}^{1-\alpha}$ is the unique fixed point of the operator $T_{F}$ for $\alpha\in(0,1)\cup(1,\infty)$.
\end{lemma}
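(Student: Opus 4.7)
Existence of a minimizer and its full-rankness are already quoted in the paper from \cite[Lemmas IV.8 and IV.11]{Mosonyi2021}, so the remaining work is to establish (a) the equivalence between minimizers of $f_{\alpha}$ and fixed points of $T_{f_{\alpha}}$, and (b) the uniqueness of each. My strategy is to perform the change of variables $\omega = \sigma^{1-\alpha}$, which is a diffeomorphism of $\mathcal{B}(\mathcal{H}_B)_{++}$ onto itself (since $1-\alpha \neq 0$). The payoff is that in the $\omega$-variables the trace inside the logarithm becomes linear in $\omega$, which circumvents the Daleckii--Krein derivative of $\sigma \mapsto \Tr[(\rho_B^x)^{\alpha} \sigma^{1-\alpha}]$ for non-commuting arguments and turns the KKT computation into elementary trace calculus.

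Carrying out the substitution, the objective becomes $\tilde{f}_{\alpha}(\omega) = \sum_x P_X(x) \tfrac{1}{\alpha-1} \log \Tr[(\rho_B^x)^{\alpha} \omega]$, with the unit-trace constraint rewritten as $\Tr[\omega^{1/(1-\alpha)}] = 1$. The required gradients then follow cleanly from trace cyclicity: $\nabla_{\omega} \tilde{f}_{\alpha}(\omega) = \sum_x P_X(x) (\rho_B^x)^{\alpha} / [(\alpha-1) Z_x(\sigma)]$, where $Z_x(\sigma) := \Tr[(\rho_B^x)^{\alpha} \sigma^{1-\alpha}]$, and $\nabla_{\omega} \Tr[\omega^{1/(1-\alpha)}] = \tfrac{1}{1-\alpha} \omega^{\alpha/(1-\alpha)} = \tfrac{\sigma^{\alpha}}{1-\alpha}$. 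Writing out the Lagrangian stationarity condition and simplifying yields $\sum_x P_X(x) (\rho_B^x)^{\alpha} / Z_x(\sigma) = \nu \sigma^{\alpha}$ for some Lagrange multiplier $\nu$; right-multiplying both sides by $\sigma^{1-\alpha}$ and taking traces forces $\nu = 1$. Rearranging gives $\sigma^{\alpha} = \sum_x P_X(x) (\rho_B^x)^{\alpha} / Z_x(\sigma)$, which is equivalent to $\sigma^{1-\alpha} = T_{f_{\alpha}}(\sigma^{1-\alpha})$. The argument is reversible: for any $V \in \mathcal{B}(\mathcal{H}_B)_{++}$ with $V = T_{f_{\alpha}}(V)$, setting $\sigma := V^{1/(1-\alpha)}$ recovers the same equation, and the trace manipulation forces $\Tr[\sigma] = 1$, so $\sigma \in \mathcal{D}(\mathcal{H}_B)$ is a stationary point of the original problem.

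For uniqueness I would split into two regimes. For $\alpha \in (0, 1) \cup (1, 2]$ the problem is convex \cite{Mosonyi2017}, so the first-order condition is also sufficient and strict convexity yields a unique minimizer, hence by the equivalence a unique fixed point. For $\alpha \in (1/2, 1) \cup (1, \infty)$, Lemma~\ref{ineq:Contract} together with Lemma~\ref{eq:ThompIsMetric} invokes the Banach fixed-point theorem to give uniqueness of the fixed point directly; combined with the forward implication ``minimizer $\Rightarrow$ fixed point,'' this yields uniqueness of the minimizer even in the non-convex range $\alpha > 2$. Since $(0, 1/2] \subset (0, 1)$ is covered by the convex argument and $(2, \infty)$ by the contraction argument, the two regimes together cover the full range $\alpha \in (0, 1) \cup (1, \infty)$. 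The main obstacle I anticipate is rigorously justifying the KKT step: verifying the constraint qualification (immediate, since the constraint gradient $\sigma^{\alpha}/(1-\alpha)$ is nonzero at any full-rank $\sigma$) and handling strict convexity carefully in the range $\alpha \in (0, 1/2]$ where the contraction lemma does not apply. Without the change of variables, the algebraic form of $T_{f_{\alpha}}$ would be awkward to match against a direct Fréchet-derivative computation in the $\sigma$-variables, so the substitution is really the crux of the argument.
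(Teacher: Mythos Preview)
Your approach is correct and genuinely different from the paper's. The paper does not derive the first-order optimality condition at all: for $\alpha\in(0,1)$ it simply cites \cite[Proposition 2(b)]{Cheng2019}, and for $\alpha>1$ it proves the contrapositive of ``minimizer $\Rightarrow$ fixed point'' by a descent argument, showing via the Araki--Lieb--Thirring and H\"older inequalities (Lemmas \ref{ineq:TraceBound} and \ref{ineq:FvalDecrease}) that applying one iteration to any non-fixed-point density matrix strictly decreases $f_{\alpha}$. Your change of variables $\omega=\sigma^{1-\alpha}$ sidesteps the Daleckii--Krein derivative, turns the Lagrangian stationarity condition into elementary trace calculus, and yields the equivalence ``minimizer $\Leftrightarrow$ fixed point'' (in the convex regime) in one stroke for all $\alpha$, without the operator-inequality machinery and without delegating $\alpha\in(0,1)$ to an external reference. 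The paper's route, on the other hand, delivers as a byproduct the monotone descent of function values (Lemma \ref{ineq:FvalDecrease}), which is a separate assertion of Theorem \ref{ineq:FvalConv} that your KKT argument does not supply; so for the full theorem the paper's lemmas are still needed.

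One point to tighten: your uniqueness argument for $\alpha\in(0,1/2]$ rests on strict convexity of $f_{\alpha}$ in $\sigma$, which you flag as needing care. This is true but is not entirely trivial when the $\rho_B^x$ are rank-deficient; it requires that under the support assumption $\supp\!\bigl(\sum_x P_X(x)\rho_B^x\bigr)=\mathcal{H}_B\setminus\{0\}$, the map $\sigma\mapsto\sum_x P_X(x)\Tr\!\bigl[(\rho_B^x)^{\alpha}\sigma^{1-\alpha}\bigr]$ is \emph{strictly} concave, which follows from strict operator concavity of $t\mapsto t^{1-\alpha}$ together with the fact that $\sum_x P_X(x)(\rho_B^x)^{\alpha}$ is positive definite. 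Once that is nailed down, your argument is complete.
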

\paragraph{Roadmap.}
The remainder of this section is devoted to the proof of Lemma \ref{eq:FixPtIsMin}.
For $\alpha\in(0,1)$, Lemma \ref{eq:FixPtIsMin} has been 
proven 
by \citet[Proposition 2(b)]{Cheng2019}.
For $\alpha\in(1,\infty)$, the proof of Lemma \ref{eq:FixPtIsMin} relies on the following observations:
\begin{itemize}
    \item The traces of the iterates are always less than or equal to $1$ (Lemma \ref{ineq:TraceBound}).
    \item The function values are non-increasing (Lemma \ref{ineq:FvalDecrease}).
\end{itemize}

We will use Lemma \ref{ineq:Araki} and Lemma \ref{ineq:HolderEq} to prove Lemma \ref{ineq:TraceBound}.
\begin{lemma}[{Araki-Lieb-Thirring Inequality \citep{Araki1990}}]
\label{ineq:Araki}
For any $U,V\in\mathcal{B}_{d,++}$, we have 
\begin{align}
    \Tr\left[\left(V^{1/2}UV^{1/2}\right)^{sr}\right]\leq\Tr\left[\left(V^{r/2} U^r V^{r/2}\right)^{s}\right],
\end{align}
for all $s>0$ and $r\geq 1$.
\end{lemma}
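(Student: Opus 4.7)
The plan is to derive the inequality via a log-majorization argument. Write $\lambda_i(X)$ for the $i^\text{th}$ largest eigenvalue of a positive operator $X$. The central intermediate claim is the log-majorization
\begin{align}
    \prod_{i=1}^k \lambda_i\bigl((V^{1/2}UV^{1/2})^r\bigr) \leq \prod_{i=1}^k \lambda_i\bigl(V^{r/2}U^rV^{r/2}\bigr)
\end{align}
for every $k \in \set{1, \ldots, d}$, with equality at $k = d$, since both sides equal $(\det V)^r (\det U)^r$.

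Given the log-majorization, the trace inequality follows from a standard majorization fact: if $x, y \in \mathbb{R}_{++}^d$ satisfy $\prod_{i \leq k} x_i \leq \prod_{i \leq k} y_i$ for every $k$ with equality at $k = d$, then $\sum_i x_i^s \leq \sum_i y_i^s$ for every $s > 0$ (this is the usual consequence of log-majorization combined with the convexity and monotonicity of $t \mapsto e^{st}$). Applied with $x_i = \lambda_i(V^{1/2}UV^{1/2})^r$ and $y_i = \lambda_i(V^{r/2}U^rV^{r/2})$, this gives
\begin{align}
    \Tr\bigl[(V^{1/2}UV^{1/2})^{sr}\bigr] = \sum_i x_i^s \leq \sum_i y_i^s = \Tr\bigl[(V^{r/2}U^rV^{r/2})^s\bigr],
\end{align}
which is the desired bound.

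To prove the log-majorization, I would reduce to the top-eigenvalue ($k = 1$) case via antisymmetric tensor powers. Replacing $U, V$ by their $k$-fold antisymmetric tensor powers on $\Lambda^k \mathcal{H}_B$ converts products of the top $k$ eigenvalues into operator norms, so it suffices to prove, for any positive $A, B$ on any finite-dimensional Hilbert space and $r \geq 1$,
\begin{align}
    \bigl\|B^{1/2}AB^{1/2}\bigr\|_\infty^{\,r} \leq \bigl\|B^{r/2}A^rB^{r/2}\bigr\|_\infty.
\end{align}

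The hard part will be this top-eigenvalue inequality. The cleanest route is complex interpolation: consider the operator-valued analytic function $F(z) = B^{rz/2}A^{rz}B^{rz/2}$ on the strip $\set{z \in \mathbb{C} \mid 0 \leq \Re z \leq 1}$, establish boundary bounds for $\|F(iy)\|_\infty$ and $\|F(1+iy)\|_\infty$, and apply Hadamard's three-lines lemma to $z \mapsto \log \|F(z)\|_\infty$ at the intermediate value $z = 1/r$. This analytic continuation strategy mirrors Araki's original argument; by contrast, the antisymmetric tensor reduction and the passage from log-majorization to the trace inequality are both standard once the top-eigenvalue inequality is in hand.
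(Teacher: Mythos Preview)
The paper does not prove this lemma; it is stated with a citation to Araki (1990) and used as a black box in the proof of Lemma~\ref{ineq:TraceBound}. So there is no in-paper proof to compare against.

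Your outline is correct and is essentially Araki's original argument: log-majorization of the eigenvalues, reduction to the top eigenvalue via antisymmetric tensor powers, and an interpolation for the operator-norm inequality. Two small remarks. First, the equality at $k=d$ is true but not needed; weak log-majorization already gives $\sum_i x_i^s \leq \sum_i y_i^s$ for every $s>0$ because $t\mapsto e^{st}$ is convex and increasing. Second, in the three-lines step with $F(z)=B^{rz/2}A^{rz}B^{rz/2}$, bounding the right boundary requires one extra line: writing
\[
F(1+iy)=B^{riy/2}\,(A^{r/2}B^{r/2})^{*}\,A^{riy}\,(A^{r/2}B^{r/2})\,B^{riy/2},
\]
the unitary factors $B^{riy/2}$ and $A^{riy}$ do not increase the operator norm, so $\|F(1+iy)\|_\infty \leq \|A^{r/2}B^{r/2}\|_\infty^{2}=\|B^{r/2}A^{r}B^{r/2}\|_\infty$, which is exactly what you need.
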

\begin{lemma}[{H\"older Inequality \citep{Larotonda2018}}]
    \label{ineq:HolderEq}
    For any $U,V\in\mathcal{B}_{d,+}\backslash\Set{0}$ and $p>1$, we have
    \begin{align}
        \Tr[UV]\leq\Tr[U^{p}]^{1/p}\Tr[V^{p/(p-1)}]^{1-1/p}.
    \end{align}
    Moreover, equality holds if and only if 
    \begin{align}
        \frac{U^p}{\Tr[U^p]}=\frac{V^{p/(p-1)}}{\Tr[V^{p/(p-1)}]}.
    \end{align}
\end{lemma}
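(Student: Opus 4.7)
The plan is to reduce the matrix trace H\"older inequality to the classical scalar H\"older inequality via von Neumann's (equivalently, Lidskii's) trace inequality. The positivity of $U$ and $V$ makes the reduction clean and avoids any heavier machinery such as complex interpolation or an operator Young inequality.

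The first step is von Neumann's trace inequality specialized to positive definite matrices: for $U,V\in\mathcal{B}(\mathcal{H}_B)_{++}$, whose eigenvalues coincide with their singular values, one has $\Tr[UV]\leq\sum_{i=1}^{d}\lambda_i(U)\,\lambda_i(V)$, with eigenvalues listed in non-increasing order (matching the paper's convention). The second step is to apply the scalar H\"older inequality to the two nonnegative sequences $\{\lambda_i(U)\}_{i=1}^{d}$ and $\{\lambda_i(V)\}_{i=1}^{d}$ with exponents $p$ and $q:=p/(p-1)$, yielding the bound $\bigl(\sum_i\lambda_i(U)^p\bigr)^{1/p}\bigl(\sum_i\lambda_i(V)^q\bigr)^{1/q}$. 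Since $\Tr[U^p]=\sum_i\lambda_i(U)^p$ and $\Tr[V^q]=\sum_i\lambda_i(V)^q$, this is exactly the right-hand side of the claimed inequality, with $1/q=1-1/p$.

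For the equality condition, the backward direction is immediate: if $U^p/\Tr[U^p]=V^q/\Tr[V^q]$, then $U$ and $V$ are simultaneously diagonalizable with aligned spectral decompositions, and a direct computation on the common eigenbasis verifies the equality. The main obstacle is the forward direction, which requires tightness in both intermediate inequalities at once. Tightness in scalar H\"older forces $\lambda_i(U)^p=c\,\lambda_i(V)^q$ for some $c>0$ and every $i$, while tightness in von Neumann's inequality forces $U$ and $V$ to share a common eigenbasis in which their eigenvalues are both listed in non-increasing order. Combining these two observations upgrades the eigenvalue-level proportionality to the matrix identity $U^p/\Tr[U^p]=V^q/\Tr[V^q]$, which completes the characterization. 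The subtle point here is that neither intermediate equality condition alone is sufficient: one must invoke both together to promote a scalar proportionality into a matrix identity.
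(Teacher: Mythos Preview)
The paper does not supply its own proof of this lemma; it is quoted as a known result with a citation. Your argument is correct and self-contained. The reduction via von Neumann's trace inequality followed by the scalar H\"older inequality is a standard and elementary route to the trace H\"older inequality, and your handling of the equality characterization is sound: equality in the composite bound forces equality in both intermediate steps; scalar H\"older (with all eigenvalues strictly positive since $U,V\in\mathcal{B}(\mathcal{H}_B)_{++}$) yields $\lambda_i(U)^p=c\,\lambda_i(V)^q$ for a single constant $c>0$, while the von Neumann equality case supplies a common orthonormal eigenbasis with matched ordering, and together these assemble into the matrix identity $U^p=c\,V^q$, i.e., $U^p/\Tr[U^p]=V^{p/(p-1)}/\Tr[V^{p/(p-1)}]$.
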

\begin{lemma}[{Bound of Trace}]
\label{ineq:TraceBound}
For any $\alpha\in(1,\infty)$ and $Q\in\mathcal{B}_{d,++}$ such that $\Tr\left[Q\right]\leq 1$, we have 
\begin{align}
    \Tr\left[T_{F}\left(Q^{1-\alpha}\right)^{1/(1-\alpha)}\right]\leq 1.
\end{align}
Moreover, equality holds if and only if 
$Q$ is a fixed point of $T_{F}\left((\cdot)^{1-\alpha}\right)^{1/(1-\alpha)}$ on $\mathcal{D}_d$.
\end{lemma}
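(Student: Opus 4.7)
The plan is to factor $\Tr\bigl[T_{f_\alpha}(\sigma_B^{1-\alpha})^{1/(1-\alpha)}\bigr]$ as the trace of a product of two positive-definite matrices, and then bound it by combining H\"older's inequality (Lemma \ref{ineq:HolderEq}) with the Araki-Lieb-Thirring inequality (Lemma \ref{ineq:Araki}). To set things up, I would first introduce the shorthand
\begin{align}
M \coloneqq \mathbb{E}_{P_X}\left[\frac{(\rho_B^X)^\alpha}{\Tr[(\rho_B^X)^\alpha \sigma_B^{1-\alpha}]}\right],
\end{align}
so that $T_{f_\alpha}(\sigma_B^{1-\alpha})^{1/(1-\alpha)} = M^{1/\alpha}$. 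A direct computation, using only linearity of the trace, yields the normalization identity $\Tr[M \sigma_B^{1-\alpha}] = 1$, which is the key structural fact driving the whole argument.

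Next, I would use cyclicity of the trace to write $\Tr[M^{1/\alpha}] = \Tr[AB]$, where $A \coloneqq \sigma_B^{(1-\alpha)/(2\alpha)} M^{1/\alpha} \sigma_B^{(1-\alpha)/(2\alpha)}$ and $B \coloneqq \sigma_B^{(\alpha-1)/\alpha}$; both lie in $\mathcal{B}(\mathcal{H}_B)_{++}$ since $\sigma_B,M>0$ and $\alpha>1$. Applying Lemma \ref{ineq:HolderEq} with $p = \alpha$, so that $p/(p-1) = \alpha/(\alpha-1)$, gives $\Tr[AB] \le \Tr[A^\alpha]^{1/\alpha}\Tr[\sigma_B]^{(\alpha-1)/\alpha}$, and the hypothesis $\Tr[\sigma_B]\le 1$ handles the second factor. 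For the first factor, I would apply Lemma \ref{ineq:Araki} with $V = \sigma_B^{(1-\alpha)/\alpha}$, $U = M^{1/\alpha}$, $s=1$, and $r=\alpha\ge 1$ to obtain
\begin{align}
\Tr[A^\alpha] = \Tr\bigl[(V^{1/2} U V^{1/2})^\alpha\bigr] \le \Tr[V^{\alpha/2} U^\alpha V^{\alpha/2}] = \Tr[M \sigma_B^{1-\alpha}] = 1,
\end{align}
where the last equality comes from the identity established in the set-up. Chaining these bounds yields $\Tr[M^{1/\alpha}]\le 1$, which is the stated inequality.

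For the equality characterization, the easy direction is direct: if $\sigma_B$ is a fixed point on $\mathcal{D}(\mathcal{H}_B)$, then $M^{1/\alpha}=\sigma_B$ and $\Tr[\sigma_B]=1$, so $\Tr[M^{1/\alpha}]=1$. For the converse, equality in the chain forces simultaneously $\Tr[\sigma_B]=1$, equality in Araki-Lieb-Thirring (which, via its standard equality condition, forces $\sigma_B$ and $M^{1/\alpha}$ to commute), and equality in H\"older. Once commutativity holds, $A^\alpha = \sigma_B^{1-\alpha} M$, while the H\"older equality condition in Lemma \ref{ineq:HolderEq} combined with $\Tr[A^\alpha]=\Tr[\sigma_B]=1$ gives $A^\alpha = \sigma_B$; comparing the two yields $M = \sigma_B^\alpha$, i.e., $\sigma_B$ is a fixed point in $\mathcal{D}(\mathcal{H}_B)$.

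The main obstacle is the equality case, specifically extracting commutativity from equality in Araki-Lieb-Thirring in the range $\alpha>1$; the inequality itself follows cleanly once the correct factorization $\Tr[M^{1/\alpha}]=\Tr[AB]$ is identified and the two named inequalities are applied in sequence.
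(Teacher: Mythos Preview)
Your inequality argument is exactly the paper's: with your $M$ playing the role of the paper's $U$ and $\sigma_B^{1-\alpha}$ its $V$, the factorization $\Tr[M^{1/\alpha}]=\Tr[AB]$, the application of H\"older with exponent $p=\alpha$, and the Araki--Lieb--Thirring bound $\Tr[A^\alpha]\le\Tr[M\,\sigma_B^{1-\alpha}]=1$ match line for line.

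For the equality case, however, the obstacle you flag is self-imposed. The paper never invokes an equality condition for Araki--Lieb--Thirring, nor does it need commutativity of $M$ with $\sigma_B$. Once equality in the chain forces $\Tr[A^\alpha]=\Tr[\sigma_B]=1$, the H\"older equality condition in Lemma~\ref{ineq:HolderEq} alone gives $A^\alpha=B^{\alpha/(\alpha-1)}=\sigma_B$, hence $A=\sigma_B^{1/\alpha}$. But the conjugating factors in $A=\sigma_B^{(1-\alpha)/(2\alpha)}M^{1/\alpha}\sigma_B^{(1-\alpha)/(2\alpha)}$ are themselves powers of $\sigma_B$, so one can simply solve:
\[
M^{1/\alpha}=\sigma_B^{(\alpha-1)/(2\alpha)}\,\sigma_B^{1/\alpha}\,\sigma_B^{(\alpha-1)/(2\alpha)}=\sigma_B.
\]
Thus $T_{f_\alpha}(\sigma_B^{1-\alpha})^{1/(1-\alpha)}=\sigma_B$ with $\Tr[\sigma_B]=1$, and no commutativity between $M$ and $\sigma_B$ ever needs to be extracted. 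Your route via ALT equality would also work if that equality condition is available, but it is not needed and not what the paper does.
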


\begin{proof}
Let $\sigma \in \mathcal{B}_{d,++}$ such that $\Tr\left[\sigma\right]\leq 1$. 
Let $U = T_{F}\left(Q^{1-\alpha}\right)^{\alpha/(1-\alpha)}$ and $V = Q^{1-\alpha}$.
Then, both $U$ and $V$ are positive definite, and we have 
\begin{align}
    \Tr \left[ UV \right]
    = \Tr\left[ \left(\sum_{j=1}^n\frac{ A_j^\alpha}{ \Tr\left[ A_j^\alpha Q^{1-\alpha} \right] }\right) Q^{1-\alpha} \right]
    =1.\nonumber
\end{align}
Then, we write 
    \begin{align}
        \Tr\left[\left(T_{F}\left(Q^{1-\alpha}\right)\right)^{1/(1-\alpha)}\right]
        &= \Tr\left[ U^{ 1/\alpha }\right]\\
        &= \Tr\left[ \left( V^{ 1/(2\alpha)} U^{ 1/\alpha } V^{ 1/(2\alpha) }\right) V^{ - 1/\alpha } \right]\\
        &\leq\Tr\left[ \left( V^{ 1/(2\alpha) } U^{ 1/\alpha } V^{ 1/(2\alpha) }\right)^\alpha \right]^{ 1/\alpha }  \Tr\left[ \left( V^{ -1/\alpha } \right)^{\alpha/(\alpha-1)} \right]^{ 1- 1/\alpha } \\
        &=\Tr\left[ \left( V^{  1/(2\alpha) } U^{1/\alpha } V^{ 1/(2\alpha) }\right)^\alpha \right]^{ 1/\alpha }  \Tr\left[ Q \right]^{ 1- 1/\alpha }\\
        &\leq \Tr\left[ \left( V^{ 1/(2\alpha)} U^{ 1/\alpha } V^{ 1/(2\alpha) }\right)^\alpha \right]^{ 1/\alpha },
    \end{align}
    where the first inequality follows from the H\"older inequality (Lemma \ref{ineq:HolderEq}), 
    the third equality follows from the definition of $V$,
    and the last inequality follows from the assumption that $\Tr[Q] \leq 1$.

    Then, by the Araki-Lieb-Thirring inequality (Lemma \ref{ineq:Araki}), we have 
    \begin{align}
        \Tr\left[ \left( V^{ 1/(2\alpha) } U^{ 1/\alpha } V^{ 1/(2\alpha) }\right)^\alpha \right] 
        \leq \Tr\left[ V^{1/2 } U V^{ 1/2 }\right]
        = \Tr\left[UV\right]= 1.
    \end{align} 
    Therefore, we conclude that 
    \begin{align}
        \Tr\left[\left(T_{F}\left(Q^{1-\alpha}\right)\right)^{1/(1-\alpha)}\right]\leq 1.
    \end{align}
    
    We proceed to prove the if and only if condition. 
    Note that the ``if'' direction holds trivially. 
    It remains to prove the ``only if'' direction. 
    Suppose that $\Tr\left[T_{F}\left(Q^{1-\alpha}\right)^{1/(1-\alpha)}\right]=1$.
    Let $U$ and $V$ be defined as above. 
    Recall that we have proved
    \begin{align}
        \Tr\left[T_{F}\left(Q^{1-\alpha}\right)^{1/(1-\alpha)}\right]
        &= \Tr\left[ \left( V^{ 1/(2\alpha)} U^{ 1/\alpha } V^{ 1/(2\alpha) }\right) V^{ - 1/\alpha } \right]\\
        &\leq\Tr\left[ \left( V^{ 1/(2\alpha) } U^{ 1/\alpha } V^{ 1/(2\alpha) }\right)^\alpha \right]^{ 1/\alpha }  \Tr\left[ \left( V^{ -1/\alpha } \right)^{\alpha/(\alpha-1)} \right]^{ 1- 1/\alpha } \\
        &=\Tr\left[ \left( V^{ 1/(2\alpha) } U^{ 1/\alpha } V^{ 1/(2\alpha) }\right)^\alpha \right]^{ 1/\alpha }  \Tr\left[ Q \right]^{ 1- 1/\alpha },
    \end{align}
    and
    \begin{align}
        \Tr\left[ \left( V^{ 1/(2\alpha) } U^{ 1/\alpha } V^{ 1/(2\alpha) }\right)^\alpha \right]\leq 1.
    \end{align}
    Since we have assumed that $\Tr\left[T_{F}\left(Q^{1-\alpha}\right)^{1/(1-\alpha)}\right]=1$ and $\Tr\left[ Q \right]\leq 1$,
    it must be the case that
    \begin{align}
        \Tr\left[ \left( V^{ 1/(2\alpha) } U^{ 1/\alpha } V^{ 1/(2\alpha) }\right)^\alpha \right]
        =\Tr\left[ \left( V^{ -1/\alpha } \right)^{\alpha/(\alpha-1)} \right]
        =\Tr[Q]
        =1,
    \end{align}
    and
    \begin{align}
        &\Tr\left[ \left( V^{ 1/(2\alpha)} U^{ 1/\alpha } V^{ 1/(2\alpha) }\right) V^{ - 1/\alpha } \right]\\
        &\quad=\Tr\left[ \left( V^{ 1/(2\alpha) } U^{ 1/\alpha } V^{ 1/(2\alpha) }\right)^\alpha \right]^{ 1/\alpha }  \Tr\left[ \left( V^{ -1/\alpha } \right)^{\alpha/(\alpha-1)} \right]^{ 1- 1/\alpha }.
    \end{align}    
    Using the equality condition of the H\"{o}lder inequality (Lemma \ref{ineq:HolderEq}), the above equality implies
    \begin{align}
        \frac{\left( V^{ 1/(2\alpha) } U^{ 1/\alpha } V^{ 1/(2\alpha) }\right)^\alpha}{\Tr\left[\left( V^{ 1/(2\alpha) } U^{ 1/\alpha } V^{ 1/(2\alpha) }\right)^\alpha\right]}
        =\frac{\left( V^{ -1/\alpha } \right)^{\alpha/(\alpha-1)}}{\Tr\left[ \left( V^{ -1/\alpha } \right)^{\alpha/(\alpha-1)} \right]},
    \end{align}
    where the denominators on both sides, as concluded above, are equal to $1$.
    Therefore, we have  
    \[
    \left( V^{ 1/(2\alpha) } U^{ 1/\alpha } V^{ 1/(2\alpha) }\right)^\alpha=\left( V^{ -1/\alpha } \right)^{\alpha/(\alpha-1)} . 
    \]
    Plugging in the definitions of $U$ and $V$, we get 
    \[
    T_{F}\left(Q^{1-\alpha}\right)^{1/(1-\alpha)}=Q . 
    \]
    This completes the proof. 
\end{proof}

Next, using Lemma \ref{ineq:TraceBound}, we prove that the function values are non-increasing , as stated in Lemma \ref{ineq:FvalDecrease}.
\begin{lemma}[{Monotonicity of the Function Value}]
    \label{ineq:FvalDecrease}
    For any $\alpha\in(1,\infty)$ and $Q\in\mathcal{B}_{d,++}$ such that $\Tr\left[Q\right]\leq 1$, we have  
    \begin{align}
        F\left(T_{F}\left(Q^{1-\alpha}\right)^{1/(1-\alpha)}\right)
        \leq F(Q).
    \end{align}
    Moreover, equality holds if and only if $Q^{1 - \alpha}$ is a fixed point of $T_{F}$.
\end{lemma}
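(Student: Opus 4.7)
The plan is to reduce the claim to Jensen's inequality combined with Lemma \ref{ineq:TraceBound}. Denote the updated iterate by $\sigma_B' \coloneqq T_{f_\alpha}(\sigma_B^{1-\alpha})^{1/(1-\alpha)}$, so $(\sigma_B')^{1-\alpha} = T_{f_\alpha}(\sigma_B^{1-\alpha})$. First I would unfold the definition of $f_\alpha$ to rewrite the difference as
\[
f_\alpha(\sigma_B')-f_\alpha(\sigma_B) \;=\; \frac{1}{\alpha-1}\sum_{x\in\mathcal X} P_X(x)\log r_x,\qquad r_x \coloneqq \frac{\Tr\!\left[(\rho_B^x)^\alpha (\sigma_B')^{1-\alpha}\right]}{\Tr\!\left[(\rho_B^x)^\alpha \sigma_B^{1-\alpha}\right]}.
\]
Because $1/(\alpha-1) > 0$, it is enough to establish $\sum_x P_X(x)\log r_x \leq 0$.

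The two key steps are then a Jensen step and a one-line algebraic identity. By concavity of $\log$,
\[
\sum_x P_X(x)\log r_x \;\leq\; \log\sum_x P_X(x) r_x \;=\; \log\Tr\!\left[(\sigma_B')^{1-\alpha}\,\mathbb{E}_{P_X}\!\left[\frac{(\rho_B^X)^\alpha}{\Tr\!\left[(\rho_B^X)^\alpha \sigma_B^{1-\alpha}\right]}\right]\right].
\]
By the definition of $T_{f_\alpha}$, the inner expectation equals $T_{f_\alpha}(\sigma_B^{1-\alpha})^{\alpha/(1-\alpha)} = \bigl((\sigma_B')^{1-\alpha}\bigr)^{\alpha/(1-\alpha)} = (\sigma_B')^\alpha$, so the trace collapses to $\Tr[(\sigma_B')^{1-\alpha}(\sigma_B')^\alpha] = \Tr[\sigma_B']$. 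Invoking Lemma \ref{ineq:TraceBound} then gives $\Tr[\sigma_B']\leq 1$, so the logarithm is nonpositive and the monotonicity inequality follows.

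For the equality clause, the ``if'' direction is immediate: if $T_{f_\alpha}(\sigma_B^{1-\alpha}) = \sigma_B^{1-\alpha}$, then $\sigma_B'=\sigma_B$ and both function values coincide. For the ``only if'' direction, equality in $f_\alpha(\sigma_B') \leq f_\alpha(\sigma_B)$ forces the full chain $0 = \sum_x P_X(x)\log r_x \leq \log\Tr[\sigma_B'] \leq 0$ to be saturated, so in particular $\Tr[\sigma_B'] = 1$; the equality clause of Lemma \ref{ineq:TraceBound} then yields $\sigma_B' = \sigma_B$, which is precisely $T_{f_\alpha}(\sigma_B^{1-\alpha}) = \sigma_B^{1-\alpha}$.

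Essentially nothing in this argument is delicate: the only place that requires genuine care is the algebraic identification of the inner expectation with $(\sigma_B')^\alpha$ via the exponent $(1-\alpha)/\alpha$ appearing in the definition of $T_{f_\alpha}$; once that collapse is made, Jensen and Lemma \ref{ineq:TraceBound} do all the remaining work.
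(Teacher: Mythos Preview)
Your proof is correct and follows essentially the same route as the paper. The paper also writes the difference $f_\alpha(\sigma_B')-f_\alpha(\sigma_B)$ as $\frac{1}{\alpha-1}\mathbb{E}_{P_X}[\log r_X]$, bounds the logarithm (using $\log x\le x-1$ pointwise rather than Jensen), collapses $\mathbb{E}_{P_X}[r_X]$ to $\Tr[\sigma_B']$ via the same exponent identity, and finishes with Lemma~\ref{ineq:TraceBound}; your treatment of the equality clause is in fact slightly more explicit than the paper's.
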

\begin{proof}
    Let $Q \in \mathcal{B}_{d,++}$ such that $\Tr\left[Q\right]\leq 1$. 
    We write
    \begin{align}
        &F\left(T_{F}\left(Q^{1-\alpha}\right)^{1/(1-\alpha)}\right)- F(Q)\nonumber\\
        &\quad =\frac{1}{\alpha-1}\sum_{j=1}^n w[j]\log\left(\frac{\Tr\left[A_j^{\alpha}T_{F}\left(Q^{1-\alpha}\right)\right]}{\Tr\left[A_j^{\alpha}Q^{1-\alpha}\right]}\right)\nonumber\\
        &\quad\leq\frac{1}{\alpha-1}\sum_{j=1}^n w[j]\left(\frac{\Tr\left[A_j^{\alpha}T_{F}\left(Q^{1-\alpha}\right)\right]}{\Tr\left[A_j^{\alpha}Q^{1-\alpha}\right]}-1\right)\nonumber\\
        &\quad =\frac{1}{\alpha-1}\left(\Tr\left[T_{F}\left(\sigma^{1-\alpha}\right)^{\alpha/(1-\alpha)}T_{F}\left(\sigma^{1-\alpha}\right)\right]-1\right),\nonumber 
    \end{align}
    where the first inequality exploits the fact that $\log x \leq x - 1$, 
    and 
    the second equality follows from the definition of the operator $T_{F}$.
    The lemma then follows from Lemma \ref{ineq:TraceBound}. 
\end{proof}

Finally, we prove Lemma \ref{eq:FixPtIsMin} by showing that the unique fixed point of the operator $T_{F}$ is also the minimizer of the optimization problem~\eqref{def:PetzAugMean}.
\begin{proof}(Lemma \ref{eq:FixPtIsMin})
    By Lemma \ref{eq:ThompIsMetric},  Lemma \ref{ineq:Contract}, and the Banach fixed point theorem, there exists a unique $\tilde{Q}_{\star}\in\mathcal{B}_{d,++}$ such that $\tilde{Q}_{\star}^{1-\alpha}$ is the fixed point of the operator $T_{F}$ for $\alpha\in(1/2,1)\cup(1,\infty)$.
    We recall that for $\alpha\in(0,1)$, Lemma \ref{eq:FixPtIsMin} has already been proved by \citet[Proposition 2(b)]{Cheng2019}.
    For $\alpha\in(1,\infty)$, let $Q_{\star}$ be the minimizer of the optimization problem~\eqref{def:PetzAugMean}.
    Suppose that $Q_{\star}^{1-\alpha}$ is not the fixed point of $T_{F}$.
    Then, the equality conditions in Lemma \ref{ineq:TraceBound} and Lemma \ref{ineq:FvalDecrease} do not hold, and we have 
    \begin{align}
        &F\left(\frac{T_{F}\left(Q_{\star}^{1-\alpha}\right)^{1/(1-\alpha)}}{\Tr\left[T_{F}\left(Q_{\star}^{1-\alpha}\right)^{1/(1-\alpha)}\right]}\right)\\
        &\quad=F\left(T_{F}\left(Q_{\star}^{1-\alpha}\right)^{1/(1-\alpha)}\right)+\log\left(\Tr\left[T_{F}\left(Q_{\star}^{1-\alpha}\right)^{1/(1-\alpha)}\right]\right)\\
        &\quad<F\left(Q_{\star}\right).
    \end{align}
    This inequality contradicts the optimality of $Q_{\star}$.
    Therefore, we conclude that $Q_{\star}=\tilde{Q}_{\star}$.
\end{proof}
\begin{remark}
    We note that the fixed-point property proven by \citet[Proposition 4(c)]{Cheng2018}\footnote{Note that this result does not appear in the journal version \citep{Cheng2022}.} can also lead to the conclusion in Lemma \ref{eq:FixPtIsMin}.
    However, our proof strategy differs from theirs.
    Furthermore, for $\alpha > 1$, our proof strategy yields an additional useful property for implementing the proposed iteration rule: the function values are non-increasing along the iteration path (Lemma \ref{ineq:FvalDecrease}).
\end{remark}


\subsubsection{Preservation of Thompson Metric under Trace-Normalization}
\label{subsec:NormalStable}
The constraint set of the optimization problem~\eqref{def:PetzAugMean} is the set of quantum states $\mathcal{D}_d$, whereas the traces of the iterates $Q_t$ may not equal $1$.
To address this, we show in Lemma \ref{ineq:SmallNormalThompson} that the Thompson metric is 
preserved
under 
trace-normalization, up to a multiplicative constant of $2$. 

\begin{lemma}
    \label{ineq:SmallNormalThompson}
    Let $\alpha\in(0,1)\cup(1,\infty)$.
    For any $U,V\in\mathcal{B}_{d,++}$ such that $\Tr[V]=1$, we have
    \begin{align}
        d_{\mathrm{T}}\left(V^{1-\alpha},\left(\frac{U}{\Tr[U]}\right)^{1-\alpha}\right)
        \leq 2 d_{\mathrm{T}}\left(V^{1-\alpha},U^{1-\alpha}\right).
    \end{align}
\end{lemma}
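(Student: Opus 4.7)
The plan is to split the left-hand side via a triangle-type step provided by Lemma \ref{ineq:ThompsonLogHomo}, and then control the resulting ``correction'' that encodes the normalization. Writing $(U/\Tr[U])^{1-\alpha} = \Tr[U]^{\alpha-1}\, U^{1-\alpha}$ and applying Lemma \ref{ineq:ThompsonLogHomo} with $r = \Tr[U]^{\alpha-1}$, I would obtain
\begin{align}
d_{\mathrm{T}}\left(V^{1-\alpha},\left(\frac{U}{\Tr[U]}\right)^{1-\alpha}\right)
\leq d_{\mathrm{T}}\left(V^{1-\alpha},U^{1-\alpha}\right) + |1-\alpha|\,|\log \Tr[U]|.
\end{align}
The proof then reduces to the auxiliary estimate $|1-\alpha|\,|\log \Tr[U]| \leq d_{\mathrm{T}}(V^{1-\alpha},U^{1-\alpha})$, which immediately yields the factor of $2$ in the stated bound.

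To establish this auxiliary estimate, I would return to the definition of the Thompson metric. Setting $r' \coloneqq d_{\mathrm{T}}(V^{1-\alpha},U^{1-\alpha})$, Definition \ref{eq:Thompson} gives $e^{-r'} V^{1-\alpha} \leq U^{1-\alpha} \leq e^{r'} V^{1-\alpha}$. The key idea is to take the trace after raising to the power $p = 1/(1-\alpha)$, so that both $U$ and $V$ reappear unaltered. Although $x \mapsto x^p$ is in general not operator monotone, the scalar map $M \mapsto \Tr[M^p]$ on positive-definite matrices is monotone under the positive-semidefinite order: by Weyl's inequality, $A \leq B$ implies $\lambda_i(A) \leq \lambda_i(B)$ for all $i$, hence $\Tr[A^p] \leq \Tr[B^p]$ when $p > 0$ and $\Tr[A^p] \geq \Tr[B^p]$ when $p < 0$. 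With $p = 1/(1-\alpha)$, whose sign matches that of $1-\alpha$, and with $\Tr[V] = 1$, both regimes $\alpha \in (0,1)$ and $\alpha > 1$ collapse to the same symmetric bound
\begin{align}
e^{-r'/|1-\alpha|} \leq \Tr[U] \leq e^{r'/|1-\alpha|},
\end{align}
which is precisely the desired auxiliary estimate.

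The only genuinely subtle point is the sign flip occurring when $\alpha > 1$, in which case $p < 0$ reverses the direction of the trace inequality; however, this reversal is exactly compensated by the sign of $1/(1-\alpha)$, so that the two-sided bound on $\Tr[U]$ retains the same form as in the regime $\alpha \in (0,1)$. I expect this brief case analysis to be the most delicate step, but it presents no essential obstacle; once handled, combining the two displayed inequalities yields the lemma.
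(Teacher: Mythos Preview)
Your proposal is correct and follows essentially the same route as the paper: first apply Lemma \ref{ineq:ThompsonLogHomo} to isolate the term $|1-\alpha|\,|\log\Tr[U]|$, then control $\Tr[U]$ by passing from the PSD inequality $e^{-r'}V^{1-\alpha}\le U^{1-\alpha}\le e^{r'}V^{1-\alpha}$ to eigenvalue inequalities (the paper cites this as Lemma \ref{ineq:Lidskii}, you call it Weyl's inequality) and summing the resulting bounds on $\lambda_i(U^{1-\alpha})^{1/(1-\alpha)}$. The only cosmetic difference is that you make the sign-flip case analysis for $\alpha>1$ explicit, whereas the paper absorbs it directly into the absolute value $|1-\alpha|$.
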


We will use 
Lemma \ref{ineq:Lidskii} to prove Lemma \ref{ineq:SmallNormalThompson}. 

\begin{lemma}[{\citep[Corollary 7.7.4(c)]{Horn2013}}]
    \label{ineq:Lidskii}
    For any $U,V\in\mathcal{B}_{d,++}$ such that $U\leq V$, we have
    \begin{align}
        \lambda_i(U)\leq\lambda_i(V),\quad\forall i\in\set{1,2,\dots,d}.
    \end{align}
\end{lemma}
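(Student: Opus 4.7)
The plan is to deduce the eigenvalue monotonicity directly from the Courant--Fischer variational characterization of the spectrum of a Hermitian matrix. With the decreasing ordering $\lambda_1(A)\geq\lambda_2(A)\geq\cdots\geq\lambda_d(A)$ adopted in the paper, every Hermitian $A\in\mathcal{B}(\mathcal{H}_B)$ obeys
\begin{align}
    \lambda_i(A)=\max_{\substack{S\subseteq\mathcal{H}_B\\\dim S=i}}\min_{\substack{x\in S\\\|x\|=1}}\langle x,Ax\rangle
\end{align}
for each $i\in\{1,\dots,d\}$, where the outer maximum ranges over all $i$-dimensional complex subspaces of $\mathcal{H}_B$. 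This classical identity will be invoked without further justification.

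I would then translate the Loewner inequality $U\leq V$ into its quadratic-form content, namely $\langle x,Ux\rangle\leq\langle x,Vx\rangle$ for every $x\in\mathcal{H}_B$. Fixing any $i$-dimensional subspace $S$ and taking the infimum over unit vectors in $S$ preserves the pointwise inequality, so
\begin{align}
    \min_{\substack{x\in S\\\|x\|=1}}\langle x,Ux\rangle\leq\min_{\substack{x\in S\\\|x\|=1}}\langle x,Vx\rangle.
\end{align}
Since this holds uniformly in $S$, taking the supremum over all $i$-dimensional subspaces on both sides and applying the Courant--Fischer identity above yields $\lambda_i(U)\leq\lambda_i(V)$, as required.

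There is essentially no obstacle of substance here: once Courant--Fischer is in hand, the argument is a one-line monotonicity of a max-min. The only care needed is to align the paper's decreasing-order convention with the correct max-min (rather than min-max) formulation of the $i$-th eigenvalue, so that the direction of the inequality propagates correctly through both the inner $\min$ and the outer $\max$. I note in passing that the strict positive-definiteness assumption $U,V\in\mathcal{B}(\mathcal{H}_B)_{++}$ in the statement is not actually used beyond the Loewner comparison $U\leq V$; the argument applies verbatim to any Hermitian pair with $U\leq V$, which is why the cited reference \cite[Corollary 7.7.4(c)]{Horn2013} states it in that greater generality.
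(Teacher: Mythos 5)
Your argument is correct: with the decreasing-order convention, the Courant--Fischer max--min characterization combined with the quadratic-form meaning of $U\leq V$ immediately gives $\lambda_i(U)\leq\lambda_i(V)$, and your observation that strict positive definiteness plays no role beyond the Loewner comparison is also accurate. Note that the paper offers no proof of this lemma at all---it is simply imported as Corollary 7.7.4(c) of Horn and Johnson---so your proposal supplies the standard textbook argument (Weyl's monotonicity theorem) that underlies the cited reference rather than deviating from anything in the paper.
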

\begin{proof}(Lemma \ref{ineq:SmallNormalThompson})
    Let $U,V$ be as defined in Lemma \ref{ineq:SmallNormalThompson}.
    By Lemma \ref{ineq:ThompsonLogHomo}, we write 
    \begin{align}
        d_{\mathrm{T}}\left(V^{1-\alpha},\left(\frac{U}{\Tr[U]}\right)^{1-\alpha}\right)\leq d_{\mathrm{T}}\left(V^{1-\alpha},U^{1-\alpha}\right)+|\alpha-1|\cdot\left|\log\left(\Tr[U]\right)\right|.
    \end{align}
    It remains to bound the quantity $\Tr[U]$, which can be written as follows:
    \begin{align}
        \Tr[U]
        =\sum_{i=1}^{d}\left(\lambda_i(U)^{1-\alpha}\right)^{1/(1-\alpha)}
        =\sum_{i=1}^{d}\left(\lambda_i(U^{1-\alpha})\right)^{1/(1-\alpha)}.
    \end{align}
    On the other hand, by Lemma \ref{ineq:Lidskii}, for each $i\in\set{1,2,\dots,d}$, we have
    \begin{align}
        \exp\left(- d_{\mathrm{T}}
        \left(V^{1-\alpha},U^{1-\alpha}\right)\right)\lambda_i\left(V^{1-\alpha}\right)
        \leq \lambda_i\left(U^{1-\alpha}\right)
        \leq\exp\left( d_{\mathrm{T}}
        \left(V^{1-\alpha},U^{1-\alpha}\right)\right)\lambda_i\left(V^{1-\alpha}\right).
    \end{align}
    Consequently, we obtain
    \begin{align}
        \exp\left( \frac{-d_{\mathrm{T}}
        \left(V^{1-\alpha},U^{1-\alpha}\right)}{|1-\alpha|}\right)\Tr[V]
        \leq\Tr[U]
        \leq \exp\left( \frac{d_{\mathrm{T}}
        \left(V^{1-\alpha},U^{1-\alpha}\right)}{|1-\alpha|}\right)\Tr[V].
    \end{align}
    Since we assume that $\Tr[V]=1$, it follows that
    \begin{align}
        \abs{\log\left(\Tr[U]\right)}
        \leq\left|\frac{1}{\alpha-1}\right|d_{\mathrm{T}}
        \left(V^{1-\alpha},U^{1-\alpha}\right).
    \end{align}
    This concludes the proof.
\end{proof}

\subsubsection{Bounding Variation in Function Values}
\label{subsec:ControlFval}
Finally, it remains to translate the convergence guarantee of the iterates into that of the function values.
We prove that the difference between the function values is bounded above by the Thompson metric.
\begin{lemma}
    \label{ineq:SmallThomp2SmallOptError}
    Let $\alpha\in(0,1)\cup(1,\infty)$.
    For any $U,V\in\mathcal{B}_{d,++}$, we have
    \begin{align}
        F\left(U\right)-F\left(V\right)\leq \left|\frac{1}{\alpha-1}\right|d_{\mathrm{T}}\left(V^{1-\alpha},U^{1-\alpha}\right).
    \end{align}
\end{lemma}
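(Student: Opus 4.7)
The plan is to unpack the definition of $f_\alpha$ into the log-trace form and then convert operator inequalities coming from the Thompson metric into scalar inequalities by testing against $(\rho_B^X)^\alpha \geq 0$.

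First, I would write the difference explicitly. By the definitions of $f_\alpha$ and the Petz-R\'enyi divergence,
\begin{align}
    f_\alpha(U) - f_\alpha(V) = \frac{1}{\alpha-1}\, \mathbb{E}_{P_X}\left[\log \frac{\operatorname{Tr}\!\left[(\rho_B^X)^\alpha U^{1-\alpha}\right]}{\operatorname{Tr}\!\left[(\rho_B^X)^\alpha V^{1-\alpha}\right]}\right].
\end{align}
Both traces are finite and strictly positive because $U, V \in \mathcal{B}(\mathcal{H}_B)_{++}$ and $\rho_B^X \neq 0$, so the ratio and its logarithm are well-defined.

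Next, I would invoke Definition \ref{eq:Thompson} applied to $U^{1-\alpha}$ and $V^{1-\alpha}$. Writing $d := d_{\mathrm{T}}(V^{1-\alpha}, U^{1-\alpha})$, we have the operator sandwich $e^{-d}\, V^{1-\alpha} \leq U^{1-\alpha} \leq e^{d}\, V^{1-\alpha}$. Multiplying by $(\rho_B^X)^\alpha \geq 0$ and taking the trace preserves these inequalities, yielding
\begin{align}
    e^{-d}\, \operatorname{Tr}\!\left[(\rho_B^X)^\alpha V^{1-\alpha}\right] \leq \operatorname{Tr}\!\left[(\rho_B^X)^\alpha U^{1-\alpha}\right] \leq e^{d}\, \operatorname{Tr}\!\left[(\rho_B^X)^\alpha V^{1-\alpha}\right],
\end{align}
pointwise in $X$. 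Taking the logarithm of the ratio gives the scalar bound
\begin{align}
    -d \leq \log \frac{\operatorname{Tr}\!\left[(\rho_B^X)^\alpha U^{1-\alpha}\right]}{\operatorname{Tr}\!\left[(\rho_B^X)^\alpha V^{1-\alpha}\right]} \leq d,
\end{align}
and hence the same bound for the expectation over $P_X$.

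Finally, I would split on the sign of $\alpha - 1$. When $\alpha > 1$, the prefactor $\frac{1}{\alpha-1}$ is positive, so the upper bound by $d$ yields $f_\alpha(U) - f_\alpha(V) \leq \frac{d}{\alpha-1} = \bigl|\tfrac{1}{\alpha-1}\bigr| d$. When $\alpha \in (0,1)$, the prefactor is negative, so I would instead use the lower bound $-d$ on the log to get $f_\alpha(U) - f_\alpha(V) \leq \frac{-d}{\alpha-1} = \bigl|\tfrac{1}{\alpha-1}\bigr| d$. Either way, the claimed inequality follows. There is no genuine obstacle here; the only thing to be careful about is the sign handling in the two cases, which is why the result is stated with an absolute value on $1/(\alpha-1)$ rather than a signed quantity.
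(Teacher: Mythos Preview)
Your proof is correct and follows essentially the same route as the paper's: both arguments start from the operator sandwich $e^{-d}V^{1-\alpha}\le U^{1-\alpha}\le e^{d}V^{1-\alpha}$ given by Definition~\ref{eq:Thompson}, pass to trace inequalities against $(\rho_B^X)^\alpha\ge 0$, and then split on the sign of $\alpha-1$. The only cosmetic difference is that you write the difference $f_\alpha(U)-f_\alpha(V)$ as a single log-ratio before bounding, whereas the paper bounds $f_\alpha(U)$ directly and subtracts.
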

\begin{proof}(Lemma \ref{ineq:SmallThomp2SmallOptError})
    Let $U,V$ be as defined in Lemma \ref{ineq:SmallThomp2SmallOptError}.
    By Definition \ref{eq:Thompson}, we have
    \begin{align}
        \exp\left(- d_{\mathrm{T}}
        \left(V^{1-\alpha},U^{1-\alpha}\right)\right)V^{1-\alpha}
        \leq U^{1-\alpha}
        \leq\exp\left( d_{\mathrm{T}}
        \left(V^{1-\alpha},U^{1-\alpha}\right)\right)V^{1-\alpha}.
    \end{align}
    Therefore, we write
    \begin{align}
        F\left(U\right)
        &=\frac{1}{\alpha-1}\sum_{j=1}^n w[j]\log\operatorname{Tr}\left[ A_j^{\alpha} U^{1-\alpha}\right]\nonumber\\
        &\leq
        \frac{1}{\alpha-1}\sum_{j=1}^n w[j]\log\operatorname{Tr}\left[ A_j^{\alpha}\left(\exp\left( d_{\mathrm{T}}\left(V^{1-\alpha},U^{1-\alpha}\right)\right)V^{1-\alpha}\right)\right]\nonumber\\
        &=F\left(V\right)+\frac{1}{\alpha-1}d_{\mathrm{T}}\left(V^{1-\alpha},U^{1-\alpha}\right)\nonumber
    \end{align}
    for $\alpha>1$, and
    \begin{align}
        F\left(U\right)
        &\leq
        \frac{1}{\alpha-1}\sum_{j=1}^n w[j]\log\operatorname{Tr}\left[ A_j^{\alpha}\left(\exp\left(- d_{\mathrm{T}}\left(V^{1-\alpha},U^{1-\alpha}\right)\right)V^{1-\alpha}\right)\right]\nonumber\\
        &=F\left(V\right)+\frac{1}{1-\alpha} d_{\mathrm{T}}\left(V^{1-\alpha},U^{1-\alpha}\right)\nonumber
    \end{align}
    for $\alpha<1$.
    This concludes the proof.
    
\end{proof}

\subsubsection{Proof of the Main Theorem}
\label{subsec:PfMainThm}
\begin{proof}(Theorem \ref{thm:PetzConv})
    Let $Q_{\star}$ and $Q_t$ be defined as in Theorem \ref{thm:PetzConv}.
    By Lemma \ref{ineq:Contract}, Lemma \ref{eq:FixPtIsMin}, and induction, we write
    \begin{align}
        d_{\mathrm{T}}\left(Q_{\star}^{1-\alpha},Q_{T+1}^{1-\alpha}\right) 
        \leq\left|1-\frac{1}{\alpha}\right|^T d_{\mathrm{T}}\left(Q_{\star}^{1-\alpha},Q_1^{1-\alpha}\right).
    \end{align}
    Consequently, by Lemma \ref{ineq:SmallNormalThompson}, we obtain
    \begin{align}
        d_{\mathrm{T}}\left(Q_{\star}^{1-\alpha},\left(\frac{Q_{T+1}}{\Tr\left[Q_{T+1}\right]}\right)^{1-\alpha}\right) 
        \leq 2\left|1-\frac{1}{\alpha}\right|^T d_{\mathrm{T}}\left(Q_{\star}^{1-\alpha},Q_1^{1-\alpha}\right).
    \end{align}
    By Lemma \ref{ineq:SmallThomp2SmallOptError}, it follows that
    \begin{align}
        F\left(\frac{Q_{T+1}}{\Tr\left[Q_{T+1}\right]}\right)-F\left(Q_{\star}\right)
        &\leq \abs{\frac{1}{\alpha-1}}d_{\mathrm{T}}\left(Q_{\star}^{1-\alpha},\left(\frac{Q_{T+1}}{\Tr\left[Q_{T+1}\right]}\right)^{1-\alpha}\right) \nonumber\\
        &\leq \left|\frac{2}{\alpha-1}\right|\cdot \left|1-\frac{1}{\alpha}\right|^T d_{\mathrm{T}}\left(Q_{\star}^{1-\alpha},Q_1^{1-\alpha}\right).
    \end{align}
    It remains to prove that
    \begin{align}
        F\left(\frac{Q_{t+1}}{\Tr\left[Q_{t+1}\right]}
        \right)
        \leq F\left(\frac{Q_{t}}{\Tr\left[Q_{t}\right]}\right),\quad\forall\alpha>1.
    \end{align} 
    By the monotonicity of the function value (Lemma \ref{ineq:FvalDecrease}) and the bound of the trace (Lemma \ref{ineq:TraceBound}), we write
    \begin{align}
        &F\left(\frac{T_{F}\left(\left(\frac{Q_t}{\Tr\left[Q_t\right]}\right)^{1-\alpha}\right)^{1/(1-\alpha)}}{\Tr\left[T_{F}\left(\left(\frac{Q_t}{\Tr\left[Q_t\right]}\right)^{1-\alpha}\right)^{1/(1-\alpha)}\right]}\right)\\
        &\quad=F\left(T_{F}\left(\left(\frac{Q_t}{\Tr\left[Q_t\right]}\right)^{1-\alpha}\right)^{1/(1-\alpha)}\right)+\log\left(\Tr\left[T_{F}\left(\left(\frac{Q_t}{\Tr\left[Q_t\right]}\right)^{1-\alpha}\right)^{1/(1-\alpha)}\right]\right)\\
        &\quad\leq F\left(\frac{Q_t}{\Tr\left[Q_t\right]}\right)+0,\quad\forall\alpha>1.
    \end{align}
    It remains to prove that
    \begin{align}
        \frac{Q_{t+1}}{\Tr\left[Q_{t+1}\right]}
        = \frac{T_{F}\left(\left(\frac{Q_t}{\Tr\left[Q_t\right]}\right)^{1-\alpha}\right)^{1/(1-\alpha)}}{\Tr\left[T_{F}\left(\left(\frac{Q_t}{\Tr\left[Q_t\right]}\right)^{1-\alpha}\right)^{1/(1-\alpha)}\right]}.
    \end{align}
    Note that for any $Q\in\mathcal{B}_{d,++}$ and $\gamma>0$, we have
    \begin{align}
        T_{F}\left(\left(\gamma Q\right)^{1-\alpha}\right)^{1/(1-\alpha)}
        =\gamma^{(\alpha-1)/\alpha} T_{F}\left(Q^{1-\alpha}\right)^{1/(1-\alpha)},
    \end{align}
    and thus,
    \begin{align}
        \frac{T_{F}\left(\left(\gamma Q\right)^{1-\alpha}\right)^{1/(1-\alpha)}}{\Tr\left[T_{F}\left(\left(\gamma Q\right)^{1-\alpha}\right)^{1/(1-\alpha)}\right]}
        =\frac{T_{F}\left(Q^{1-\alpha}\right)^{1/(1-\alpha)}}{\Tr\left[T_{F}\left(Q^{1-\alpha}\right)^{1/(1-\alpha)}\right]}.
    \end{align}
    This implies that, by taking $Q=Q_t$ and $\gamma=1/\Tr\left[Q_t\right]$,
    \begin{align}
        \frac{Q_{t+1}}{\Tr\left[Q_{t+1}\right]}
        =\frac{T_{F}\left(\left(Q_t\right)^{1-\alpha}\right)^{1/(1-\alpha)}}{\Tr\left[T_{F}\left(\left(Q_t\right)^{1-\alpha}\right)^{1/(1-\alpha)}\right]}
        = \frac{T_{F}\left(\left(\frac{Q_t}{\Tr\left[Q_t\right]}\right)^{1-\alpha}\right)^{1/(1-\alpha)}}{\Tr\left[T_{F}\left(\left(\frac{Q_t}{\Tr\left[Q_t\right]}\right)^{1-\alpha}\right)^{1/(1-\alpha)}\right]}.
    \end{align}
    
    Finally, since $Q_1^{1-\alpha}$ and $Q_{\star}^{1-\alpha}$ are full-rank matrices, and the Thompson metric is a metric on $\mathcal{B}_{d,++}$ (Lemma \ref{eq:ThompIsMetric}), the quantity $d_{\mathrm{T}}\left(Q_{\star}^{1-\alpha},Q_1^{1-\alpha}\right)$ is finite. 
    This concludes the proof.
\end{proof}

\section{Application to Petz Capacity}
\label{sec:Capacity}
This section is devoted to applying our proposed algorithm \eqref{alg:ProposedIter} to develop the first iterative method with a non-asymptotic convergence guarantee for computing the Petz capacity.
Section \ref{subsec:CapacityDef} introduces the problem formulation for computing the Petz capacity, along with its application in classical-quantum channel coding.
In Section \ref{subsec:CapacityIter}, we apply the computation of the Petz-Augustin mean to develop an iterative method for computing the Petz capacity.
Finally, in Section \ref{subsec:CapacityConv}, we analyze the convergence rate of the iterative method.

\subsection{Introduction}
\label{subsec:CapacityDef}
We consider the problem of computing the Petz capacity $C_{\alpha}$ of order $\alpha\in(0,1)$, given by \citep{Cheng2019}
\begin{equation}
    \label{def:Capacity}
    C_{\alpha}
    \coloneqq -\min_{w\in\Delta_{n}}g(w),\quad g(w)
    \coloneqq -\sum_{j=1}^n w[j] D_{\alpha}\left(A_j\Vert Q_{\star}(w)\right), \quad Q_{\star}(w)
    \in \argmin_{Q\in\mathcal{D}_d}\sum_{j=1}^n w[j] D_{\alpha}\left(A_j\Vert Q\right),
\end{equation}
where the matrices $A_j$ are defined as in \eqref{def:PetzAugMean}.
The Petz capacity generalizes the quantum channel capacity and characterizes the optimal error exponent in classical-quantum channel coding, quantifying the exponential rate at which the average error probability vanishes as the length of the message transmitted through the channel increases.
Specifically, although a closed-form expression is generally unavailable, the optimal error exponent $E(R)$ can be bounded as follows \citep{Dalai2013,Dalai2014,Cheng2019,Renes2025}:
\begin{equation}
    \sup_{\alpha\in(1/2,1)}\frac{1-\alpha}{\alpha}\left(C_{\alpha}-R\right)
    \leq E(R)
    \leq \sup_{\alpha\in(0,1)}\frac{1-\alpha}{\alpha}\left(C_{\alpha}-R\right),
\end{equation}
where $R>0$ is the coding rate in classical-quantum channel coding.
The lower and upper bounds coincide when $R\geq C_{1/2}$, making the characterization tight.
These bounds give operational meaning to the computation of the Petz capacity $C_{\alpha}$ of order $\alpha\in(0,1)$.

However, we are not aware of any existing algorithm that computes $C_{\alpha}$ for any $\alpha\in(0,1)$ with a non-asymptotic convergence guarantee.
To this end, in Section \ref{subsec:CapacityIter}, we propose an iterative method \eqref{def:ProposedIterCapacity} for computing $C_{\alpha}$ for $\alpha\in(1/2,1)$, which uses our proposed iteration rule \eqref{alg:ProposedIter} for computing the Petz-Augustin mean as a subroutine.
In Section \ref{subsec:CapacityConv}, we prove that our method \eqref{def:ProposedIterCapacity} converges at a rate of $O\left(\log(n)/T\right)$, representing the first algorithm for computing the Petz capacity with a non-asymptotic convergence guarantee.

\subsection{Iteration Rule}
\label{subsec:CapacityIter}
\paragraph{Roadmap.} This section is devoted to proposing a gradient-based method for computing the Petz capacity $C_{\alpha}$ of order $\alpha\in(1/2,1)$.
Since the zeroth- and first-order oracles, $g(\cdot)$ and $\nabla g(\cdot)$, do not admit closed-form expressions,
we first show that, using Lemma \ref{lemma:Danskin} and Lemma \ref{lemma:ApproxGrad}, together with Theorem \ref{thm:PetzConv}, both $g(w)$ and $\nabla g(w)$ can nevertheless be efficiently computed via the iteration rule \eqref{alg:ProposedIter} (see Remark \ref{remark:ApproxGrad}).
Then, we use Lemma \ref{lemma:HessianBd} and Lemma \ref{lemma:MirrorConv} to motivate our choice of the gradient-based method \eqref{def:ProposedIterCapacity}.

\begin{lemma}(\citep[Eq. (25)]{Cheng2024b})
    \label{lemma:Danskin}
    For any $w\in\mathbb{R}_{++}^n$ and $\alpha\in(0,1)\cup(1,\infty)$, we have 
    \begin{equation}
        \nabla g(w)[j]
        =-D_{\alpha}\left(A_j\Vert Q_{\star}(w)\right),\quad\forall j.
    \end{equation}
\end{lemma}
\begin{lemma}
    \label{lemma:ApproxGrad}
    For any $w\in\mathrm{relint}\left(\Delta_n\right)$, $Q\in\mathcal{B}_{d,++}$, and $\alpha\in(0,1)\cup(1,\infty)$, let $\hat{g}_w=-\sum_{j=1}^n w[j] D_{\alpha}\left(A_j\Vert Q\right)$ and define $\hat{\nabla}\in\mathbb{R}^n$ by $\hat{\nabla}[j]\coloneqq-D_{\alpha}\left(A_j\Vert Q\right)$ for all $j$.
    Then, we have
    \begin{equation}
        \abs{\hat{g}_w-g(w)}
        \leq \abs{\frac{1}{1-\alpha}}d_{\mathrm{T}}\left(Q_{\star}(w)^{1-\alpha},Q^{1-\alpha}\right),
    \end{equation}
    and 
    \begin{equation}
        \abs{\hat{\nabla}[j]-\nabla g(w)[j]}
        \leq \abs{\frac{1}{1-\alpha}}d_{\mathrm{T}}\left(Q_{\star}(w)^{1-\alpha},Q^{1-\alpha}\right),\quad\forall j.
    \end{equation}
\end{lemma}
The proof of Lemma \ref{lemma:ApproxGrad}, which is similar to that of Lemma \ref{ineq:SmallThomp2SmallOptError}, is deferred to Appendix \ref{sec:ApproxGradProof}.
\begin{remark}
    \label{remark:ApproxGrad}
    By Lemma \ref{lemma:ApproxGrad} and Theorem \ref{thm:PetzConv}, for any \( w \in \mathrm{relint}\left(\Delta_n\right) \) and $\alpha\in(1/2,1)\cup(1,\infty)$, to compute \( g(w) \) and \( \nabla g(w) \) with error smaller than \( \epsilon  \), it suffices to compute \( Q_{T+1} \)  using the iteration rule \eqref{alg:ProposedIter} for \( T = O\left(\log(1/\epsilon)\right) \), and then compute \( \hat{g}_w \) and \( \hat{\nabla} \) by setting \( Q = Q_{T+1}/\Tr\left[Q_{T+1}\right] \), as defined in Lemma \ref{lemma:ApproxGrad}.
\end{remark}

To identify a suitable gradient-based method, we bound the Hessian of $g(w)$ in Lemma \ref{lemma:HessianBd}, which establishes that $g(w)$ is $1$-smooth relative to the negative Shannon entropy $h(w)=\inner{w}{\log(w)}$ \citep{Bauschke2017,Lu2018,Birnbaum2011}, a generalization of standard $L$-smoothness in the convex optimization literature.
\begin{lemma}
    \label{lemma:HessianBd}
    Let $h(w)=\inner{w}{\log(w)}$.
    For any $\alpha\in\left[1/2,1\right)$, we have 
    \begin{equation}
        \nabla^2 g(w)
        \leq \nabla^2 h(w)
        =\mathrm{Diag}(w)^{-1},\forall w\in\mathrm{relint}\left(\Delta_n\right).
    \end{equation}
\end{lemma}
The proof of Lemma \ref{lemma:HessianBd} is deferred to Appendix \ref{sec:HessianBdProof}.
This relative smoothness of $g$, along with Lemma \ref{lemma:MirrorConv}, justifies the effectiveness of our chosen method \eqref{def:ProposedIterCapacity}, namely entropic mirror descent, which is an instance of the Bregman proximal method with reference function $h(w)=\inner{w}{\log(w)}$ \citep[Algorithm 1]{Lu2018}.
\begin{lemma}(\citep[Theorem 3.1]{Lu2018})
    \label{lemma:MirrorConv}
    Consider a generic convex optimization problem:
    \begin{equation}
        \hat{w}_{\star}
        \in \argmin_{\hat{w} \in \mathcal{X}} \hat{g}(\hat{w}),
    \end{equation}
    where \( \hat{g} \) is twice differentiable on \( \mathrm{relint}(\mathcal{X}) \).
    Suppose a convex, twice differentiable reference function \( \hat{h} \) is chosen such that
    \begin{align}
        \nabla^2 \hat{g}(\hat{w})
        \leq \nabla^2 \hat{h}(\hat{w}),\quad\forall \hat{w}\in\mathrm{relint}\left(\mathcal{X}\right).
    \end{align}
    Let the sequence \( \{ \hat{w}_t \}_{t \in \mathbb{N}} \) be recursively defined by
    \begin{equation}
        \hat{w}_{t+1}
        \in \argmin_{\hat{w} \in \mathcal{X}} \hat{g}(\hat{w}_t) + \langle \nabla \hat{g}(\hat{w}_t), \hat{w} - \hat{w}_t \rangle + B_{\hat{h}}(\hat{w} \| \hat{w}_t),
    \end{equation}
    with \( \hat{w}_1 \in \mathrm{relint}(\mathcal{X}) \).
    Then, for any \( T \in \mathbb{N} \), we have
    \begin{equation}
        \hat{g}(\hat{w}_{T+1}) - \hat{g}(\hat{w}_{\star})
        \leq \frac{B_{\hat{h}}(\hat{w}_{\star} \| \hat{w}_1)}{T}.
    \end{equation}
\end{lemma}

Below, we present our proposed iterative method for computing the Petz capacity of order $\alpha\in(1/2,1)$, as defined in \eqref{def:Capacity}. 
The method performs entropic mirror descent as its outer loop, with each iteration invoking the iteration rule \eqref{alg:ProposedIter} as a subroutine.
\begin{itemize}
    \item Let $w_1=\mathbf{1}_n/n$ and define $h(w)=\inner{w}{\log(w)}$ for all $w\in\mathbb{R}_{+}^n$.
    \item For every $t\in\mathbb{N}$,  compute
    \begin{equation}
        \label{def:ProposedIterCapacity}
        w_{t+1}
        \in\argmin_{w\in\Delta_{n}}g(w_t)+\inner{\nabla g(w_t)}{w-w_t}+B_h(w\Vert w_t),
    \end{equation}
    or equivalently \citep[Eq. (11)]{He2024a},
    \begin{equation}
        w_{t+1}
        =\frac{w_t\odot\exp\left(-\nabla g(w_t)\right)}{\inner{w_t}{\exp\left(-\nabla g(w_t)\right)}},
    \end{equation}
    and output $\left(w_{t+1},g(w_{t+1})\right)$,
    where $\nabla g(w_t)$ and $g(w_{t+1})$ are computed via the iteration rule \eqref{alg:ProposedIter} (see Remark \ref{remark:ApproxGrad}).
\end{itemize}

\subsection{Convergence Analysis}
\label{subsec:CapacityConv}
\begin{theorem}
    \label{thm:ProposedCapacityConv}
    For any $\alpha\in(1/2,1)$, let $\Set{w_t}_{t\in\mathbb{N}}$ be the sequence of iterates generated by the iteration rule \eqref{def:ProposedIterCapacity}.
    Then, we have
    \begin{equation}
        g(w_{T+1})-g(w_{\star})
        \leq \frac{\log(n)}{T},
    \end{equation}
    where 
    \begin{equation}
        w_{\star}
        \in\argmin_{w\in\Delta_n}g(w).
    \end{equation}
\end{theorem}
\begin{proof}(Theorem \ref{thm:ProposedCapacityConv})
Let $w_t$ and $w_{\star}$ be defined as in Theorem \ref{thm:ProposedCapacityConv}, and let $h(w)=\inner{w}{\log(w)}$.
By Lemma \ref{lemma:HessianBd}, Lemma \ref{lemma:MirrorConv}, and the convexity of $g$ \citep[Prop. 5(b)]{Cheng2022}, we obtain \citep[Theorem 3.1]{Lu2018}
\begin{align}
    g(w_{T+1})-g(w_{\star})
    &\leq\frac{B_h(w_{\star}\Vert w_1)}{T}.
\end{align}
The proof then follows from
\begin{align}
    \frac{B_h(w_{\star}\Vert w_1)}{T}
    &=\frac{\inner{w_{\star}}{\log(w_{\star})-\log(w_1)}}{T}\nonumber\\
    &=\frac{\log(n)+\inner{w_{\star}}{\log(w_{\star})}}{T}\nonumber\\
    &\leq \frac{\log(n)}{T},
\end{align}
where the second equality uses the initialization $w_1=\mathbf{1}_n/n$. 
\end{proof}

\section{Adaptation to Fisher Market Equilibrium}
\label{sec:Fisher}
Unlike the computation of the Petz capacity, where a direct application of the proposed iteration rule \eqref{alg:ProposedClassicalIter} suffices, applying the rule to determine equilibrium prices in Fisher markets requires nontrivial adaptations.
This section is devoted to adapting the proposed iteration rule \eqref{alg:ProposedClassicalIter} for computing equilibrium prices in  inhomogeneous Fisher markets where buyers have CES utilities with different elasticities, and sellers update their prices asynchronously.
The remainder of this section is organized as follows:
\begin{itemize}
    \item Section \ref{subsec:FisherDef} introduces the problem formulation for computing equilibrium prices in such markets, establishes its correspondence with the classical Augustin mean when all buyers share a common elasticity in their utilities, and interprets the proposed iteration rule \eqref{alg:ProposedClassicalIter} as a t\^{a}tonnement dynamic in this setting.
    \item Section \ref{subsec:FisherIter} proposes the adapted iteration rule \eqref{eq:AdaptedProposedIter}, which extends \eqref{alg:ProposedClassicalIter} to accommodate a more general setting.
    \item Section \ref{subsec:FisherIterConv} analyzes the convergence rate of the iteration rule \eqref{eq:AdaptedProposedIter}.
    \item Section \ref{subsec:Tatonnement-BestResponse-Cmp} compares the iteration rule \eqref{eq:AdaptedProposedIter} with existing work.
\end{itemize}

\subsection{Introduction}
\label{subsec:FisherDef}
A Fisher market, under the assumption that all buyers have CES utilities, consists of $n$ buyers and $d$ sellers.
Each $i^{\text{th}}$ seller has the following responsibilities \citep{Cole2010}:
\begin{itemize}
    \item Holds a supply of one unit of the $i^{\text{th}}$ good, assumed to be infinitely divisible.
    \item Sets the price $p[i]$ per unit of the $i^{\text{th}}$ good.
\end{itemize}
The price vector $p\in\mathbb{R}_{+}^d$ is collectively determined by all sellers.

On the other hand, each $j^{\text{th}}$ buyer has:
\begin{itemize}
    \item A budget $w[j] > 0$. 
    Without loss of generality, we assume $\sum_{j=1}^n w[j]=1$.
    \item A CES utility function
    \begin{align}
        \label{Def:CESUtility}
        u_j : \mathbb{R}_{+}^d \mapsto \mathbb{R}:
        x \mapsto \inner{a_j}{x^{\rho_j}}^{1/\rho_j},
    \end{align}
    where $a_j \in \mathbb{R}_{+}^d$ for all $j$, $\rho_j\in(-\infty,0)\cup(0,1)$\footnote{We exclude the extremal case $\rho_j=1$, as our analysis focuses on the range $(0,1)$, corresponding to the WGS regime.}, and $x[i]$ denotes the quantity (in units) of the $i^{\text{th}}$ good consumed. 
    Without loss of generality, we assume $a_j\in\Delta_d$ for all $j$, and that $\sum_{j=1}^n a_j>0$.
    \item A demand vector that depends on the price vector $p$:
    \begin{equation}
        \label{def:FisherDemand}
        x_j(p)
        \in\argmax_{x\in\mathbb{R}_{+}^d,\inner{x}{p}\leq w[j]}u_j(x),
    \end{equation}
    where $x_j(p)$ is given in closed form by\footnote{We note that under the assumption \( \sum_{j=1}^n a_j > 0 \), the prices should remain positive to ensure that the demand remains bounded.}
    \begin{equation}
        x_j(p)
        =w[j]\frac{a_j^{1/(1-\rho_j)}\odot p^{-1/(1-\rho_j)} }{\inner{a_j^{1/(1-\rho_j)}}{p^{-\rho_j/(1-\rho_j)}}}.
    \end{equation}
\end{itemize}

We denote by 
\begin{equation}
    \label{def:FisherTotalDemand}
    x(p)
    \coloneqq \sum_{j=1}^n x_j(p)
    =\sum_{j=1}^n w[j]\frac{a_j^{1/(1-\rho_j)}\odot p^{-1/(1-\rho_j)} }{\inner{a_j^{1/(1-\rho_j)}}{p^{-\rho_j/(1-\rho_j)}}}
\end{equation} 
the total demand vector.

A price vector $p_{\star}$ is called the equilibrium price vector if  $x(p_{\star})=\mathbf{1}_{d}$, indicating that total demand matches the supply.

From the work of \citet[Lemma 3.3]{Cheung2020}, the equilibrium prices can equivalently be defined as the solution to the following convex program:
\begin{equation}
    \label{def:FisherEqPriceCvxOpt}
    p_{\star}
    \in\argmin_{p\in\mathbb{R}_{+}^d}\phi(p),\quad\phi(p)
    \coloneqq\sum_{j=1}^n w[j]\frac{1-\rho_j}{\rho_j}\log\left(\inner{a_j^{1/(1-\rho_j)}}{p^{-\rho_j/(1-\rho_j)}}\right)+\inner{\mathbf{1}_n}{p}.
\end{equation}
To establish the correspondence with the classical Augustin mean $q_{\star}$, we set $\rho_j=\rho=(\alpha-1)/\alpha$ for all $j$. 
Then, following the work of \citet[Lemma 5.2]{Wang2024}, we write
\begin{align}
    p_{\star}
    &\in\argmin_{p\in\mathbb{R}_{+}^d}\sum_{j=1}^n w[j]\frac{1-\rho_j}{\rho_j}\log\left(\inner{a_j^{1/(1-\rho_j)}}{p^{-\rho_j/(1-\rho_j)}}\right)+\inner{\mathbf{1}_n}{p}\nonumber\\
    &=\argmin_{p\in\Delta_{d}}\sum_{j=1}^n w[j]\frac{1}{\alpha-1}\log\left(\inner{a_j^{\alpha}}{p^{1-\alpha}}\right),
\end{align}
which recovers the optimization problem~\eqref{def:ClassicalAugMean} defining the classical Augustin mean $q_{\star}$.
Likewise, our proposed iteration rule \eqref{alg:ProposedClassicalIter} can be equivalently written as
\begin{itemize}
    \item Each $i^{\text{th}}$ seller initializes the price of the $i^{\text{th}}$ good to $p_1[i]>0$.
    \item For every $t\in\mathbb{N}$, all sellers synchronously update their prices according to
    \begin{equation}
        \label{eq:ProposedIterFisher}
        p_{t+1}
        =p_t\odot x(p_t)^{1-\rho}.
    \end{equation}
\end{itemize} 
Note that at each round $t$, each $i^{\text{th}}$ seller determines the price $p_{t+1}[i]$ using only the previous price $p_{t}[i]$, the corresponding demand $x(p_t)[i]$, and knowledge of the elasticity $\rho$, indicating that the iteration rule \eqref{eq:ProposedIterFisher} can be interpreted as a t\^{a}tonnement dynamic \citep{walras2014}.

\subsection{Adapted Iteration Rule}
\label{subsec:FisherIter}
To better capture actual market behavior, three additional issues must be addressed:
\begin{itemize}
    \item Elasticities $\rho_j$ vary across buyers.
    \item Sellers do not have precise knowledge of each $\rho_j$.
    \item Sellers update prices asynchronously.
\end{itemize}
When the CES utilities are in the WGS regime, where $\rho_j \in (0,1)$, we adapt the iteration rule \eqref{eq:ProposedIterFisher} to address these three issues as follows:
\begin{itemize}
    \item Each $i^{\text{th}}$ seller requires an upper bound $\hat{\rho}_i\in\left[\max_{j}\rho_j,1\right)$, rather than knowledge of all $\rho_j$.
    \item At each round $t$, sellers independently decide whether to update their price.
\end{itemize}
Observe that since $1-\rho>0$, the update becomes increasingly conservative as $\rho\to 1$; that is, $p_{t+1}[i]\approx p_t[i]$ when $1-\rho\to 0$.
Motivated by this, we treat $1-\rho$ in \eqref{eq:ProposedIterFisher} as a step size.
When each $i^{\text{th}}$ seller only has access to an upper bound $\hat{\rho}_i\in\left[\max_j\rho_j,1\right)$ on the buyers' elasticities, it is natural to adopt the conservative step size $1-\hat{\rho}_i$.

We now present the adapted version of the iteration rule \eqref{eq:ProposedIterFisher} for computing the equilibrium prices $p_{\star}$.
\begin{itemize}
    \item Each $i^{\text{th}}$ seller initializes the price of the $i^{\text{th}}$ good to $p_1[i]>0$.
    \item For every $t\in\mathbb{N}$, select a non-empty subset $\mathcal{I}_t\subseteq\Set{1,2,\dots,d}$, representing the sellers who decide to update the prices of their goods.  
    For each index $i\in\mathcal{I}_t$, the price $p_{t+1}[i]$ is updated according to
    \begin{equation}
        \label{eq:AdaptedProposedIter}
        p_{t+1}[i]
        =p_t[i]x(p_t)[i]^{1-\hat{\rho}_i},
    \end{equation}
    and for each $k \notin \mathcal{I}_t$, the corresponding price remains unchanged:
    \begin{equation}
        p_{t+1}[k] = p_t[k].
    \end{equation}
\end{itemize}
The adapted iteration rule \eqref{eq:AdaptedProposedIter} can also be interpreted as a t\^{a}tonnement dynamic, which reduces to the iteration rule \eqref{eq:ProposedIterFisher} when $\hat{\rho}_i=\rho$ for all $i$, and $\mathcal{I}_t=\Set{1,2,\dots,d}$ for all $t$.

\subsection{Convergence Analysis}
\label{subsec:FisherIterConv}
\begin{theorem}
\label{thm:FisherConv}
Let \( \rho_j \in (0,1) \) for all \( j \), and let \( \hat{\rho}_i \in \left[\max_j \rho_j,\, 1\right) \) for all \( i \).  
Let \( \{p_t\}_{t \in \mathbb{N}} \) be the sequence of iterates generated by the iteration rule \eqref{eq:AdaptedProposedIter}.  
Define \( N(t) \) as the smallest number such that, within the first \( N(t) \) iterations of \eqref{eq:AdaptedProposedIter}, each price is updated at least \( t \) times.  
More precisely, \( N(t) \) is the smallest integer such that, for each $i^{\text{th}}$ seller, the index \( i \) appears in at least \( t \) of the sets \( \mathcal{I}_1, \mathcal{I}_2, \dots, \mathcal{I}_{N(t)} \). 
We also adopt the convention $N(0)=0$.

Then, we have
\begin{equation}
    \label{ineq:AdaptConvRate}
    d_{\mathrm{T}}\left(p_{\star}, p_{N(T)+1}\right) \leq \hat{\rho}^{T} \cdot d_{\mathrm{T}}\left(p_{\star}, p_1\right),
\end{equation}
where \( \hat{\rho} = \max_i \hat{\rho}_i \).

Moreover, since the equilibrium prices \( p_{\star} > 0 \) satisfying \( x(p_{\star}) = \mathbf{1}_d \) uniquely exist \citep[Lemma 1]{Zhang2011}, we have \( d_{\mathrm{T}}(p_{\star}, p_1) < \infty \).
\end{theorem}
\begin{remark}
    Our analysis in Theorem \ref{thm:FisherConv} can be extended to the range $\rho_j \in (-1,0)$.
    The elasticities $\rho_j \in (-\infty, 0)$ of CES utilities correspond to the complementary regime.
    However, it is unclear whether our analysis can be extended to the entire complementary regime, and the operational meaning of restricting attention to the subrange $(-1, 0)$ is also unclear.
    Therefore, our analysis focuses exclusively on the WGS regime, that is, $\rho_j \in (0,1)$.
\end{remark}

\begin{proof}
The proof of Theorem \ref{thm:FisherConv} is similar to that of Theorem \ref{thm:PetzConv}, with the main difference being the handling of asynchronous updates using the technique developed by \citet{Dvijotham2022}.

The remainder of the proof proceeds as follows:
\begin{itemize}
    \item  We first show that after the $t^{\text{th}}$ round, for all $i\in\mathcal{I}_t$, the following coordinate-wise contractive property holds:
    \begin{equation}
        \log\left(\max\Set{\frac{p_{t+1}[i]}{p_{\star}[i]},\frac{p_{\star}[i]}{p_{t+1}[i]}}\right)
        \leq \hat{\rho}d_{\mathrm{T}}(p_{\star},p_t).
    \end{equation}
    For all $k\notin\mathcal{I}_t$, we have
    \begin{equation}
        \log\left(\max\Set{\frac{p_{t+1}[k]}{p_{\star}[k]},\frac{p_{\star}[k]}{p_{t+1}[k]}}\right)
        \leq d_{\mathrm{T}}(p_{\star},p_t).
    \end{equation}
    \item Based on the inequalities above, we establish the epoch-wise contractive property:
    \begin{equation}
        d_{\mathrm{T}}\left(p_{\star},p_{N(t+1)+1}\right)
        \leq\hat{\rho}d_{\mathrm{T}}\left(p_{\star},p_{N(t)+1}\right),
    \end{equation}
    which completes the proof.
\end{itemize}
To prove the coordinate-wise contractive property, fix any $i\in\mathcal{I}_t$.
From the iteration rule \eqref{eq:AdaptedProposedIter}, we have
\begin{equation}
    p_{t+1}[i]
    =p_t[i]x(p_t)[i]^{1-\hat{\rho}_i}
    =\left(\sum_{j=1}^n w[j]\frac{a_j[i]^{1/(1-\rho_j)}p_t[i]^{1/(1-\hat{\rho}_i)-1/(1-\rho_j)}}{\inner{a_j^{1/(1-\rho_j)}}{p_t^{-\rho_j/(1-\rho_j)}}}\right)^{1-\hat{\rho}_i},
\end{equation}
where the second equality follows from the definition of the total demand vector in \eqref{def:FisherTotalDemand}.
By Definition \ref{eq:Thompson}, it follows that
\begin{align}
    p_{t+1}[i]
    &\leq \left(\sum_{j=1}^n w[j]\frac{a_j[i]^{1/(1-\rho_j)}\left(\exp\left(d_{\mathrm{T}}(p_{\star},p_t)\right)p_{\star}[i]\right)^{1/(1-\hat{\rho}_i)-1/(1-\rho_j)}}{\inner{a_j^{1/(1-\rho_j)}}{\left(\exp\left(d_{\mathrm{T}}(p_{\star},p_t)\right)p_{\star}\right)^{-\rho_j/(1-\rho_j)}}}\right)^{1-\hat{\rho}_i}\nonumber\\
    &=\exp\left(\hat{\rho}_i d_{\mathrm{T}}(p_{\star},p_t)\right)\left(\sum_{j=1}^n w[j]\frac{a_j[i]^{1/(1-\rho_j)}p_{\star}[i]^{1/(1-\hat{\rho}_i)-1/(1-\rho_j)}}{\inner{a_j^{1/(1-\rho_j)}}{p_{\star}^{-\rho_j/(1-\rho_j)}}}\right)^{1-\hat{\rho}_i}\nonumber\\
    &=\exp\left(\hat{\rho}_i d_{\mathrm{T}}(p_{\star},p_t)\right)p_{\star}[i]x(p_{\star})[i]^{1-\hat{\rho}_i}\nonumber\\
    &=\exp\left(\hat{\rho}_i d_{\mathrm{T}}(p_{\star},p_t)\right)p_{\star}[i],
\end{align}
where the third equality follows from the fact that $x(p_{\star})=\mathbf{1}_d$.
A similar argument gives the lower bound:
\begin{align}
    p_{t+1}[i]
    &\geq \left(\sum_{j=1}^n w[j]\frac{a_j[i]^{1/(1-\rho_j)}\left(\exp\left(-d_{\mathrm{T}}(p_{\star},p_t)\right)p_{\star}[i]\right)^{1/(1-\hat{\rho}_i)-1/(1-\rho_j)}}{\inner{a_j^{1/(1-\rho_j)}}{\left(\exp\left(-d_{\mathrm{T}}(p_{\star},p_t)\right)p_{\star}\right)^{-\rho_j/(1-\rho_j)}}}\right)^{1-\hat{\rho}_i}\nonumber\\
    &=\exp\left(-\hat{\rho}_i d_{\mathrm{T}}(p_{\star},p_t)\right)p_{\star}[i].
\end{align}
Combining the upper and lower bounds, we obtain
\begin{equation}
    \log\left(\max\Set{\frac{p_{t+1}[i]}{p_{\star}[i]},\frac{p_{\star}[i]}{p_{t+1}[i]}}\right)
    \leq \hat{\rho}_id_{\mathrm{T}}(p_{\star},p_t)
    \leq \hat{\rho}d_{\mathrm{T}}(p_{\star},p_t).
\end{equation}
On the other hand, for any $k\notin\mathcal{I}_t$, the price remains unchanged, i.e., $p_{t+1}[k]=p_t[k]$, so we have
\begin{align}
    \log\left(\max\Set{\frac{p_{t+1}[k]}{p_{\star}[k]},\frac{p_{\star}[k]}{p_{t+1}[k]}}\right)
    &=\log\left(\max\Set{\frac{p_{t}[k]}{p_{\star}[k]},\frac{p_{\star}[k]}{p_{t}[k]}}\right)\nonumber\\
    &\leq\max_{1\leq l\leq d}\log\left(\max\Set{\frac{p_{t}[l]}{p_{\star}[l]},\frac{p_{\star}[l]}{p_{t}[l]}}\right)\nonumber\\
    &=d_{\mathrm{T}}(p_{\star},p_t).
\end{align}
This completes the proof of the coordinate-wise contractive property.

To prove the epoch-wise contractive property, we first show that the Thompson metric is monotonically non-increasing:
\begin{align}
    d_{\mathrm{T}}(p_{\star},p_{t+1})
    &=\max_{1\leq l\leq d}\log\left(\max\Set{\frac{p_{t+1}[l]}{p_{\star}[l]},\frac{p_{\star}[l]}{p_{t+1}[l]}}\right)\nonumber\\
    &=\max\Set{\max_{i\in\mathcal{I}_t}\log\left(\max\Set{\frac{p_{t+1}[i]}{p_{\star}[i]},\frac{p_{\star}[i]}{p_{t+1}[i]}}\right),\max_{k\notin\mathcal{I}_t}\log\left(\max\Set{\frac{p_{t+1}[k]}{p_{\star}[k]},\frac{p_{\star}[k]}{p_{t+1}[k]}}\right)}\nonumber\\
    &\leq\max\Set{\hat{\rho}_i d_{\mathrm{T}}(p_{\star},p_t),d_{\mathrm{T}}(p_{\star},p_t)}\nonumber\\
    &\leq d_{\mathrm{T}}(p_{\star},p_t),
\end{align}
where the first inequality follows from the coordinate-wise contractive property.
Next, by the definition of $N(t)$, for each $i$, there exists some $\tau\in\Set{N(t)+1,N(t)+2,\dots,N(t+1)}$ such that $i\in\mathcal{I}_{\tau}$.
For each $i$, we define 
\begin{equation}
    \tau_t[i]
    \coloneqq\max\Set{\tau\in\Set{N(t)+1,N(t)+2,\dots,N(t+1)}\mid i\in \mathcal{I}_{\tau}}.
\end{equation}
It then follows from the coordinate-wise contractive property and the monotonicity of the Thompson metric that
\begin{align}
    d_{\mathrm{T}}\left(p_{\star},p_{N(t+1)+1}\right)
    &=\max_{1\leq i\leq d}\log\left(\max\Set{\frac{p_{N(t+1)+1}[i]}{p_{\star}[i]},\frac{p_{\star}[i]}{p_{N(t+1)+1}[i]}}\right)\nonumber\\
    &= \max_{1\leq i\leq d}\log\left(\max\Set{\frac{p_{\tau_t[i]+1}[i]}{p_{\star}[i]},\frac{p_{\star}[i]}{p_{\tau_t[i]+1}[i]}}\right)\nonumber\\
    &\leq \max_{1\leq i\leq d}\hat{\rho} d_{\mathrm{T}}\left(p_{\star},p_{\tau_t[i]}\right)\nonumber\\
    &\leq \max_{1\leq i\leq d}\hat{\rho} d_{\mathrm{T}}\left(p_{\star},p_{N(t)+1}\right)\nonumber\\
    &=\hat{\rho} d_{\mathrm{T}}\left(p_{\star},p_{N(t)+1}\right).\nonumber
\end{align}
This concludes the proof.
\end{proof}
\subsection{Comparison with Existing Work}
\label{subsec:Tatonnement-BestResponse-Cmp}
This section is devoted to comparing our proposed iteration rule \eqref{eq:AdaptedProposedIter} to existing algorithms.

In Section \ref{subsubsec:tatonnementcmp}, we compare our iteration rule with that of \citet{Cole2008}, which is the only t\^{a}tonnement-type algorithm comparable to ours in Fisher markets with WGS CES utilities. 
We show that our method yields a faster convergence rate when upper bounds $\hat{\rho}_i\in\left[\max_j\rho_j,1\right)$ on buyers' elasticities are known to the sellers.

In Section \ref{subsubsec:BestResponseCmp}, we compare our method with the best response dynamic proposed by \citet{Dvijotham2022}. 
We show that the iteration complexity is comparable when all elasticities $\rho_j=\rho\in(0,1/2)$, and that our method achieves better per-iteration time complexity.

\subsubsection{T\^{a}tonnement Dynamic}
\label{subsubsec:tatonnementcmp}
We compare our proposed iteration rule \eqref{eq:AdaptedProposedIter} with that of \citet{Cole2008}, which, to the best of our knowledge, is the only t\^{a}tonnement-type algorithm for Fisher markets with WGS CES utilities that both allows asynchronous price updates and guarantees linear convergence of prices to equilibrium.
Although other t\^{a}tonnement-type algorithms proposed by \citet{Bei2019,Cheung2020} are also applicable in this setting, their convergence guarantees are stated using error measures that do not quantify the difference between prices and the equilibrium, making a direct comparison with our results infeasible.
In contrast, while \citet{Cole2008} adopt a different error measure to quantify the difference between a price vector $p$ and the equilibrium price vector $p_{\star}$, denoted $d_{\mathrm{CF}}\left(p_{\star},p\right)$, it simplifies to $\max_{i}\abs{1-p[i]/p_{\star}[i]}$ when $d_{\mathrm{T}}\left(p_{\star},p\right)<\log(3)$.
By Lemma \ref{lemma:MetricComparable}, this is comparable to the Thompson metric we use.

For comparison, we first focus on the case in which each $j^{\text{th}}$ buyer has a CES utility with elasticity $\rho_j\in(0,1)$, and each $i^{\text{th}}$ seller knows an upper bound $\hat{\rho}_i\in\left[\max_j\rho_j,1\right)$.
For simplicity, assume sellers update their prices synchronously.
In this setting, their algorithm computes prices that are $\epsilon$-close to the equilibrium in 
\begin{equation}
    O\left(\frac{1+\hat{\rho}}{1-\hat{\rho}}\log\left(\frac{1}{\epsilon}\right)\right)
\end{equation}
iterations \citep[Theorem 3.1]{Cole2008}, where $\hat{\rho}=\max_{i}\hat{\rho}_i$.
In contrast, our iteration rule \eqref{eq:AdaptedProposedIter} requires 
\begin{equation}
    O\left(\frac{1}{\log\left(\frac{1}{\hat{\rho}}\right)}\log\left(\frac{1}{\epsilon}\right)\right)
\end{equation}
iterations (Theorem \ref{thm:FisherConv}), yielding a better dependence on $\hat{\rho}$.
In particular, 
\begin{equation}
    \frac{1}{\log(\frac{1}{\hat{\rho}})}
    \leq\frac{1+\hat{\rho}}{1-\hat{\rho}};
\end{equation}
moreover, the former tends to $0$ as $\hat{\rho}$ approaches $0$, while the latter converges to $1$.
However, in the general case, \citet{Cole2008} do not require sellers to have any knowledge of the elasticities $\rho_j$, whereas our method requires knowledge of $\hat{\rho}_i$.
Removing this requirement is left as a future direction.

\subsubsection{Best Response Dynamic}
\label{subsubsec:BestResponseCmp}
Beyond t\^{a}tonnement dynamics, we also compare our iteration rule \eqref{eq:AdaptedProposedIter} with that proposed by \citet{Dvijotham2022}, which can be interpreted as a best response dynamic, and is also applicable to Fisher markets with WGS CES utilities and allows asynchronous price updates.
Notably, they also measure the distance between a price vector and the equilibrium using the Thompson metric.

For comparison, assume for simplicity that all $\rho_j=\hat{\rho}_i=\rho$, and that prices are updated synchronously.
In this setting, their iteration rule yields prices that are $\epsilon$-close to the equilibrium in 
\begin{equation}
    O\left(\frac{1}{\log\left(\frac{(1-\rho c)}{(\rho-\rho c)}\right)}\log\left(\frac{1}{\epsilon}\right)\right)
\end{equation}
iterations \citep[Corollary 18]{Dvijotham2022}, where
\begin{equation}
    c
    =\min_{i}\max_{j}\frac{a_j[i]^{1/\rho}}{a_j[i]^{1/\rho}+\sum_{k\neq i}a_{j}[k]^{1/\rho}\left(\frac{p_{\mathrm{min}}}{p_{\mathrm{max}}}\right)^{\rho/(1-\rho)}}\in(0,1),
\end{equation}
and
\begin{equation}
    \begin{cases}
        p_{\mathrm{min}}
        =\min\Set{1,\min_{i}\sum_{j=1}^n\frac{w[j]a_j[i]^{1/\rho}}{\inner{\mathbf{1}_d}{a_j^{1/\rho}}}}\\
        p_{\mathrm{max}}
        =\max\Set{1,\max_{i}\sum_{j=1}^n\frac{w[j]a_j[i]^{1/\rho}}{\inner{\mathbf{1}_d}{a_j^{1/\rho}}}}
    \end{cases}.
\end{equation}
We note that
\begin{equation}
    \frac{1}{\log\left(\frac{(1-\rho c)}{(\rho-\rho c)}\right)}
    \leq \frac{1}{\log\left(\frac{1}{\rho}\right)},
\end{equation}
indicating that their dependence on $\rho$ is better than ours.
However, when $\rho\in(0, 1/2)$, setting $n=1$ and $a_1=\mathbf{1}_d/d$ yields $c\leq 1/d^{(1-2\rho)/(1-\rho)}$, which approaches $0$ as $d$ increases. 
In this case, $\log\left((1-\rho c)/(\rho-\rho c)\right)$ can be made arbitrarily close to $\log(1/\rho)$, indicating that the dependence of the iteration complexity of our method on $\rho\in(0,1/2)$ is comparable to that of \citet{Dvijotham2022}.
Moreover, we note that at each round $t$, the iteration rule proposed by \citet{Dvijotham2022} requires solving a nonlinear equation for each seller. 
In contrast, our proposed iteration rule \eqref{eq:AdaptedProposedIter} admits a closed-form expression, making it more efficient in terms of per-iteration time complexity.

\section{Numerical Results} \label{sec:numerics}
We implement our proposed iteration rule \eqref{alg:ProposedIter} to compute the Petz-Augustin mean \eqref{def:PetzAugMean} for $\alpha$ in $\set{0.2,0.4,0.8,1.5,3,5}$. 
For experiments with $\alpha>0.5$, we set the number of Petz-R\'{e}nyi divergences to $n=2^{5}$ and the dimension of the quantum states to $d=2^{7}$, generating the quantum states $ A_j $ using the  
\texttt{rand\_dm}
function from the 
Python 
package QuTiP \citep{Johansson2012}.
Since the exact optimizer is unknown, we compute the approximate optimization error and the approximate iterate error instead. 
The former is defined as $F\left(Q_t/\Tr\left[Q_t\right]\right)-F\left(\hat{Q}_{\star}\right)$, and the latter  as $d_{\mathrm{T}}\left(\hat{Q}_{\star}^{1-\alpha},\left(Q_t/\Tr[Q_t]\right)^{1-\alpha}\right)$, where $\hat{Q}_\star$ denotes the last iterate after $60$ iterations of the iteration rule \eqref{alg:ProposedIter}.

For experiments with $\alpha \leq 0.5$, we manually construct a challenging instance of the optimization problem~\eqref{def:PetzAugMean}.
Specifically, we set $n=3$, $d=3$, $A_1=\mathrm{Diag}\left((0.9,0.09,0.01)\right)$, $A_2=\mathrm{Diag}\left((0.009,0.99,0.001)\right)$, $A_3=\mathrm{Diag}\left((0.0001,0.0009,0.999)\right)$, and $w=(1/3,1/3,1/3)$ in the optimization problem~\eqref{def:PetzAugMean}.
The approximate optimization error and  the approximate iterate error are defined similarly as above. 
However, our proposed algorithm is not guaranteed to converge for $\alpha\leq 0.5$.
Hence, for $\alpha\leq 0.5$, we 
replace $\hat{Q}_{\star}$ with 
the best iterate, that is, the one achieving the smallest function value, among $1000$ iterations of entropic mirror descent with the Polyak step size \citep{You2022}, as this method is guaranteed to converge asymptotically and is known to converge quickly in practice.

In Figure \ref{fig1}, we observe linear convergence rates for the optimization error and the iterate error when $\alpha>0.5$.
Notably, Theorem \ref{thm:PetzConv} establishes that the exponent of the linear convergence rate is bounded above by $\left|1 - \frac{1}{\alpha}\right|$.
Consistent with this result, Figure \ref{fig1} demonstrates a similar relationship between $\alpha$ and the
empirical 
convergence rate.

Since Lemma \ref{eq:FixPtIsMin} implies that our algorithm functions as a fixed-point iteration for $\alpha \in (0, 1)\cup(1,\infty)$, despite the fact that our convergence guarantee in Theorem \ref{thm:PetzConv} does not cover the case where $\alpha \leq 0.5$, we present in Figure \ref{fig2} the experimental results for $\alpha \in \set{0.2, 0.4}$. 
Numerical experiments suggest that our proposed algorithm seems to diverge for $\alpha\in\set{0.2,0.4}$. 
Since no existing algorithm for computing the Petz-Augustin mean of order $\alpha\in\left(0,1/2\right]$ has a non-asymptotic convergence guarantee, developing a rigorous algorithm for this purpose remains an open direction for future research.

\begin{figure}[H]
    \centering

    \includegraphics[width=0.8\textwidth]{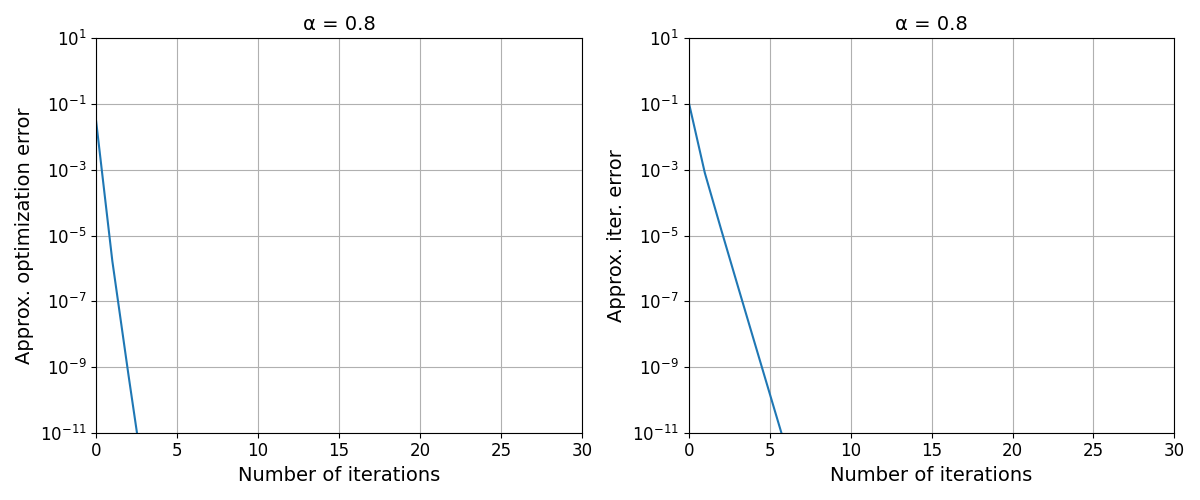}
    
    \includegraphics[width=0.8\textwidth]{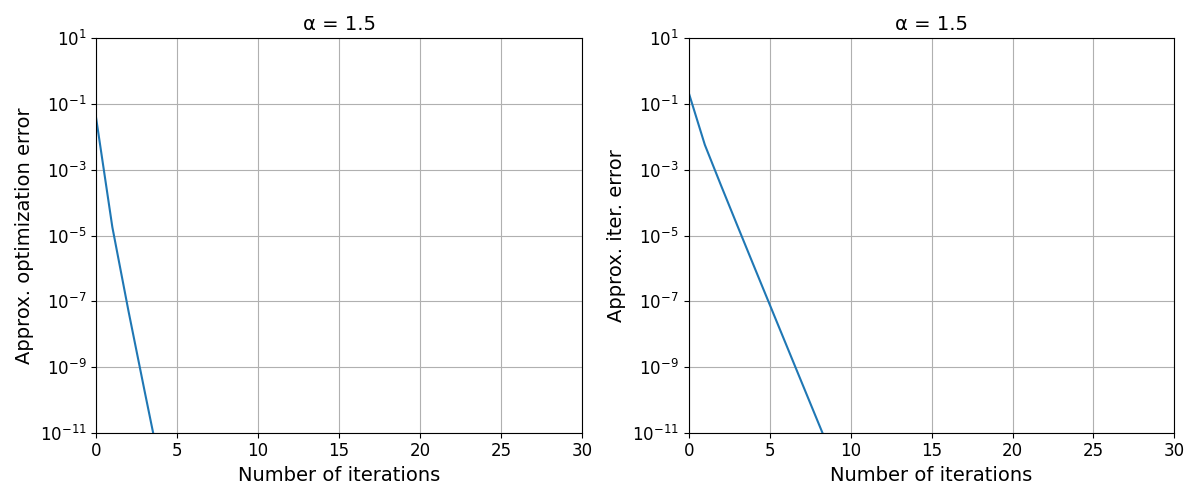}
    
    \includegraphics[width=0.8\textwidth]{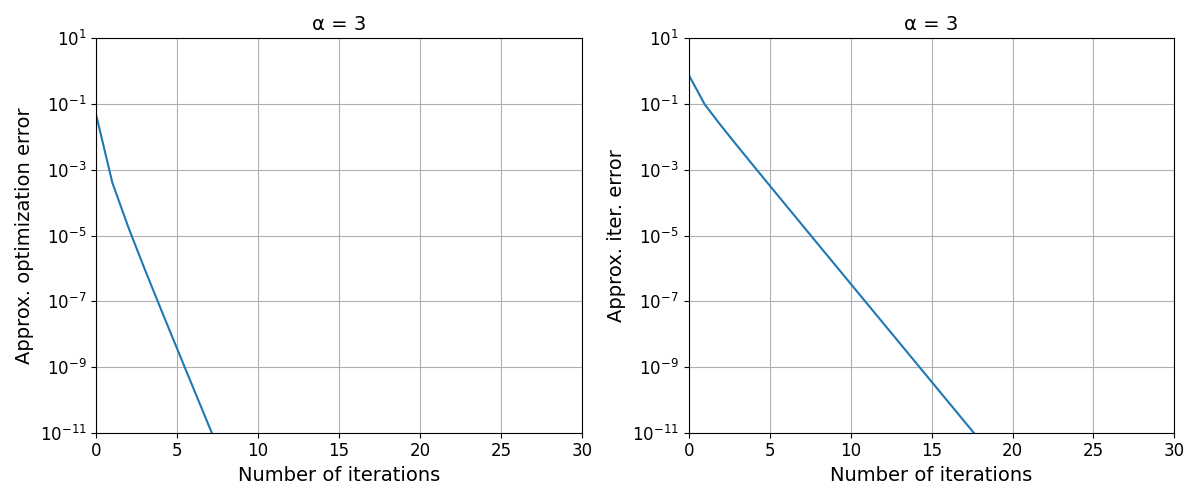}
    
    \includegraphics[width=0.8\textwidth]{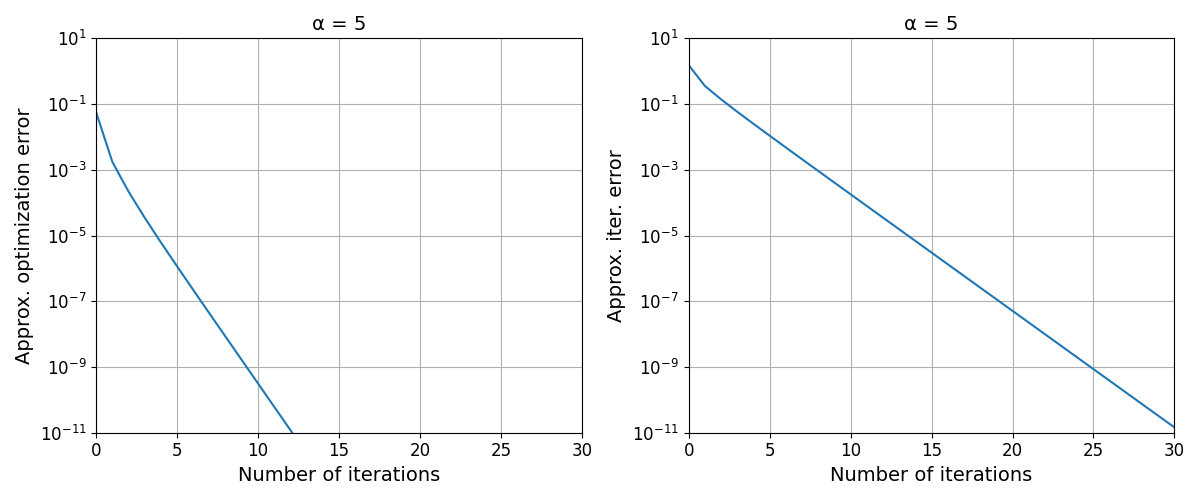}

    \caption{Approximate optimization error and iterate error versus the number of iterations for various values of $\alpha > 0.5$.}
    \label{fig1}
\end{figure}

\begin{figure}[H]
    \centering
    \includegraphics[width=0.8\textwidth]{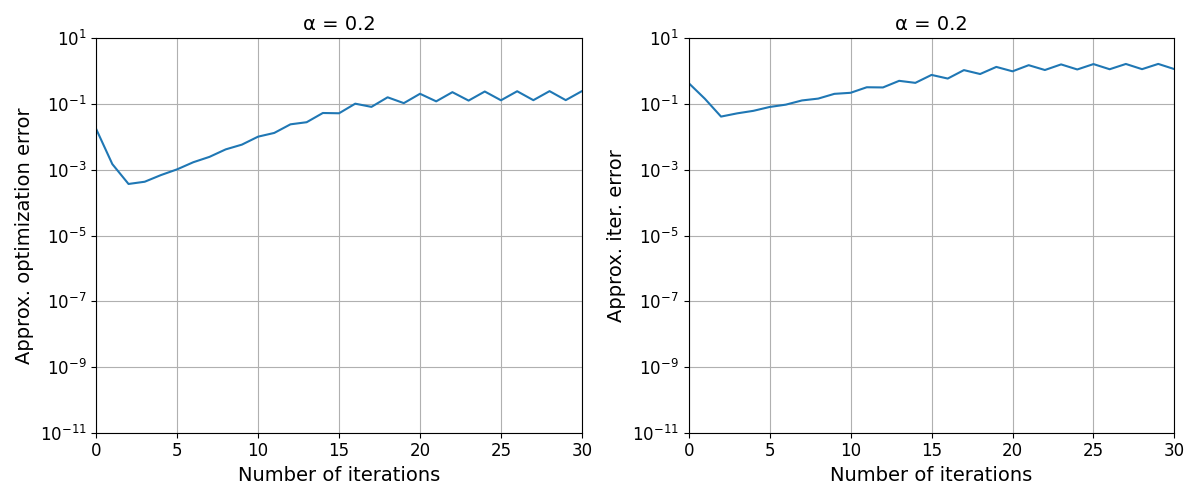}
    
    \includegraphics[width=0.8\textwidth]{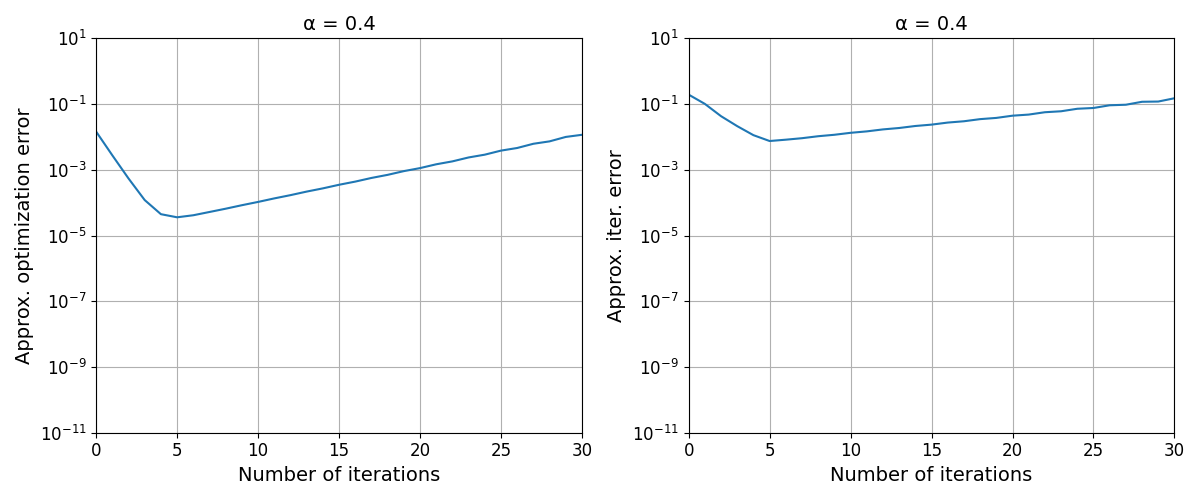}
    \caption{Approximate optimization error and iterate error versus the number of iterations for $\alpha\leq 0.5$}
    \label{fig2}
\end{figure}





\bibliography{refs}

\appendix
\section{Notations}
For any $v\in\mathbb{C}^d$, we denote its conjugate transpose by $v^*$.
We denote by $\mathcal{S}_{d}$ the set of all self-adjoint linear operators mapping from $\mathcal{B}_d$ to $\mathcal{B}_d$.
For any $L_1,L_2\in\mathcal{S}_d$, we write $L_1\leq L_2$ if and only if $L_2-L_1$ is positive semi-definite.
Similarly, we write $L_1<L_2$ if and only if $L_2-L_1$ is positive definite.
We define $\mathcal{S}_{d,+}$ and $\mathcal{S}_{d,++}$ as the nonnegative cone and the positive cone in $\mathcal{S}_{d}$, respectively.
Specifically,
\begin{equation}
    \mathcal{S}_d
    \coloneqq \Set{L\in\mathcal{S}_d\mid L\geq 0},
\end{equation}
and
\begin{equation}
    \mathcal{S}_{d,++}
    \coloneqq \Set{L\in\mathcal{S}_{d}\mid L>0}.
\end{equation}
For any $r\in\mathbb{R}\backslash \Set{0}$ and $a,b\in\mathbb{R}_{++}$, we define
\begin{equation}
    h_r^{[1]}(a,b)
    \coloneqq
    \begin{cases}
        \frac{a^r-b^r}{a-b}, & \text{if $a\neq b$.}\\
        ra^{r-1}, & \text{otherwise.}
    \end{cases}
\end{equation}
For any Fr\'{e}chet differentiable function $\phi:\mathcal{B}_d\mapsto\mathcal{B}_d$, we denote by $\mathrm{D}_U\left(\phi(U)\right)\in\mathcal{S}_d$ its Fr\'{e}chet derivative with respect to $U$.
For any finite-dimensional real Hilbert space $\mathcal{H}$ and any Fr\'{e}chet differentiable function $\psi:\mathbb{R}^n\mapsto\mathcal{H}$, we denote by $\frac{\partial}{\partial v[i]}\left(\psi(v)\right)\in\mathcal{H}$ its partial derivative with respect to the $i^{\text{th}}$ coordinate of $v$.

\section{Proof of Lemma \ref{lemma:ApproxGrad}}
\label{sec:ApproxGradProof}
\begin{proof}(Lemma \ref{lemma:ApproxGrad})
    Let $w$, $Q$, $\hat{g}_w$, and $\hat{\nabla}$ be defined as in Lemma \ref{lemma:ApproxGrad}.
    The inequality 
    \begin{equation}
        \abs{\hat{g}_w-g(w)}
        \leq\abs{\frac{1}{1-\alpha}}d_{\mathrm{T}}\left(Q_{\star}(w)^{1-\alpha},Q^{1-\alpha}\right)
    \end{equation}
    follows directly from Lemma \ref{ineq:SmallThomp2SmallOptError}.
    By Lemma \ref{lemma:Danskin}, it remains to prove that
    \begin{align}
        \abs{D_{\alpha}\left(A_j\Vert Q\right)-D_{\alpha}\left(A_j\Vert Q_{\star}(w)\right)}
        &=\abs{\frac{1}{\alpha-1}}\abs{\log\left(\Tr\left[A_j^{\alpha}Q^{1-\alpha}\right]\right)-\log\left(\Tr\left[A_j^{\alpha}Q_{\star}(w)^{1-\alpha}\right]\right)}\nonumber\\
        &\leq\abs{\frac{1}{1-\alpha}}d_{\mathrm{T}}\left(Q_{\star}(w)^{1-\alpha},Q^{1-\alpha}\right),\quad\forall j.\nonumber 
    \end{align}
    It suffices to prove that
    \begin{align}
        \log\left(\Tr\left[A_j^{\alpha}U^{1-\alpha}\right]\right)-\log\left(\Tr\left[A_j^{\alpha}V^{1-\alpha}\right]\right)
        \leq d_{\mathrm{T}}\left(V^{1-\alpha},U^{1-\alpha}\right),\quad\forall U,V\in\mathcal{B}_{d,++}.
    \end{align}
    By Definition \ref{eq:Thompson}, we write
    \begin{align}
        &\log\left(\Tr\left[A_j^{\alpha}U^{1-\alpha}\right]\right)-\log\left(\Tr\left[A_j^{\alpha}V^{1-\alpha}\right]\right)\nonumber\\
        &\quad\leq\log\left(\Tr\left[A_j^{\alpha}\left(\exp\left(d_{\mathrm{T}}\left(V^{1-\alpha},U^{1-\alpha}\right)\right)V^{1-\alpha}\right)\right]\right)-\log\left(\Tr\left[A_j^{\alpha}V^{1-\alpha}\right]\right)\nonumber\\
        &\quad=d_{\mathrm{T}}\left(V^{1-\alpha},U^{1-\alpha}\right).
    \end{align}
    This concludes the proof.
\end{proof}

\section{Proof of Lemma \ref{lemma:HessianBd}}
\label{sec:HessianBdProof}
This section is devoted to proving Lemma \ref{lemma:HessianBd}.
We begin by computing the Hessian $\nabla^2 g(w)$ in Lemma \ref{lemma:HessianForm} using Lemma \ref{lemma:PowerDiff}.
Next, we apply Lemma \ref{lemma:ad-hoc} and Lemma \ref{lemma:LiebAndo} to prove Lemma \ref{lemma:L1geqL2}.
Finally, we combine Lemma \ref{lemma:HessianForm} and Lemma \ref{lemma:L1geqL2} to establish Lemma \ref{lemma:HessianBd}.

\begin{lemma}(\citep[Lemma II.2]{Mosonyi2021})
    \label{lemma:PowerDiff}
    For any $r\in\mathbb{R}_{++}$ and $Q\in\mathcal{B}_{d,++}$, the Fr\'{e}chet derivative  $\mathrm{D}_{Q}\left(\phi(Q)\right)\in\mathcal{S}_{d,++}$ of $\phi(U)=U^r$ at $U=Q$ is given by
    \begin{equation}
        \mathrm{D}_Q \left(\phi(Q)\right)(V)
        =\sum_{1\leq i,j\leq d}h_{r}^{[1]}(\lambda_i,\lambda_j)u_iu_i^{*}Vu_ju_j^{*},\quad \forall V\in\mathcal{B}_d,
    \end{equation}
    where $Q=\sum_{i=1}^d\lambda_i u_iu_i^{*}$ is its eigendecomposition.
\end{lemma}
\begin{lemma}
    \label{lemma:HessianForm}
    Let $g$ and $Q_{\star}(w)$ be defined as in \eqref{def:Capacity}.
    For any $\alpha\in(0,1)\cup(1,\infty)$ and $w\in\mathrm{relint}(\Delta_n)$, we have
    \begin{equation}
        \nabla^2 g(w)[i][j]
        =\frac{1}{1-\alpha}\Tr\left[\frac{A_j^{\alpha}}{\Tr\left[A_j^{\alpha}Q_{\star}(w)^{1-\alpha}\right]}\left(L_1^{-1}+L_2\right)^{-1}\left(\frac{A_i^{\alpha}}{\Tr\left[A_i^{\alpha}Q_{\star}(w)^{1-\alpha}\right]}\right)\right]-1,\quad\forall i,j,
    \end{equation}
    where $L_1\in\mathcal{S}_{d,++}$ and $L_2\in\mathcal{S}_{d,+}$ are defined as follows:
    \begin{align}
        L_1(V)
        &\coloneqq \sum_{1 \leq k, l \leq d} h_{(1-\alpha)/\alpha}^{[1]}\left(\lambda_k^{\alpha}, \lambda_l^{\alpha}\right) u_k u_k^{*} V u_l u_l^{*}, \nonumber \\
        L_2(V)
        &\coloneqq \sum_{k=1}^{m} w[k] \left( \frac{ \Tr\left[{A_k^{\alpha}}{V}\right] A_k^{\alpha}}{ \Tr\left[{A_k^{\alpha}}{Q_{\star}(w)^{1-\alpha}}\right]^2} \right),\quad\forall V\in\mathcal{B}_{d},
    \end{align}
    where $Q_{\star}(w)=\sum_{k=1}^d\lambda_k u_k u_k^{*}$ is its eigendecomposition.
\end{lemma}

\begin{proof}
    Let $w$ and $\alpha$ be defined as in Lemma \ref{lemma:HessianForm}.
    Let $Q_{\star}(w)=\sum_{k=1}^d\lambda_k u_k u_k^{*}\in\mathcal{B}_{d,++}$ denote its eigendecomposition.
    By Lemma \ref{lemma:Danskin}, the Fr\'{e}chet differentiability of $Q_{\star}(w)$ with respect to $w$ \citep{Cheng2024b}, and the chain rule, we have, for any $i,j\in\Set{1,2,\dots,n}$,
    \begin{align}
        \label{eq:HessianFirstHalf}
        \nabla^2 g(w)[i][j]
        &=\frac{\partial}{\partial w[i]}\left(\nabla g(w)[j]\right)\nonumber\\
        &=\frac{\partial}{\partial w[i]}\frac{\log\left(\Tr\left[A_j^{\alpha}Q_{\star}(w)^{1-\alpha}\right]\right)}{1-\alpha}\nonumber\\
        &=\frac{1}{1-\alpha}\frac{\Tr\left[A_j^{\alpha}\left(\frac{\partial}{\partial w[i]}\left(Q_{\star}(w)^{1-\alpha}\right)\right)\right]}{\Tr\left[A_j^{\alpha}Q_{\star}(w)^{1-\alpha}\right]}.
    \end{align}
    On the other hand, by taking partial derivative of both sides of the equation
    \begin{equation}
        Q_{\star}(w)^{1-\alpha}
        =\left(\sum_{s=1}^n\frac{\frac{w[s]}{\inner{w}{\mathbf{1}_n}}A_s^{\alpha}}{\Tr\left[A_s^{\alpha}Q_{\star}(w)^{1-\alpha}\right]}\right)^{(1-\alpha)/\alpha},
    \end{equation}
    which is given by Lemma \ref{eq:FixPtIsMin},
    we obtain the following identity:
    \begin{align}
        &\frac{\partial }{\partial w[i]}\left(Q_{\star}(w)^{1-\alpha}\right)\nonumber\\
        &\quad=\frac{\partial }{\partial w[i]}\left(\left(\sum_{s=1}^{n}\frac{\frac{w[s]}{\inner{w}{\mathbf{1}_n}}A_s^{\alpha}}{\Tr\left[{A_s^{\alpha}}{Q_{\star}(w)^{1-\alpha}}\right]}\right)^{(1-\alpha)/\alpha}\right)\nonumber\\
        &\quad=\mathrm{D}_U\left(U^{(1-\alpha)/\alpha}\right)\left(\frac{\partial }{\partial w[i]}\left(\sum_{s=1}^{n}\frac{\frac{w[s]}{\inner{w}{\mathbf{1}_n}}A_s^{\alpha}}{\Tr\left[{A_s^{\alpha}}{Q_{\star}(w)^{1-\alpha}}\right]}\right)\right)\nonumber\\
        &\quad=\sum_{1\leq k,l\leq d}h_{(1-\alpha)/\alpha}^{[1]}\left(\lambda_k^{\alpha},\lambda_l^{\alpha}\right)u_k u_k^{*}\left(\frac{\partial }{\partial w[i]}\left(\sum_{s=1}^{n}\frac{\frac{w[s]}{\inner{w}{\mathbf{1}_n}}A_s^{\alpha}}{\Tr\left[{A_s^{\alpha}}{Q_{\star}(w)^{1-\alpha}}\right]}\right)\right)u_l u_l^{*}\nonumber\\
        &\quad=L_1\left(\frac{\partial }{\partial w[i]}\left(\sum_{s=1}^{n}\frac{\frac{w[s]}{\inner{w}{\mathbf{1}_n}}A_s^{\alpha}}{\Tr\left[{A_s^{\alpha}}{Q_{\star}(w)^{1-\alpha}}\right]}\right)\right),\nonumber
    \end{align}
    where 
    \begin{equation}
        U
        =\sum_{s=1}^{n}\frac{\frac{w[s]}{\inner{w}{\mathbf{1}_n}}A_s^{\alpha}}{\Tr\left[{A_s^{\alpha}}{Q_{\star}(w)^{1-\alpha}}\right]}
        =Q_{\star}(w)^{\alpha},
    \end{equation}
    the second equality follows from the chain rule, and the third equality follows from Lemma \ref{lemma:PowerDiff}.
    The computation continues as follows:
    \begin{align}
        &\frac{\partial }{\partial w[i]}\left(Q_{\star}(w)^{1-\alpha}\right)\nonumber\\
        &\quad=L_1\left(\frac{\partial }{\partial w[i]}\left(\sum_{s=1}^{n}\frac{\frac{w[s]}{\inner{w}{\mathbf{1}_n}}A_s^{\alpha}}{\Tr\left[{A_s^{\alpha}}{Q_{\star}(w)^{1-\alpha}}\right]}\right)\right)\nonumber\\
        &\quad=L_1\left(\frac{\frac{1}{\inner{w}{\mathbf{1}_n}}A_i^{\alpha}}{\Tr\left[A_i^{\alpha}Q_{\star}(w)^{1-\alpha}\right]}\right)-L_1\left(\sum_{s=1}^{m}\frac{\frac{w[s]}{\inner{w}{\mathbf{1}_n}^2}A_s^{\alpha}}{\Tr\left[A_s^{\alpha}Q_{\star}(w)^{1-\alpha}\right]}\right)\nonumber\\
        &\quad\quad-L_1\left(\sum_{s=1}^{m}\frac{w[s]}{\inner{w}{\mathbf{1}_n}}\left(\frac{\Tr\left[{A_s^{\alpha}}\left({\frac{\partial }{\partial w[i]}\left(Q_{\star}(w)^{1-\alpha}\right)}\right)\right]A_s^{\alpha}}{\Tr\left[A_s^{\alpha}Q_{\star}(w)^{1-\alpha}\right]^2}\right)\right)\nonumber\\
        &\quad=L_1\left(\frac{A_i^{\alpha}}{\Tr\left[A_i^{\alpha}Q_{\star}(w)^{1-\alpha}\right]}\right)-L_1\left(Q_{\star}(w)^{\alpha}\right)-L_1\left(\sum_{s=1}^{m}w[s]\left(\frac{\Tr\left[{A_s^{\alpha}}\left({\frac{\partial }{\partial w[i]}\left(Q_{\star}(w)^{1-\alpha}\right)}\right)\right]A_s^{\alpha}}{\Tr\left[A_s^{\alpha}Q_{\star}(w)^{1-\alpha}\right]^2}\right)\right)\nonumber\\
        &\quad=L_1\left(\frac{A_i^{\alpha}}{\Tr\left[A_i^{\alpha}Q_{\star}(w)^{1-\alpha}\right]}\right)-L_1\left(Q_{\star}(w)^{\alpha}\right)-L_1\left(L_2\left(\frac{\partial }{\partial w[i]}\left(Q_{\star}(w)^{1-\alpha}\right)\right)\right),\nonumber
    \end{align}
    where the third equality follows from Lemma \ref{eq:FixPtIsMin} and the fact that $\inner{w}{\mathbf{1}_n}=1$.
    By rearrangement, we obtain
    \begin{equation}
        \frac{\partial }{\partial w[i]}\left(Q_{\star}(w)^{1-\alpha}\right)
        =\left(L_1^{-1}+L_2\right)^{-1}\left(\frac{A_i^{\alpha}}{\Tr\left[A_i^{\alpha}Q_{\star}(w)^{1-\alpha}\right]}\right)-\left(L_1^{-1}+L_2\right)^{-1}\left(Q_{\star}(w)^{\alpha}\right).
    \end{equation}
    To simplify the tail term $\left(L_1^{-1}+L_2\right)^{-1}\left(Q_{\star}(w)^{\alpha}\right)$, we use Lemma \ref{eq:FixPtIsMin} and compute
    \begin{equation}
        L_1\left(Q_{\star}(w)^{\alpha}\right)
        =\frac{1-\alpha}{\alpha}Q_{\star}(w)^{1-\alpha},
    \end{equation}
    and
    \begin{equation}
        L_2\left(Q_{\star}(w)^{1-\alpha}\right)
        =Q_{\star}(w)^{\alpha}.
    \end{equation}
    It follows that
    \begin{equation}
        \left(L_1^{-1}+L_2\right)\left(Q_{\star}(w)^{1-\alpha}\right)
        =\left(\frac{\alpha}{1-\alpha}+1\right)Q_{\star}(w)^{\alpha}.
    \end{equation}
    Thus, we obtain
    \begin{equation}
        \label{eq:HessianSecondHalf}
        \frac{\partial }{\partial w[i]}\left(Q_{\star}(w)^{1-\alpha}\right)
        =\left(L_1^{-1}+L_2\right)^{-1}\left(\frac{A_i^{\alpha}}{\Tr\left[A_i^{\alpha}Q_{\star}(w)^{1-\alpha}\right]}\right)-(1-\alpha)Q_{\star}(w)^{1-\alpha}.
    \end{equation}
    The proof then follows by combining Equation \eqref{eq:HessianFirstHalf} and Equation \eqref{eq:HessianSecondHalf}.
\end{proof}

\begin{lemma}(Lieb-Ando Concavity \citep[Corollary 1.1]{Lieb1973})
\label{lemma:LiebAndo}
For any $W\in\mathcal{B}_d$, the mapping $(U,V)\mapsto \Tr\left[WU^{r}WV^{s}\right]$ is jointly concave in $(U,V)\in\mathcal{B}_{d,+}\times\mathcal{B}_{d,+}$ for all $r,s\in\mathbb{R}_{++}$ such that $r+s\leq 1$.
\end{lemma}
\begin{lemma}
    \label{lemma:ad-hoc}
    The function \( \psi(x) = \frac{\exp(x) - \exp(-x)}{x} \) is increasing on \( \mathbb{R}_{+} \).
\end{lemma}
\begin{proof}
    We extend the domain $\psi(x)$ by defining $\psi(0)=\lim_{x\to 0}\psi(x)=2$, so that $\psi$ is differentiable on $\mathbb{R}$.
    Then, it suffices to prove that the first-order derivative 
    \begin{align}
        \psi'(x) = \frac{x \exp(x) + x \exp(-x) - \exp(x) + \exp(-x)}{x^2}
    \end{align}
    is nonnegative. 
    Define
    \[
        \phi(x) = x \exp(x) + x \exp(-x) - \exp(x) + \exp(-x).
    \]
    Then, we have \( \phi(0) = 0 \), and
    \begin{align}
        \phi'(x) &= \exp(x) + x \exp(x) + \exp(-x) - x \exp(-x) - \exp(x) - \exp(-x) \nonumber \\
        &= x (\exp(x) - \exp(-x)) \nonumber\\
        &\geq 0, \quad \forall x\geq 0. \nonumber
    \end{align}
    Thus, \( \phi(x) \geq 0 \), and hence \( \psi'(x) = \phi(x)/x^2 \geq 0 \) for all \( x \geq 0 \).
\end{proof}
\begin{lemma}
    \label{lemma:L1geqL2}
    Let $L_1,L_2$ be defined as in Lemma \ref{lemma:HessianForm}.
    Then, for any $\alpha\in\left[1/2,1\right)$, we have
    \begin{align}
        L_1^{-1}
        \geq\frac{\alpha}{1-\alpha}L_2.
    \end{align}
\end{lemma}
\begin{proof}
    Let $w,Q_{\star}(w),L_1$, and $L_2$ be defined as in Lemma \ref{lemma:HessianForm}, and  let $\alpha\in\left[1/2,1\right)$.
    Let $Q_{\star}(w)=\sum_{i=1}^d\lambda_i u_iu_i^{*}$ denote its eigendecomposition.
    To prove that $L_1^{-1}\geq{\alpha L_2}/{(1-\alpha)}$, it suffices to prove that
    \begin{align}
        \Tr\left[UL_1^{-1}(U)\right]
        \geq \frac{\alpha}{1-\alpha}\Tr\left[UL_2(U)\right],\quad\forall U\in\mathcal{B}_d.
    \end{align}
    We begin by defining an auxiliary operator $L_3\in\mathcal{S}_{d,++}$ as
    \begin{equation}
        L_3(V)
        \coloneqq Q_{\star}(w)^{(1-\alpha)/2}VQ_{\star}(w)^{(1-\alpha)/2},\quad\forall V\in\mathcal{B}_d.
    \end{equation}
    Also, for each $j$, we define
    \begin{equation}
        \Lambda_j
        \coloneqq \frac{A_j^{\alpha}}{\Tr\left[A_j^{\alpha}Q_{\star}(w)^{1-\alpha}\right]}.
    \end{equation}
    The proof then proceeds by computing an upper bound on $L_2$:
    \begin{align}
        \Tr\left[UL_2(U)\right]
        &=\sum_{j=1}^n w[j]\Tr\left[\Lambda_j U\right]^2\nonumber\\
        &=\sum_{j=1}^n w[j]\Tr\left[\left(L_3(\Lambda_j)\right)^{1/2}\left(\left(L_3(\Lambda_j)\right)^{1/4}L_3^{-1}(U)\left(L_3(\Lambda_j)\right)^{1/4}\right)\right]^2\nonumber\\
        &\leq \sum_{j=1}^n w[j]\Tr\left[L_3(\Lambda_j)\right]\Tr\left[\left(\left(L_3(\Lambda_j)\right)^{1/4}L_3^{-1}(U)\left(L_3(\Lambda_j)\right)^{1/4}\right)^2\right]\nonumber\\
        &=\sum_{j=1}^n w[j]\Tr\left[\left(\left(L_3(\Lambda_j)\right)^{1/4}L_3^{-1}(U)\left(L_3(\Lambda_j)\right)^{1/4}\right)^2\right],\quad\forall U\in\mathcal{B}_d,
    \end{align}
    where the inequality follows from the H\"{o}lder inequality (Lemma \ref{ineq:HolderEq}).
    The computation of the upper bound continues as follows:
    \begin{align}
        \Tr\left[UL_2(U)\right]
        &\leq\sum_{j=1}^n w[j]\Tr\left[\left(\left(L_3(\Lambda_j)\right)^{1/4}L_3^{-1}(U)\left(L_3(\Lambda_j)\right)^{1/4}\right)^2\right]\nonumber\\
        &=\sum_{j=1}^n w[j]\Tr\left[L_3^{-1}(U)\left(L_3(\Lambda_j)\right)^{1/2}L_3^{-1}(U)\left(L_3(\Lambda_j)\right)^{1/2}\right]\nonumber\\
        &\leq \Tr\left[L_3^{-1}(U)\left(\sum_{k=1}^n w[k]L_3(\Lambda_k)\right)^{1/2}L_3^{-1}(U)\left(\sum_{l=1}^n w[l]L_3(\Lambda_l)\right)^{1/2}\right]\nonumber\\
        &= \Tr\left[L_3^{-1}(U)\left(L_3\left(\sum_{k=1}^n w[k]\Lambda_k\right)\right)^{1/2}L_3^{-1}(U)\left(L_3\left(\sum_{l=1}^n w[l]\Lambda_l\right)\right)^{1/2}\right]\nonumber\\
        &= \Tr\left[L_3^{-1}(U)Q_{\star}(w)^{1/2}L_3^{-1}(U)Q_{\star}(w)^{1/2}\right]\nonumber\\
        &=\Tr\left[Q_{\star}(w)^{\alpha-1/2}UQ_{\star}(w)^{\alpha-1/2}U\right],\quad\forall U\in\mathcal{B}_d,
    \end{align}
    where the second inequality follows from the Lieb-Ando concavity (Lemma \ref{lemma:LiebAndo}) and the Jensen inequality, and the third equality follows from Lemma \ref{eq:FixPtIsMin}.
    To facilitate comparison with $L_1^{-1}$, we further express the upper bound as
    \begin{align}
        \label{ineq:L2Split}
        \Tr\left[UL_2(U)\right]
        &\leq\Tr\left[Q_{\star}(w)^{\alpha-1/2}UQ_{\star}(w)^{\alpha-1/2}U\right]\nonumber\\
        &=\sum_{\substack{1\leq k,l\leq d,\\\lambda_k=\lambda_l}}\lambda_k^{2\alpha-1}\left(u_k^{*}Uu_l\right)^2+\sum_{\substack{1\leq k,l\leq d,\\\lambda_k\neq\lambda_l}}\lambda_k^{\alpha-1/2}\lambda_l^{\alpha-1/2}\left(u_k^{*}Uu_l\right)^2,\quad\forall U\in\mathcal{B}_d.
    \end{align}
    On the other hand, for $L_1^{-1}$, we write
    \begin{align}
        \label{eq:L_1Split}
        \Tr\left[UL_1^{-1}(U)\right]
        &=\sum_{\substack{1\leq k,l\leq d,\\\lambda_k=\lambda_l}}\frac{\alpha}{1-\alpha}\lambda_k^{2\alpha-1}\left(u_k^{*}Uu_l\right)^2+\sum_{\substack{1\leq k,l\leq d,\\\lambda_k\neq\lambda_l}}\frac{\lambda_k^{\alpha}-\lambda_l^{\alpha}}{\lambda_k^{1-\alpha}-\lambda_l^{1-\alpha}}\left(u_k^{*}Uu_l\right)^2.
    \end{align}
    Based on \eqref{ineq:L2Split} and \eqref{eq:L_1Split}, it remains to prove that
    \begin{align}
        \frac{\alpha}{1-\alpha}\lambda_k^{\alpha-1/2}\lambda_l^{\alpha-1/2}
        \leq\frac{\lambda_k^{\alpha}-\lambda_l^{\alpha}}{\lambda_k^{1-\alpha}-\lambda_l^{1-\alpha}},\quad\forall \lambda_k\neq\lambda_l,
    \end{align}
    which is equivalent to proving that
    \begin{align}
        \label{ineq:ToBeShowByAdHoc}
        \frac{\lambda_k^{1/2}\lambda_l^{\alpha-1/2}-\lambda_k^{\alpha-1/2}\lambda_l^{1/2}}{1-\alpha}
        \leq\frac{\lambda_k^{\alpha}-\lambda_l^{\alpha}}{\alpha},\quad\forall \lambda_k>\lambda_l.
    \end{align}
    To prove \eqref{ineq:ToBeShowByAdHoc}, we define $r=\lambda_k/\lambda_l>1$, and rewrite the inequality as
    \begin{align}
        \frac{r^{(1-\alpha)/2}-r^{(\alpha-1)/2}}{1-\alpha}\leq\frac{r^{\alpha/2}-r^{-\alpha/2}}{\alpha},\quad\forall r>1,
    \end{align}
    which is in turn equivalent to the following inequality:
    \begin{align}
        \frac{r^{(1-\alpha)/2}-r^{(\alpha-1)/2}}{(1-\alpha)\log(r)/2}\leq\frac{r^{\alpha/2}-r^{-\alpha/2}}{\alpha\log(r)/2},\quad\forall r>1,
    \end{align}
    whose validity follows from Lemma \ref{lemma:ad-hoc} and the fact that $\alpha\in\left[1/2,1\right)$.
    This concludes the proof.
\end{proof}

\begin{proof}(Lemma \ref{lemma:HessianBd})
    Let $\alpha\in\left[1/2,1\right)$ and $w\in\mathrm{relint}\left(\Delta_n\right)$.
    Let $L_1,L_2$, and $Q_{\star}(w)=\sum_{k=1}^d\lambda_ku_ku_k^{*}$ be defined as in Lemma \ref{lemma:HessianForm}.
    Also, for each $j$, define
    \begin{equation}
        \Lambda_j
        \coloneqq \frac{A_j^{\alpha}}{\Tr\left[A_j^{\alpha}Q_{\star}(w)^{1-\alpha}\right]}.
    \end{equation}
    We aim to prove Lemma \ref{lemma:HessianBd} by showing that $\mathrm{Diag}\left(\sqrt{w}\right)\nabla^2 g(w)\mathrm{Diag}\left(\sqrt{w}\right)\leq I_n$.
    For any $v\in\mathbb{R}^n$, by Lemma \ref{lemma:HessianForm},
    we have
    \begin{align}
        \label{ineq:SqrtwHessSqrtw}
        &\inner{\mathrm{Diag}\left(\sqrt{w}\right)\nabla^2 g(w)\mathrm{Diag}\left(\sqrt{w}\right)v}{v}\nonumber\\
        &\quad=\Tr\left[\left(\sum_{j=1}^n\sqrt{w[j]}v[j]\Lambda_j\right)\left((1-\alpha)L_1^{-1}+(1-\alpha)L_2\right)^{-1}\left(\sum_{i=1}^n\sqrt{w[i]}v[i]\Lambda_i\right)\right]-\inner{v}{\sqrt{w}}^2\nonumber\\
        &\quad\leq\Tr\left[\left(\sum_{j=1}^n\sqrt{w[j]}v[j]\Lambda_j\right)\left((1-\alpha)L_1^{-1}+(1-\alpha)L_2\right)^{-1}\left(\sum_{i=1}^n\sqrt{w[i]}v[i]\Lambda_i\right)\right].
    \end{align}
    On the other hand, we define
    \begin{equation}
        L_4
        \coloneqq (1-\alpha)L_1^{-1}-\alpha L_2,
    \end{equation}
    which is positive semi-definite by Lemma \ref{lemma:L1geqL2}.
    To simplify \eqref{ineq:SqrtwHessSqrtw}, we define an auxiliary matrix $G\in\mathcal{B}_{d,+}$ by
    \begin{equation}
        G[i][j]
        \coloneqq \Tr\left[\left(\sqrt{w[j]}\Lambda_j\right)\left(L_4+L_2\right)^{-1}\left(\sqrt{w[i]}\Lambda_i\right)\right],\quad\forall i,j,
    \end{equation}
    where $(L_4+L_2)^{-1}$ is well-defined since $L_4+L_2\geq L_1>0$.
    Then, it follows that
    \begin{align}
        &\inner{\mathrm{Diag}\left(\sqrt{w}\right)\nabla^2 g(w)\mathrm{Diag}\left(\sqrt{w}\right)v}{v}\nonumber\\
        &\quad\leq\Tr\left[\left(\sum_{j=1}^n\sqrt{w[j]}v[j]\Lambda_j\right)\left((1-\alpha)L_1^{-1}+(1-\alpha)L_2\right)^{-1}\left(\sum_{i=1}^n\sqrt{w[i]}v[i]\Lambda_i\right)\right]\nonumber\\
        &\quad=\Tr\left[\left(\sum_{j=1}^n\sqrt{w[j]}v[j]\Lambda_j\right)\left(L_4+L_2\right)^{-1}\left(\sum_{i=1}^n\sqrt{w[i]}v[i]\Lambda_i\right)\right]\nonumber\\
        &\quad=\inner{Gv}{v}.\nonumber
    \end{align}
    Therefore, to prove that $\mathrm{Diag}\left(\sqrt{w}\right)\nabla^2 g(w)\mathrm{Diag}\left(\sqrt{w}\right)\leq I_n$, it suffices to show that $G\leq I_n$, which we will prove by establishing that $G^2\leq G$.
    For each $i,j$, we compute
    \begin{align}
        G^2[i][j]
        &=\sum_{k=1}^n G[i][k]G[k][j]\nonumber\\
        &=\sum_{k=1}^n\Tr\left[\left(\sqrt{w[k]}\Lambda_k\right)\left(L_4+L_2\right)^{-1}\left(\sqrt{w[i]}\Lambda_i\right)\right]\Tr\left[\left(\sqrt{w[k]}\Lambda_k\right)\left(L_4+L_2\right)^{-1}\left(\sqrt{w[j]}\Lambda_j\right)\right]\nonumber\\
        &=\sum_{k=1}^nw[k]\Tr\left[\Lambda_k\left(L_4+L_2\right)^{-1}\left(\sqrt{w[i]}\Lambda_i\right)\right]\Tr\left[\Lambda_k\left(L_4+L_2\right)^{-1}\left(\sqrt{w[j]}\Lambda_j\right)\right]\nonumber\\
        &=\Tr\left[\left(L_4+L_2\right)^{-1}\left(\sqrt{w[i]}\Lambda_i\right) L_2\left(\left(L_4+L_2\right)^{-1}\left(\sqrt{w[j]}\Lambda_j\right)\right)\right]\nonumber\\
        &=\Tr\left[\left(\sqrt{w[i]}\Lambda_i\right) \left(L_4+L_2\right)^{-1}\left(\sqrt{w[j]}\Lambda_j\right)\right]\nonumber\\
        &\quad-\Tr\left[\left(L_4+L_2\right)^{-1}\left(\sqrt{w[i]}\Lambda_i\right) L_4\left(\left(L_4+L_2\right)^{-1}\left(\sqrt{w[j]}\Lambda_j\right)\right)\right]\nonumber\\
        &=G[i][j]-\Tr\left[\left(L_4+L_2\right)^{-1}\left(\sqrt{w[i]}\Lambda_i\right) L_4\left(\left(L_4+L_2\right)^{-1}\left(\sqrt{w[j]}\Lambda_j\right)\right)\right].\nonumber
    \end{align}
    It follows that
    \begin{align}
        \inner{G^2 v}{v}
        &= \inner{Gv}{v}-\Tr\left[\left(\sum_{i=1}^n\left(L_4+L_2\right)^{-1}\left(\sqrt{w[i]}\Lambda_i\right) \right)L_4\left(\sum_{j=1}^n\left(L_4+L_2\right)^{-1}\left(\sqrt{w[j]}\Lambda_j\right) \right)\right]\nonumber\\
        &\leq \inner{Gv}{v},\quad\forall v\in\mathbb{R}^n.
    \end{align}
    This implies that $G^2\leq G$.
    Moreover, since $\nabla^2 g(w)\geq0$ and $\mathrm{Diag}\left(\sqrt{w}\right)\nabla^2 g(w)\mathrm{Diag}\left(\sqrt{w}\right)\leq G$, we conclude that $0\leq G^2\leq G\leq I$.
    Thus, we have
    \begin{align}
        \mathrm{Diag}\left(\sqrt{w}\right)\nabla^2 g(w)\mathrm{Diag}\left(\sqrt{w}\right)\leq I_n.
    \end{align}
    The proof then follows from
    \begin{align}
        \nabla^2 g(w)
        \leq \mathrm{Diag}\left(\sqrt{w}\right)^{-1}I_n\mathrm{Diag}\left(\sqrt{w}\right)^{-1}
        =\mathrm{Diag}\left(w\right)^{-1}.
    \end{align}
\end{proof}

\section{Equivalence between the Augustin Iteration \citep{Augustin1978} and the Algorithm of \citet{Cheung2018}}
\label{appendix:EqAugTatonnement}
This section is devoted to showing the equivalence between the Augustin iteration \citep{Augustin1978} and the algorithm proposed by \citet{Cheung2018} when applied to computing the equilibrium prices in Fisher markets introduced in Section \ref{subsec:FisherDef}, assuming all $\rho_j = \rho$.

The algorithm proposed by \citet{Cheung2018} iterates as
\begin{equation}
    p_{t+1}
    =p_t\odot x(p_t),
\end{equation}
where $p_t\in\mathbb{R}_{++}^d$ is the $t^{\text{th}}$ iterate, and
\begin{equation}
    \label{eq:CheungIter}
    x(p)
    \coloneqq \sum_{j=1}^n w[j]\frac{a_j^{1/(1-\rho)}\odot p^{-1/(1-\rho)}}{\inner{a_j^{1/(1-\rho_j)}}{p^{-\rho/(1-\rho)}}}
\end{equation}
is the total demand vector as defined in \eqref{def:FisherTotalDemand}.
On the other hand, the Augustin iteration for solving the optimization problem~\eqref{def:ClassicalAugMean} iterates as
\begin{equation}
    q_{t+1}
    =q_t\odot\left(-\nabla f(q_t)\right)
    =q_t\odot\left(\sum_{j=1}^{n}w[j]\frac{a_j^{\alpha}\odot q_t^{1-\alpha}}{\inner{a_j^{\alpha}}{q_t^{1-\alpha}}}\right),
\end{equation}
which exactly recovers the iteration rule \eqref{eq:CheungIter} by setting $q_t = p_t$, $\alpha = 1/(1-\rho)$, and letting all $a_j$ be the same as in the Fisher market setting.

\section{Comparison with the Error Measure of \citet{Cole2008}}
This section is devoted to showing, via Lemma \ref{lemma:MetricComparable}, that the error measure adopted by \citet{Cole2008} is comparable to the Thompson metric.

We use Lemma \ref{lemma:Nesterov-adhoc} to prove Lemma \ref{lemma:MetricComparable}.
\begin{lemma}(\citep[Lemma 5.1.5]{Nesterov2018a})
    \label{lemma:Nesterov-adhoc}
    For any $r>1$, we have
    \begin{align}
        \frac{2(r-1)}{r+1}
        \leq \log(r)
        \leq \frac{(r-1)(r+1)}{2r}.\nonumber
    \end{align}
    On the other hand, for any $r\in(0,1)$, we have
    \begin{align}
        \frac{2(1-r)}{r+1}
        \leq \log\left(\frac{1}{r}\right)
        \leq \frac{(1-r)(1+r)}{2r}.\nonumber
    \end{align}
\end{lemma}
\begin{lemma}
    \label{lemma:MetricComparable}
    For any $u,v\in\mathbb{R}_{++}^d$ such that $d_{\mathrm{T}}(v,u)<\log(3)$, we have
    \begin{equation}
        \frac{1}{3}\max_{i}\abs{1-\frac{u[i]}{v[i]}}
        \leq d_{\mathrm{T}}(v,u)
        \leq 3\max_{i}\abs{1-\frac{u[i]}{v[i]}}.
    \end{equation}
\end{lemma}
\begin{proof}
    Let $u,v$ be as defined in Lemma \ref{lemma:MetricComparable}.
    By Definition \ref{eq:Thompson}, we have
    \begin{equation}
        d_{\mathrm{T}}(v,u)
        =\max_{i}\abs{\log\left(\frac{u[i]}{v[i]}\right)}.
    \end{equation}
    Therefore, it suffices to prove that
    \begin{align}
        \frac{1}{3}(r-1)
        \leq \log(r)
        \leq 3(r-1),\quad\forall r\in(1,3),
    \end{align}
    and
    \begin{align}
        \frac{1}{3}(1-r)
        \leq \log\left(\frac{1}{r}\right)
        \leq 3(1-r),\quad\forall r\in(1/3,1),
    \end{align}
    which follow directly from Lemma \ref{lemma:Nesterov-adhoc}.
\end{proof}

\end{document}